\numberwithin{equation}{section}
\theoremstyle{plain}
\newtheorem{prop}{Proposition}[section]
\newtheorem{lemma}[prop]{Lemma}
\newtheorem{theorem}[prop]{Theorem}
\newtheorem{coro}[prop]{Corollary}
\newtheorem{property}[prop]{Property}
\theoremstyle{definition}
\newtheorem{defi}[prop]{Definition}
\newcommand{\N}{\mathbb{N}} 
\newcommand{\C}{\mathbb{C}} 
\newcommand{\Z}{\mathbb{Z}} 
\newcommand{\R}{\mathbb{R}} 
\newcommand{\tr}{\mathrm{tr}}
\newcommand{\iu}{{\rm i}}
\newcommand{\e}{{\rm e}}
\renewcommand{\d}{{\rm d}}
\def\ec{{\mathbb E}}
\def\pc{{\mathbb P}}
\newcommand{\dist}{\mathrm{dist}}
\newcommand{\supp}{\mathrm{supp}}
\newcommand{\eb}{\overline{\mathbb{E}}}
\newcommand{\Hilb}{\mathcal{H}}
\newcommand{\Dom}{\mathrm{Dom}}
\renewcommand{\Re}{\mathrm{Re}}
\renewcommand{\Im}{\mathrm{Im}}
\renewcommand{\leq}{\leqslant}
 \renewcommand{\geq}{\geqslant} 
\title{Localization for one-dimensional Anderson-Dirac models}
\author[S. Zalczer]{Sylvain Zalczer}
\address{BCAM – Basque Center for Applied Mathematics, Mazarredo 14, E48009
Bilbao, Basque Country – Spain.}
\email{szalczer@bcamath.org}
\subjclass[2010]{Primary 34L40;Secondary 81Q10, 37H15}
\keywords{Dirac operator, Anderson localization, Lyapunov exponent}
\begin{document}

\begin{abstract}
We prove spectral and dynamical localization for a one-dimensional Dirac operator to which is added an ergodic random potential, with a discussion on the different types of potential. We use scattering properties to prove the positivity of the Lyapunov exponent through Fürstenberg theorem. We get then the Hölder regularity of the integrated density of states through a new version of Thouless formula, and thus the Wegner estimate necessary for the multiscale analysis.
\end{abstract}
\maketitle

\section{Introduction}

The goal of this paper is to get localization for random Dirac operators in dimension 1, i.e. operators of the form
\begin{equation*}
 H_\omega=D_0+V_{per}+V_\omega\text{ on }L^2(\R,\C^2).
\end{equation*}
In this equation, $D_0=J\frac{\d}{\d x}$, with $J=\begin{pmatrix}0&-1\\1&0\end{pmatrix}$, is the free Dirac operator.
The potential $V_{per}$ is supposed to be 1-periodic and the random potential $V_\omega$ is an Anderson ergodic potential:
\[V_\omega(x)=\sum_{n\in\Z}\lambda_n(\omega)u(x-n)\]
with random coupling constants $\lambda_n(\omega)$ and an elementary potential $u$ whose characteristics will be precised. As it is well-known, the Dirac operator has been defined to study relativistic matter~\cite{thaller}. On the other hand, the 2-dimensional Dirac operator can be used to model graphene~\cite{CGPNGG}. It could thus be  relevant to use, at least as a first step, a 1-dimensional Dirac operator to model a graphene nanoribbon.


Since the first works of Anderson~\cite{Anderson}, random operators have been given a lot of interest. One-dimensional models, in particular, have been extensively studied~\cite{GMP77,CKM,DSS}. Indeed,  specific methods -- involving Lyapunov exponents -- have been developed for this setting. It has thus been possible to prove localization on the whole spectrum for these models, while in higher dimension results are known only for energies near band edges.

Most of these works are about random \emph{Schrödinger} operators, i.e. with a kinetic energy given by a (discrete or continuous) Laplacian. Only few papers have addressed the question of localization for random \emph{Dirac} operators. In~\cite{PdO05}, Prado and de Oliveira consider a discrete Dirac operator with a random potential taking only 
 2 values. They prove, using a multiscale analysis, dynamical localization, on the whole spectrum for positive mass and outside a discrete critical set of energies for zero mass. In \cite{PdO07}, they give a deeper analysis of delocalization with the same kind of randomness, including the case of operators on the continuum. In~\cite{PdOC}, together with Carvalho, they prove -- among others -- dynamical localization on the whole spectrum for a discrete model, in the case where the random potential has an absolutely continuous distribution. They use here a different method, called fractional moments method. In~\cite{BCZ}, the author, together with Barbaroux and Cornean, proved localization near band edges in any dimension, in the case of absolutely continuous random variables. This paper has been followed by~\cite{Zdos}, where Lipschitz continuity of the integrated density of states is proven.
 
 The organization of the paper is the following. In section~\ref{sec:2}, we present in detail the model and the results. In section~\ref{sec:3}, we prove the positivity of the Lyapunov exponent of the group of transfer matrices, using properties of local uniqueness of Weyl-Titchmarsh functions. In section~\ref{sec:4}, we get Hölder regularity for this exponent, using Grönwall-type estimates, and, in section~\ref{sec:5}, we obtain the same result for the integrated density of states. This last point needs to prove a new Thouless formula, adapted to Dirac operators. Finally, in section~\ref{sec:6}, we check all the hypotheses of the bootstrap multiscale analysis proposed by Germinet and Klein in~\cite{GKbootstrap}, which concludes the proof of our main result.  In appendix~\ref{app:A}, we explain in detail the uniqueness properties for Weyl-Titchmarsh functions and, in appendix~\ref{app:B}, we prove that the density of states can be defined using Dirichlet boundary conditions.

 \subsection*{Aknowledgements}
 The author is grateful to H. Boumaza for giving him a short introduction to Lyapunov exponent methods and for a reading.
 This research is supported by the Basque Government
through the BERC 2018-2021 program and by the Ministry of
Science, Innovation and Universities: BCAM Severo Ochoa
accreditation SEV-2017-0718 and the mineco project PID2020-112948GB-I00.
\section{Model and results}\label{sec:2}
As mentioned in the introduction, the operator studied in this paper will be
\begin{equation}
 H_\omega=D_0+V_{per}+V_\omega\text{ on }L^2(\R,\C^2).
\end{equation}
We recall that the free Dirac operator is $D_0=J\frac{\d}{\d x}$, with $J=\begin{pmatrix}0&-1\\1&0\end{pmatrix}$.
We assume that $V_{per}$ is a 1-periodic bounded function  with values in the space of 2-by-2 real symmetric matrices.

In the context of Dirac operators, it is relevant to decompose the potentials according to the Pauli matrices:
\[\sigma_1=\begin{pmatrix}0&1\\1&0\end{pmatrix},\ \sigma_2=\begin{pmatrix}0&-\iu\\\iu&0\end{pmatrix},\ \sigma_3=\begin{pmatrix}1&0\\0&-1\end{pmatrix}.\] We write 
\[V_{per}=v_{per}^{am}\sigma_1+v_{per}^{sc}\sigma_3+v_{per}^{el}I_2,\]
where $v_{per}^{am}$ is called \emph{anomalous magnetic moment},  $v_{per}^{sc} $ is called  \emph{scalar potential} and $v_{per}^{el}$ is called \emph{electrostatic potential}. Each of these functions is real-valued, measurable and bounded.
We do not include a  potential  multiple of $\sigma_2$, which would be called \emph{magnetic moment}, since it can easily be proven that, if we denote by $D=D_0+V$ a Dirac operator with a matrix-valued potential, $D+v_{mg}(x)\sigma_2$ is unitarily equivalent to $D$ (cf.~\cite{BEKT}).

The random potential $V_\omega$ is written as:
\[V_\omega(x)=\sum_{n\in\Z}\lambda_n(\omega)u(x-n).\]
The elementary potential $u$ is supposed to have support in $[-1/2,1/2]$. It can be decomposed just as the periodic potential. Nevertheless, it is not possible to prove localization for all configurations. In particular, for any potential $V_{el}\in L^1_{loc}(\R)$, the operator $D_0+V_{el}I_2$ is unitarily equivalent to $D_0$ (this is the well-known Klein's paradox). We will consider the two following cases:
\begin{enumerate}
 \item $\label{potelem}u(x)=q_{am}(x)\sigma_1+q_{sc}(x)\sigma_3,$
 \item $u(x)=q_{el}(x)I_2$ and $\supp~q_{el}\cap \supp~[ (v_{per}^{am})^2+(v_{per}^{sc})^2]$ has positive measure.
\end{enumerate}
In both cases, the functions  $q_{am}$, $q_{sc}$ and $q_{el}$ are measurable, bounded and  real-valued. The way we deal with these different potentials is explained in Appendix~\ref{app:A}.
 
 The coupling contants $\lambda_n$ are independent, identically distributed real random variables on a complete probability space $\Omega$. We assume that their common distribution is bounded and non-trivial, in the sense that their support contains at least two points.  
 Note that, under these conditions, the operator has real coefficients.
With standard methods, we see that $H_\omega$ is self-adjoint, with domain $H^1(\R,\C^2)$.

Under these conditions, the family $(H_\omega)_{\omega\in\Omega}$ is an ergodic random operator. As it is known from Pastur's theorem (see for example~\cite[Theorem~4.3]{kirschbook}), the spectrum of $H_\omega$ is almost surely constant:
 there exists $\Sigma\subset \R$ such that $\sigma(H_\omega)=\Sigma$ with probability 1.
 Similarly, the pure point, singular continuous and absolutely continuous parts of the spectrum are almost surely constant.

We want to prove Anderson localization for this operator.
Our first result will be the following.
\begin{theorem}[Exponential localization]\label{thm:loc-spec}
 With probability 1, the operator $H_\omega$ has pure point spectrum, in the sense that it has no continuous spectrum. Moreover, all its eigenfunctions are  exponentially decaying at $\pm\infty$.
\end{theorem}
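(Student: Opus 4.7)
The plan is the classical one-dimensional route, adapted to the Dirac setting, whose pieces are developed in Sections~\ref{sec:3}--\ref{sec:6} and then fed into the bootstrap multiscale analysis of Germinet--Klein~\cite{GKbootstrap}. First I reformulate the eigenvalue equation $H_\omega\psi = E\psi$ as a first-order linear ODE on $\R$ with real coefficients, whose evolution over one unit cell defines a transfer matrix $A_n(E,\omega)\in\mathrm{SL}(2,\R)$. Since the $\lambda_n$ are i.i.d., Fürstenberg--Kesten yields a deterministic Lyapunov exponent $\gamma(E)$.

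The central step, carried out in Section~\ref{sec:3}, is to prove that $\gamma(E)>0$ for every $E\in\R$. The tool is Fürstenberg's theorem: it suffices to show that the closed subgroup $G_E\subset\mathrm{SL}(2,\R)$ generated by $\supp(A_n(E,\cdot))$ is non-compact and strongly irreducible. I expect strong irreducibility to be the main obstacle. In the Dirac setting certain potentials are unitarily trivial -- the Klein paradox removes $I_2$-terms, the $\sigma_2$-components can be gauged away -- and the matrix-valued nature of the potential prevents a direct import of Schrödinger arguments; this is precisely why the model has to split into the two cases~(1)--(2) on the elementary potential $u$. The plan is to rule out any finite common invariant set of directions in $\mathbb{RP}^1$ by translating its existence into an algebraic relation between Weyl--Titchmarsh $m$-functions at the endpoints of a cell for two distinct realizations of $\lambda_n$, and then to invoke the local uniqueness properties developed in Appendix~\ref{app:A}, which are rigid enough to contradict the non-triviality of the common distribution.

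Given $\gamma(E)>0$, Section~\ref{sec:4} upgrades this to Hölder continuity of $E\mapsto\gamma(E)$ by a Le Page / Grönwall argument exploiting the smooth dependence of $A_n(E,\omega)$ on $E$, and Section~\ref{sec:5} transfers this regularity to the integrated density of states $N$ via a new Thouless formula adapted to Dirac operators -- a genuinely new computation, since the standard Schrödinger Thouless identity does not apply verbatim to a first-order system. Hölder continuity of $N$ then yields a Wegner estimate of the form $\ec[\tr\chi_I(H_\omega^L)]\le C|I|^\alpha L$ for finite-volume restrictions.

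Finally, Section~\ref{sec:6} verifies the two inputs of the bootstrap multiscale analysis of~\cite{GKbootstrap}: the Wegner estimate just established, and an initial length scale estimate, which follows from $\gamma>0$ combined with a standard large-deviation bound on the norms of transfer matrix products and the ODE link between these norms and finite-volume resolvents. The bootstrap output yields semi-uniformly localized eigenfunctions, which in particular gives the pure point spectrum and exponentially decaying eigenfunctions claimed in Theorem~\ref{thm:loc-spec}.
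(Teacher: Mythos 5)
Your strategy is essentially the paper's own: positivity of the Lyapunov exponent via Fürstenberg with Weyl--Titchmarsh uniqueness as the key input, Hölder continuity of $\gamma$ by Grönwall-type bounds and Le Page, a Dirac-adapted Thouless formula transferring regularity to the integrated density of states, a Wegner estimate from that regularity, and finally the Germinet--Klein bootstrap multiscale analysis. One small correction: the paper does \emph{not} prove $\gamma(E)>0$ for \emph{every} $E\in\R$, but only outside a discrete exceptional set $M$ (coming from zeros of the reflection coefficient $b(E)$, values where $D(E)=0$, and band edges); the pure-point conclusion of Theorem~\ref{thm:loc-spec} still holds because a discrete set cannot support continuous spectrum, which is exactly why $M$ appears in Theorem~\ref{thm:loc-dyna} but not here. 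Also, the paper's route to strong irreducibility and non-compactness runs through the scattering coefficients $a$ and $b$ of the one-cell perturbation (showing $b\not\equiv 0$ on a stability interval via $m$-function uniqueness, then doing explicit $SL_2(\R)$ computations), rather than directly translating an invariant finite set of directions into an $m$-function identity as you sketch -- the Weyl--Titchmarsh input is the same, but the intermediate object is the reflection coefficient.
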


We will be able to prove a stronger property, called \emph{dynamical localization}.
\begin{theorem}[Dynamical localization]\label{thm:loc-dyna}
There exists a discret set $M\subset\R$ such that, for every compact interval $I\subset\R\backslash M$, every compact set $K\subset \R$, and every $p>0$,
\begin{equation}
 \ec\left(\sup_{t>0}\||X|^p\e^{-\iu t H_\omega}P_I(H_\omega)\chi_K\|\right)<+\infty,
\end{equation}
where $P_I$ is the spectral projection on $I$.
\end{theorem}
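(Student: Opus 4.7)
The overall strategy I would follow is to cast the problem in the framework of the Germinet--Klein bootstrap multiscale analysis of~\cite{GKbootstrap}, which, once its standard hypotheses are verified, delivers exactly the strong form of dynamical localization stated in Theorem~\ref{thm:loc-dyna} (and subsumes Theorem~\ref{thm:loc-spec}). The two inputs to check are: (i) a \emph{Wegner estimate}, i.e.\ a polynomial bound on the probability that a finite-volume restriction of $H_\omega$ has spectrum within distance $\varepsilon$ of a given energy $E \in \R\setminus M$; and (ii) an \emph{initial length scale estimate}, i.e.\ exponential decay of a finite-volume resolvent on some box of size $L_0$ with probability at least $1-L_0^{-p}$. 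Since the random variables $\lambda_n$ may be only bounded (in particular discrete), (i) cannot come from the usual spectral averaging and must be obtained from regularity of the integrated density of states.

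The first step is to prove positivity of the Lyapunov exponent $\gamma(E)$ of the cocycle of transfer matrices for the first-order system $H_\omega\psi=E\psi$, for every $E$ outside the exceptional discrete set $M$ appearing in the statement. The route is F\"urstenberg's theorem: I would verify non-compactness and strong irreducibility of the closed subgroup generated by the support of the transfer-matrix distribution. Non-compactness is routine; strong irreducibility is the subtle point and is where the scattering-theoretic properties of Weyl--Titchmarsh functions announced in Appendix~\ref{app:A} are needed, in order to rule out the existence of a finite invariant family of projective directions. This is also the step where the two admissible classes of elementary potential (scalar/anomalous vs.\ electrostatic with an overlap condition) play distinct roles: in the purely electrostatic case Klein's paradox prevents irreducibility without the support assumption on $v_{per}^{am}$ and $v_{per}^{sc}$.

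Once $\gamma>0$ on $\R\setminus M$, I would establish H\"older continuity of $E \mapsto \gamma(E)$ on compact sub-intervals of $\R\setminus M$ via a Le~Page--type argument based on Gr\"onwall estimates for norms of products of transfer matrices (this is Section~\ref{sec:4}). Next, in Section~\ref{sec:5}, a Thouless-type formula adapted to the Dirac setting has to be proved; it relates $\gamma$ to the integrated density of states $N$ as a Hilbert transform, and so transfers H\"older regularity from $\gamma$ to $N$. A H\"older IDS gives the Wegner estimate in the standard way. The initial length scale estimate follows from $\gamma>0$ via transfer-matrix exponential decay of polynomially bounded generalized eigenfunctions and a Combes--Thomas-type conversion into resolvent decay. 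Plugging both estimates into~\cite{GKbootstrap} yields Theorem~\ref{thm:loc-dyna}.

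The main obstacle, as I see it, is twofold. First, the Thouless formula itself: unlike the Schr\"odinger case, $H_\omega$ is a first-order system, so the usual derivation via Sturm oscillation and subharmonicity of the Weyl $m$-function must be rebuilt; the definition of $N$ via Dirichlet conditions (justified in Appendix~\ref{app:B}) is not automatic here because Dirichlet boundary conditions for a Dirac operator select only one component of the spinor. Second, pinning down that F\"urstenberg's criterion fails only on a \emph{discrete} set $M$, rather than a merely negligible one, requires a careful meromorphy/uniqueness argument for Weyl--Titchmarsh functions; without this, Theorem~\ref{thm:loc-dyna} would only hold away from an a priori much larger exceptional set and would not suffice for the bootstrap conclusion on any compact $I \subset \R\setminus M$.
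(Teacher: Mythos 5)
Your proposal reproduces the paper's strategy essentially step for step: Fürstenberg via Weyl--Titchmarsh local uniqueness for $\gamma>0$, Grönwall estimates (in the form of Boumaza's Le~Page-type theorem) for Hölder continuity of $\gamma$, a Dirac-adapted Thouless formula (proved in the paper via the Sadel--Schulz-Baldes Kotani $w$-function) to transfer that regularity to $N$, Wegner from Hölder IDS, and the Germinet--Klein bootstrap. The only small deviation is the initial length-scale estimate: you invoke a Combes--Thomas-type conversion, whereas the paper bounds the finite-volume Green's function directly from the Wronskian representation $G_\Lambda(E,x,y)=W(u_+,u_-)^{-1}\,u_\pm(x)\otimes u_\mp(y)$ together with large-deviation transfer-matrix estimates and Schur's test -- same output, marginally different mechanism.
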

The existence of a set $M$ of critical energies for which it exists (dynamically) delocalized states is a well-known fact in one-dimensional disordered systems (cf. \cite{DSS} and references therein). In the case of Dirac operators, quasi-ballistic behaviour at critical energies has been studied in~\cite{PdO07}.
The critical set does not appear in Theorem~\ref{thm:loc-spec} since it is discrete, and a discrete set cannot support continuous spectrum.

We will follow the classic methods of proof of localization for one-dimensional systems, as they are  used for example in~\cite{DSS} for Schrödinger operators. The fundamental fact is that  a non-zero potential induces a nontrivial scattering. Henceforth, we will get the hypotheses of  Fürstenberg's theorem which will give us the positivity of the Lyapunov exponent of the group of transfer matrices, which is called  Fürstenberg's group. With the help of Grönwall-type estimates, we will be able to get the Hölder continuity of this exponent.

The next step is to get the same regularity on the integrated density of states. In \cite{Zdos}, we proved that it was possible the define the density of states measure for random Dirac operators by, given a bounded measurable function $\phi$ on $\R$,
\begin{equation}\label{def-dos}\nu(\phi)=\lim_{L\to\infty}\frac{1}{L}\tr(\chi_L\phi(H_\omega)\chi_L),\end{equation}where $\chi_L$ is the characteristic function of $[-L/2,L/2]$ (the result exists in any dimension, but we give here only the one-dimensional version). 
By extension, given a Borel set $B$, we define the density of states measure by $\nu(B)=\nu(\chi_B)$, where $\chi_B$ is the characteristic function of $B$.
Since Dirac operators are not bounded from below, we cannot define the integrated density of states in the usual way as $N(E)=\nu((-\infty,E])$. Nevertheless, we prove a result of regularity.
\begin{theorem}\label{thm:reguDOS}
Let $M$ be the discrete set already mentioned in Theorem~\ref{thm:loc-dyna} and fix a compact interval $I\subset\R\backslash M$.
 There exist $\alpha>0$ and $C<\infty$ such that, for all $E<E'\in I$,
 \begin{equation}
  \nu([E,E'])\leq C(E'-E)^\alpha.
 \end{equation}
\end{theorem}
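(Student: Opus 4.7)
The strategy is to deduce the Hölder regularity of $\nu$ from that of the Lyapunov exponent $\gamma$, established in Section~\ref{sec:4}, via a Thouless-type formula adapted to the Dirac operator. This is the standard Craig--Simon paradigm, suitably modified for the present first-order system.

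The first step is to establish an identity of the form
\begin{equation*}
\gamma(E) = \int_\R \log|E - E'| \, d\nu(E') + C,
\end{equation*}
possibly regularized by subtracting the same integral at a reference energy $E_0$ so that the log-potential converges. I would follow the classical route: compute, on the one hand, $\gamma(E) = \lim_{L \to \infty} L^{-1} \ec \log \|T_L(E)\|$ via Fürstenberg--Kesten, and, on the other, express $L^{-1} \log|\det(H_{\omega, L}^D - E)|$ in terms of a suitable entry of the transfer matrix $T_L(E)$, where $H_{\omega, L}^D$ denotes the Dirichlet restriction of $H_\omega$ to $[0, L]$. Because the Dirac eigenvalue equation is of first order rather than second order, this matching between an entry of $T_L(E)$ and the Dirichlet characteristic polynomial is not the Schrödinger one and must be reworked --- this is precisely the new Thouless formula announced in the introduction. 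Passing to the limit in $L$ via Birkhoff's ergodic theorem, and identifying the limit of the normalized log-determinant with the integral against $\nu$ through the Dirichlet description of the DOS proven in Appendix~\ref{app:B}, yields the formula.

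The second step is the transfer of Hölder regularity from $\gamma$ to $\nu$. The logarithmic potential
\begin{equation*}
\Gamma(z) := \int_\R \log|z - E'| \, d\nu(E')
\end{equation*}
(regularized as above if needed) is subharmonic on $\C$ and coincides with $\gamma - C$ on $\R$. The elementary identity
\begin{equation*}
\Gamma(E + i\eta) - \Gamma(E) = \tfrac{1}{2} \int_\R \log\!\left(1 + \tfrac{\eta^2}{(E - E')^2}\right) d\nu(E') \; \ge \; \tfrac{\log 2}{2} \, \nu([E - \eta, E + \eta]),
\end{equation*}
combined with the Poisson-kernel control of $\Gamma(\cdot + i\eta) - \Gamma(\cdot)$ by $C \eta^\alpha$ --- itself a routine consequence of the Hölder continuity of $\gamma$ on $I$ --- yields, upon taking $\eta = (E' - E)/2$, the desired bound $\nu([E, E']) \le C (E' - E)^\alpha$.

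I expect the Thouless formula (step one) to be the main obstacle, since the author flags it as new in the Dirac context; the complex-analytic argument of step two is a standard Craig--Simon estimate once the formula is in place.
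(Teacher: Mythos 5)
Your overall roadmap (Thouless formula, then transfer Hölder regularity from $\gamma$ to $\nu$) is the right one, but the regularization you propose for the log-potential is precisely where the Dirac case breaks the classical scheme, and your fix would not work. Since $N_0(E)=E/\pi$, the density-of-states measure $\nu$ has infinite mass at $\pm\infty$ with $d\nu(t)\sim dt/\pi$. Subtracting the same integral at a reference energy $E_0$ gives an integrand $\log|E-t|-\log|E_0-t|\sim (E_0-E)/t$, so $\int(\log|E-t|-\log|E_0-t|)\,d\nu(t)$ is not absolutely convergent; likewise, the DSS normalization $\log|(E-t)/(t-\iu)|$ alone integrated against $d\nu$ fails, and so does integrating $\log|E-t|$ against $d(N-N_0)$ alone. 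The paper's observation is that one must do both simultaneously: the formula is $\gamma(E)=-\alpha+\int_\R\log|(E-t)/(t-\iu)|\,d(N-N_0)(t)$, which converges absolutely because $N-N_0$ is \emph{bounded} (a Hellmann--Feynman eigenvalue-counting argument) while $\log|(E-t)/(t-\iu)|=O(1/t)$, and after integration by parts the kernel is $O(1/t^2)$. Without this identification, "step one" of your proposal is not merely technical; the object you write down does not exist.

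Beyond that gap, both of your steps take a genuinely different route than the paper's, and it is worth noting the contrast. For the derivation, you propose the Dirichlet-determinant/Birkhoff approach, matching a transfer-matrix entry to the finite-volume characteristic function; the paper instead uses the Kotani $w$-function and the Sadel--Schulz-Baldes identities $\gamma(z)=-\Re w(z)$ and $\partial_z w(z)=\eb(\tr G(z))$, together with a resolvent-to-DOS lemma, integrating $w'-w_0'$ to obtain the formula. The $w$-function route avoids having to control the boundary-condition matching uniformly in $L$ and plugs directly into the already-available Green-function machinery for Dirac. For the transfer of Hölder regularity, your Craig--Simon subharmonicity/Poisson-kernel argument is a legitimate alternative, but the paper instead uses the Hilbert transform ($T^2\psi=-\psi$ and local Hölder preservation of $T$), applied to the compactly truncated $\psi=(N-N_0)\chi_J$; this has the advantage that all the global-tail terms are spelled out as explicit boundary contributions, whereas your Poisson-kernel bound implicitly requires controlling $\gamma$ outside $I$ to justify the representation of $\Gamma(\cdot+\iu\eta)-\Gamma(\cdot)$. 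Either can be made rigorous, but the Hilbert-transform version is more self-contained given the truncation already needed because $\nu$ is infinite.
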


The relation between the Lyapunov exponent and the density of states is done through a result called Thouless formula. Note that, because of the asymptotic behaviour of the integrated density of states, we need to prove a new Thouless formula adapted to Dirac operators.

While in~\cite{Zdos} we get such a regularity in a spectral gap of the unperturbed operator from a Wegner estimate previously proven in~\cite{BCZ} to get Anderson localization, here, we use the regularity of the density of states to get the Wegner estimate and, then, Anderson localization on the whole real line. Our last section is devoted to the proof of the conditions to apply the multiscale analysis published by Germinet and Klein in~\cite{GKbootstrap}.

\section{Positivity of the Lyapunov exponent}\label{sec:3}
\subsection{Floquet theory}
Our preliminary step will be to study Floquet theory for the periodic operator $H_0=D_0+V_{\text{per}}$. We define some remarkable elements which will be used in the following, as they are introduced for example in the first two chapters of \cite{Eastham}. In that book, the considered system is a second-order scalar differential equation. Although our setting is different, we can work similarly since the quantity of interest in the second-order case is the vector $\begin{pmatrix}u\\u'\end{pmatrix}$. In our case, the function $u$ is already a vector. As a consequence, we will be able to use similar techniques. In particular, by analogy, we will call \emph{Dirichlet boundary conditions} on an interval $(a,b)$ (for a solution $\begin{pmatrix}                                                                                                                                                                                                                                                                                                                                                                                                                                                                                                                                                                                                                                                        u^\uparrow\\u^\downarrow                                                                                                                                                                                                                                            \end{pmatrix}$) the condition $u^\uparrow(a)=u^\uparrow(b)=0$.

For $z\in\C$, we consider the equation 
\begin{equation}\label{floquet}
 D_0u+V_{\text{per}}u=zu.
\end{equation}
By analogy with the second-order case, we denote by $u_N(\cdot,z)$ and $u_D(\cdot,z)$ the solutions to~\eqref{floquet} satisfying $u_N(-1/2,z)=\begin{pmatrix}1\\0\end{pmatrix}$ and $u_D(-1/2,z)=\begin{pmatrix}0\\1\end{pmatrix}$.

We denote by $g_0(z)$ the transfer matrix between -1/2 and 1/2, defined as the 2-by-2 matrix of which the columns are $u_N(1/2,z)$ and $u_D(1/2,z)$. We see that this matrix is entire in $z$. Its eigenvalues satisfy the characteristic equation \begin{equation}\label{char-eq}\rho^2-D(z)\rho+1=0,\end{equation} where $D(z)=\tr g_0(z)$.
We can then define 2 algebraic functions $\rho_\pm$ (with singularities at points where $D(z)=\pm2$) as the roots of this equation. 

Let us consider a \emph{stability interval}, i.e. a maximal real interval $(E_-,E_+)$ such that $|D(E)|<2$ for all $E\in(E_-,E_+)$. Since the equation has real coefficients, both $u_N(\cdot,E)$ and $u_D(\cdot,E)$ are real-valued for $E\in\R$ and thus, for $E\in(E_-,E_+)$, one has that
\begin{equation}\label{defrho}\rho_\pm(E)=\frac12(D(E)\pm\iu\sqrt{4-D(E)^2}),\end{equation}
 with $|\rho_\pm(E)|=1$ and $\rho_-(E)=\overline{\rho_+(E)}$.
 
 Let us now consider the strip\[S=\{z=E+\iu\eta, E\in(E_-,E_+), \eta\in\R\}.\]
 According to Theorem~3.1.1 of \cite{Eastham}, the eigenvalues of the problem with Dirichlet boundary conditions are not in the stability intervals.
 Thus, for $z\in S$, we can define $v_\pm(z)$ as the eigenvectors of $g_0(z)$ corresponding to the eigenvalues $\rho_\pm(z)$ with the first component normalized to one:\begin{equation}\label{defc}v_\pm(z)=\begin{pmatrix}1\\c_\pm(z)\end{pmatrix}.\end{equation}
 An explicit calculation gives that \[c_\pm(z)=\frac{\rho_\pm(z)-u_N(1/2,z)}{u_D(1/2,z)},\] where we know that $u_D(1/2,z)\neq0$ in $S$ since the Dirichlet eigenvalues are real and outside the stability intervals. As a consequence, $c_\pm$ and then $v_\pm$ are analytic in $S$ with at most algebraic singularities at $E_-$ and $E_+$. 
  
  We will denote by $\phi_\pm$ the \emph{Floquet solutions}, i.e. the solutions satisfying \begin{equation}
\phi_\pm(-1/2,z)=v_\pm(z)~;
  \end{equation}
these solutions are such that 
\[\phi_\pm(x+1,z)=\rho_\pm(z)\phi_\pm(x,z)\] for all $x$ in $\R$. Since $\rho_+(z)\neq\rho_-(z)$ on $S$, these two functions are a fundamental system of solutions to~\eqref{floquet}.
Moreover, for fixed $x$, the function $\phi_\pm(x,\cdot)$ are both analytic in $S$, with at most algebraic singularities at $E_-$ and $E_+$.

Last, we see that, if $\eta>0$, $\phi_\pm(\cdot,E+\iu\eta)$ decays exponentially near $\pm\infty$ and if $\eta<0$, $\phi_\pm(\cdot,E+\iu\eta)$ decays exponentially near $\mp\infty$.
But we know, from the corollary given just after Theorem~6.8 in \cite{Weidmann}, that, for any potential, Dirac operator is in the so-called ``limit point case'' at $\pm\infty$, which means that, for any $z\in\C\backslash\R$, there exists a unique (up to multiplication) solution to $\eqref{floquet}$, called \emph{Weyl solution}, which is in $L^2$ in a neighbourhood of $\pm\infty$. Here, we see that the Floquet solutions $\phi_\pm(\cdot,z)$ are the Weyl solutions.
%

\subsection{Scattering}
Let us now add a compactly supported potential $f\in L^1_{loc}$, such that $f\neq0$. The goal of this paragraph is to prove that ``the perturbation affects the wavefunction'', except maybe for a discrete set of energies. We assume that $f$ satisfies the conditions for the elementary perturbation, which means that $\supp(f)\subset [-1/2,1/2]$ and that we are in one of the following cases:
\begin{enumerate}
 \item $u(x)=q_{am}(x)\sigma_1+q_{sc}(x)\sigma_3,$
 \item $u(x)=q_{el}(x)I_2$ and $\supp q_{el}\cap \supp (v_{per}^{am})^2+(v_{per}^{sc})^2$ has positive measure.
\end{enumerate}

For $\lambda$ in the support of the coupling constants $\lambda(\omega)$, we consider  the operator
\[H_\lambda=D_0+V_{per}+\lambda f\] on $L^2(\R,\C^2)$.
For $z\in S$, we define $u_+(\cdot,z)$ as the unique solution to 
\begin{equation}\label{Hpert}H_\lambda u_+(\cdot,z)=zu_+(\cdot,z)\end{equation} which coincides with $\phi_+$ on $(-\infty,-1/2)$. Since the two functions $\phi_\pm(\cdot,z)$ are a basis of the space of solutions to the unperturbed equation, we have that there exist some constants $a(z)$ and $b(z)$ such that, for $x\geq 1/2$, \begin{equation}\label{defab}u_+(x,z)=a(z)\phi_+(x,z)+b(z)\phi_-(x,z).\end{equation}

From the definition of the Floquet solutions and the fact that, for $E\in(E_-,E_+)$, $\rho_+(E)=\overline{\rho_-(E)}$, we get that $\phi_-(x,E)=\overline{\phi_+(x,E)}$. Thus, if we denote by  $u_-$ the solution to~\eqref{Hpert} which coincides with $\phi_-$ on $(-\infty,-1/2)$, we have that, for $x\geq 1/2$ and $E\in(E_-,E_+)$, 
 \begin{equation}\label{WeylFlo}u_-(x,E)=\overline{a(E)}\phi_-(x,E)+\overline{b(E)}\phi_-(x,E).\end{equation}
 
 The constancy of the Wronskian gives that
 \begin{equation}\label{Wronskian}
  |a(E)|^2-|b(E)|^2=1.
 \end{equation}

We have the following important result about the coefficients $a$ and $b$, the proof of which is similar to the one of Proposition~2.1 of \cite{DSS}.
\begin{prop}
 The functions $a(\cdot)$ and $b(\cdot)$, defined on $S$ as above, are branches of multi-valued analytic functions with at most algebraic singularities at boundaries of stability intervals.
\end{prop}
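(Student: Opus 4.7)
The plan is to express $a(z)$ and $b(z)$ explicitly via Cramer's rule applied to the linear system that defines them, and then to read off their analytic and singularity structure from that of the ingredients $\phi_\pm$ and $u_+$. Both $\phi_+(x,\cdot)$ and $\phi_-(x,\cdot)$ have already been identified, for each fixed $x$, as branches of multi-valued analytic functions on $S$ with at most algebraic singularities at $E_\pm$, so the task reduces to transferring this regularity through the scattering relation~\eqref{defab}.

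First I would observe that, for fixed $x$, the solution $u_+(x,\cdot)$ is itself analytic on $S$ with at most algebraic singularities at $E_\pm$. Indeed, $u_+$ is determined by solving the perturbed equation~\eqref{Hpert} on $[-1/2,1/2]$ with the initial condition $u_+(-1/2,z)=v_+(z)=(1,c_+(z))^T$; since the coefficients of the ODE depend analytically (in fact, affinely) on $z$ and the initial data inherits the analytic/algebraic-singularity structure of $c_+$, the standard ODE theorem on analytic parameter dependence shows the same for $u_+(x,z)$ on the compact $x$-interval, and in particular for $x=1/2$.

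Next, evaluating~\eqref{defab} at $x=1/2$ gives a $2\times2$ linear system in the unknowns $a(z),b(z)$:
\begin{equation*}
u_+(1/2,z) \;=\; a(z)\,\phi_+(1/2,z) + b(z)\,\phi_-(1/2,z).
\end{equation*}
The determinant of this system is the Wronskian $W(\phi_+,\phi_-)(1/2,z)$, which, by constancy of the Wronskian for solutions of~\eqref{floquet}, equals its value at $x=-1/2$, namely
\begin{equation*}
\det\begin{pmatrix}1 & 1\\ c_+(z) & c_-(z)\end{pmatrix} \;=\; c_-(z)-c_+(z) \;=\; \frac{\rho_-(z)-\rho_+(z)}{u_D(1/2,z)}.
\end{equation*}
Using~\eqref{defrho} and the fact that $u_D(1/2,\cdot)$ does not vanish on $S$ (the Dirichlet eigenvalues lie outside stability intervals), this Wronskian is nonzero on $S$ and has at most algebraic vanishing at $E_\pm$ (where $D(E_\pm)=\pm2$ forces $\rho_+=\rho_-$). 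Applying Cramer's rule then produces explicit formulas for $a(z)$ and $b(z)$ as rational combinations of $u_+(1/2,z)$, $\phi_+(1/2,z)$, $\phi_-(1/2,z)$, each of which is a branch of a multi-valued analytic function on $S$ with at most algebraic singularities at $E_\pm$.

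The conclusion then follows: $a$ and $b$ are analytic on $S$ (as branches of multi-valued functions, the multi-valuedness being inherited from the two-sheeted square root in $\rho_\pm$), with at most algebraic singularities at $E_\pm$ coming from those of $\phi_\pm$, $c_\pm$, and from the possible algebraic vanishing of the Wronskian in the denominator. The main subtlety, which deserves a careful check, is precisely this denominator: one must verify that the order of vanishing of $\rho_-(z)-\rho_+(z)$ at $E_\pm$ is algebraic (which it is, as $\sqrt{4-D(z)^2}$ has a square-root type branch point there) and that it is not cancelled in a way that would introduce essential singularities. Everything else is a direct book-keeping of analytic dependence on $z$ for solutions of a linear ODE with analytic coefficients.
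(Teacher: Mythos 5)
Your argument is correct and follows the same strategy as Proposition~2.1 of Damanik--Sims--Stolz, to which the paper defers this proof: analytic dependence of $u_+(1/2,\cdot)$ on the spectral parameter, Cramer's rule with the constant Wronskian $c_-(z)-c_+(z)=(\rho_-(z)-\rho_+(z))/u_D(1/2,z)$, and bookkeeping of the algebraic branch points that $\rho_\pm$, $c_\pm$, $\phi_\pm$, $u_+$ inherit from the square root in \eqref{defrho}. Your check that the Wronskian is nonvanishing throughout $S$ (the points where $D(z)=\pm2$ and the zeros of $u_D(1/2,\cdot)$ are eigenvalues of self-adjoint boundary-value problems, hence real and outside the open band $(E_-,E_+)$) and vanishes only algebraically at $E_\pm$ is precisely the step the argument hinges on, and it is handled correctly.
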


Our key lemma is the following; it states that, as soon as the perturbation is not identically 0, the reflection coefficient $b$ cannot be identically 0 on a stability interval.

\begin{lemma}\label{lem:b0}
 If $b(E)=0$ for all $E\in(E_-,E_+)$, then $\lambda=0$ is identically 0.
\end{lemma}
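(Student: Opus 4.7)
The strategy is to show that if $b\equiv 0$ on $(E_-,E_+)$, then the Weyl--Titchmarsh function of the perturbed operator $H_\lambda$ at $+\infty$ coincides with that of $H_0$, and then to invoke the uniqueness result of Appendix~\ref{app:A} to conclude that $\lambda f\equiv 0$.

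I first exploit analytic continuation together with the reality of the coefficients. Since $b$ is a branch of a multi-valued analytic function on $S$ with at most algebraic singularities at $E_\pm$, its vanishing on the real interval $(E_-,E_+)$ forces $b\equiv 0$ throughout $S$. Moreover, because $H_\lambda$ has real coefficients and $\phi_-(x,E)=\overline{\phi_+(x,E)}$ on the real stability interval, the Schwarz reflection principle yields the symmetry $\overline{\phi_+(x,\bar z)}=\phi_-(x,z)$ on $S$. If $\tilde u(\cdot,z)$ denotes the solution of $H_\lambda \tilde u=z\tilde u$ that coincides with $\phi_-$ on $(-\infty,-1/2]$, this symmetry gives $\tilde u(x,z)=\overline{u_+(x,\bar z)}$; expanding in the Floquet basis on $[1/2,+\infty)$ exactly as in~\eqref{defab} produces $\tilde u=\tilde a\phi_++\tilde b\phi_-$ with $\tilde a(z)=\overline{b(\bar z)}$ and $\tilde b(z)=\overline{a(\bar z)}$. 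In particular, $\tilde a\equiv 0$ on $S$.

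Fix now $z\in S$ with $\Im z>0$. Then $\phi_+(\cdot,z)$ is the unique (up to scaling) $L^2$ solution of $H_0 u=zu$ at $+\infty$. Because $b\equiv 0$, $u_+(\cdot,z)$ equals $a(z)\phi_+$ on $[1/2,+\infty)$, so it is $L^2$ near $+\infty$; by the limit-point property at $+\infty$, $u_+(\cdot,z)$ is, up to a scalar, the Weyl solution of $H_\lambda$ at $+\infty$. Since $u_+(-1/2,z)=v_+(z)=\phi_+(-1/2,z)$, the Weyl solutions of $H_\lambda$ and $H_0$ at $+\infty$ are proportional at the base point $-1/2$, and the corresponding Weyl--Titchmarsh functions for the half-line problem on $[-1/2,+\infty)$ therefore agree on $\{z\in S:\Im z>0\}$. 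Being Herglotz on the upper half-plane, they must then agree on the whole of $\{\Im z>0\}$ by the identity theorem.

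Invoking the uniqueness result of Appendix~\ref{app:A} yields that $H_\lambda$ and $H_0$ coincide on $[-1/2,+\infty)$, so $\lambda f=0$ on $[-1/2,1/2]$, and hence identically since $\supp f\subset[-1/2,1/2]$. I expect the main subtlety to lie in this last step: the uniqueness of the Weyl--Titchmarsh function for a one-dimensional Dirac operator is more delicate than for Schr\"odinger operators and is sensitive to the type of perturbation, which is precisely where the dichotomy between cases~(1) and~(2) --- and, in case~(2), the additional constraint on $\supp q_{el}$ relative to $\supp[(v_{per}^{am})^2+(v_{per}^{sc})^2]$ --- enters.
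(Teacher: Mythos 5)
Your argument is correct and follows essentially the same route as the paper's proof: vanishing of $b$ on $(E_-,E_+)$ extends to $S$ by analyticity, so $u_+$ becomes the Weyl solution for $H_\lambda$, forcing the Weyl--Titchmarsh functions of $H_\lambda$ and $H_0$ at $-1/2$ to coincide on the upper half-plane, after which the local uniqueness result of Appendix~\ref{app:A} gives $\lambda=0$. The digression establishing $\tilde a\equiv 0$ via the symmetry $\overline{\phi_+(x,\bar z)}=\phi_-(x,z)$ is correct but is never used and could be dropped.
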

\begin{proof}
 If we assume that $b(E)=0$ for all $E\in(E_-,E_+)$, then, by analyticity, it is true for all $z\in S$. Then, for all $E\in(E_-,E_+)$ and $\eta>0$, $u_+(\cdot,E+\iu\eta)=a(E+\iu \eta)\phi_+(\cdot,E+\iu\eta)$ is exponentially decaying at $+\infty$: it is thus the Weyl solution for the perturbed equation. The Weyl-Titchmarsh function at point -1/2 for the half-line $(-1/2,+\infty)$  (see appendix) is then
 \[m_{V_{per}+\lambda f}(E+\iu\eta)=\frac{u_{+}^\downarrow(-1/2,E+\iu\eta)}{u_{+}^\uparrow(-1/2,E+\iu\eta)}=\frac{\phi_{+}^\downarrow(-1/2,E+\iu\eta)}{\phi_{+}^\uparrow(-1/2,E+\iu\eta)}=m_{V_{per}}(E+\iu\eta).\]
 Since both Weyl-Titchmarsh functions coincide in the upper part of $S$ and are both analytic on $\C^+$, they coincide on $\C^+$.
 Then, by appendix~\ref{app:A}, we get that $\lambda=0$
\end{proof}

As a consequence, since we have assumed that $f$ is not identically 0 and we know that $b$ is analytic, we have that the set $\{E\in(E_-,E_+):b(E)=0\}$ is discrete.

Let us now consider a gap: let $\alpha<E_-<E_+$ such that $(\alpha, E_-)$ is a maximal, non-trivial gap in the spectrum of $H_0$. Note that we always have $\alpha>-\infty$ since the perturbation is bounded (cf. \cite{BC}, remark after Proposition~5.2).  We consider a strip from which we have removed the  spectrum:
\[S'=\{z=E+\iu\eta, \alpha<E<E_+, \eta\in\R\}\backslash[E_-,E_+).\]
Let $\rho_i(z)$ ($i=1,2$) be the branches of~\eqref{char-eq}, such that $|\rho_1(z)|<1$ and $|\rho_2(z)|>1$ in $S'$. As before, these two functions are analytic in $S'$ with at worst algebraic singularities at $\alpha$, $E_-$ and $E_+$. Moreover, $\rho_1=\rho_+$ in the upper part of $S$ and $\rho_1=\rho_-$ in its lower part. For $z\in S'$ and $i=1,2$, we will denote by $v_i(z)$ the eigenvector of $g_0(z)$ corresponding to the eigenvalue $\rho_i(z)$; the $v_i$'s are analytic functions on $S'$. We denote by $\phi_i(\cdot,z)$ the solution to~\eqref{floquet} satisfying $\phi_i(-1/2,z)=v_i(z)$. We define in this case $u_i(\cdot, z)$ the solution to~\eqref{Hpert} which coincides  with $\phi_i(\cdot,z)$ on $(-\infty,-1/2)$. We have then that, for $x>1/2$, $u_i(x,z)=a_i(z)\phi_i(x,z)+b_i(z)\phi_j(x,z)$ ($i\neq j$), where, for each of both $i$'s, the two functions $a_i$ and $b_i$ are analytic in $S'$. Moreover, we see that, in the upper part (resp. lower part) of $S$, $a_1$ and $b_1$ (resp. $a_2$ and $b_2$) coincide with the functions $a$ and $b$ defined in the first case. Hence, they are not identically zero: the set \[\{E\in(\alpha,b):a_1(E)b_1(E)a_2(E)b_2(E)=0\}\] is discrete.

\subsection{Transfer matrices and Lyapunov exponent}
One of our main tool to get information on the nature of the spectrum of $H_\omega$ is the Lyapunov exponent. For $E\in\C$, $n\in\N$ and $\omega\in\Omega$, we denote by $g_E(n,\omega)$ the transfer matrix from $n-1/2$ to $n+1/2$ for $H_\omega$, i.e. the matrix such that, for all $u$ such that $H_\omega u=E u$, we have
\begin{equation}
 u(n+1/2)=g_E(n,\omega)u(n-1/2).
\end{equation}
We define to the transfer matrix from $1/2$ to $n+1/2$:
\[U_E(n,\omega)=g_E(n,\omega)...g_E(1,\omega).\]

We define the Lyapunov exponent by \begin{equation}\label{defLyap}
                                   \gamma(E)=\lim_{n\to\infty}\frac1n\ec(\|U_E(n,\omega)\|).
                                  \end{equation}
The existence of the limit comes from the subadditive ergodic theorem (see for example \cite[Theorem~IV.1.2]{CL}).

Our goal here is to prove that $\gamma(E)>0$ for all $E\in\R$, with the possible exception of a discrete set. To this purpose, we will use the theorem of Fürstenberg, of which we recall the statement.
We naturally define the action of $SL_2(\R)$ on the projective space $P(\R^2)$ from a quotient of the usual action on $\R^2$. 

\begin{theorem}[\cite{BL}, Theorem~II.4.1 and Proposition~II.4.3]\label{Furst}
 Let $(A_\omega^n)_{n\in\N}$ be a sequence of random variables with values in $SL_ 2(\R)$, with common distribution $\mu$. Let $G_\mu$ be the smallest closed subgroup of $SL_2(\R)$ containing the support of $\mu$. Let us assume that
 \begin{enumerate}
  \item $G_\mu$ is not compact.
  \item $\forall v \in P(\R^2)$, $\#\{gv: g\in G_\mu\}\geq 3$.
 \end{enumerate}
Then,
there exists a Lyapunov exponent $\gamma>0$ such that, almost surely
\begin{equation}
 \lim_{n\to\infty}\frac{1}{n}\log(\|A_\omega^n...A_\omega^1\|)=\gamma.
\end{equation}
\end{theorem}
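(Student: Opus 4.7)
The plan is to follow the classical strategy underlying Fürstenberg's theorem via invariant measures on the projective line $P(\R^2)$, in the spirit of \cite{BL}.

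First, I would establish the existence of the almost sure limit $\gamma = \lim_n \frac1n \log \|A_\omega^n \cdots A_\omega^1\|$ by applying Kingman's subadditive ergodic theorem to $f_n(\omega) = \log \|A_\omega^n \cdots A_\omega^1\|$: the cocycle inequality $f_{n+m}(\omega) \leq f_n(\omega) + f_m(\theta^n \omega)$ follows from submultiplicativity of the operator norm, and the integrability condition $\ec(\log^+ \|A^1_\omega\|) < \infty$ is automatic since $\mu$ has bounded support in the intended applications. That $\gamma$ is deterministic comes from ergodicity of the i.i.d.\ shift.

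Second, using compactness of $P(\R^2)$ together with a Markov--Kakutani fixed point argument applied to the Markov operator $\phi \mapsto \int (\phi \circ A)\, d\mu(A)$ on $C(P(\R^2))$, I would produce a $\mu$-stationary probability measure $\nu$ on $P(\R^2)$, meaning $\nu = \int A_* \nu\, d\mu(A)$. Iterating the cocycle identity $\log\|A_n \cdots A_1 v\| = \sum_{k=1}^n \log(\|A_k u_{k-1}\|/\|u_{k-1}\|)$, where $u_{k-1}=A_{k-1}\cdots A_1 v$, and combining with stationarity of $\nu$ yields Fürstenberg's integral formula
\[
\gamma = \int_{SL_2(\R)} \int_{P(\R^2)} \log \frac{\|A v\|}{\|v\|}\, d\nu(\bar v)\, d\mu(A).
\]

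Finally, and this is the main obstacle, I would prove that $\gamma > 0$ by exploiting both hypotheses. Hypothesis~(2), which forbids any finite $G_\mu$-orbit on $P(\R^2)$, implies strong irreducibility and in particular forces every $\mu$-stationary measure to be non-atomic. Hypothesis~(1), combined with the structure of $SL_2(\R)$ (where non-compactness means unbounded norms), provides the contracting property: from any sequence $g_n \in G_\mu$ with $\|g_n\|\to\infty$, one extracts a subsequence whose action on $P(\R^2)$ concentrates on a single attracting direction away from a repelling one. A martingale convergence argument applied to the Markov chain $\bar V_n = A_\omega^n \cdots A_\omega^1 \bar v_0$ on $P(\R^2)$ then shows that $\gamma = 0$ would force the normalized iterates to converge to a $G_\mu$-invariant finite set, contradicting hypothesis~(2). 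Hence $\gamma > 0$, and the pointwise convergence $\frac1n \log \|A_\omega^n \cdots A_\omega^1\| \to \gamma$ a.s.\ follows from the subadditive ergodic theorem combined with this positivity.
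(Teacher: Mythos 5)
The paper does not prove this statement; it is quoted verbatim as a black-box result from Bougerol--Lacroix (the cited Theorem~II.4.1 and Proposition~II.4.3). Your sketch follows precisely the classical three-stage argument that reference uses -- Kingman for existence of the deterministic limit, Markov--Kakutani for a $\mu$-stationary measure on $P(\R^2)$, Fürstenberg's integral formula, and a contraction/martingale argument for positivity -- so there is nothing in the paper to contrast it with, and your outline is the right one.

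Two small points worth tightening if you were to write this out in full. First, hypothesis~(2) by itself only forbids orbits of size $1$ or $2$; to get \emph{strong irreducibility} (no finite invariant set at all) you must also invoke hypothesis~(1): a subgroup of $SL_2(\R)$ that stabilizes a finite subset of $P(\R^2)$ of cardinality $\geq 3$ embeds in a finite permutation group (a Möbius map fixing three points of $P(\R^2)$ is the identity), hence would be compact. Second, you correctly flag that the integrability $\ec(\log^+\|A_\omega^1\|)<\infty$ is needed for Kingman and is not explicit in the theorem as quoted; the paper's application satisfies it because the coupling constants and the single-site potential are bounded, but strictly speaking it belongs in the hypotheses.
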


We will apply this theorem to the group spanned by transfer matrices, which will be called \emph{Fürstenberg's group}. More precisely, we will take, for all $E\in\R$: \begin{equation}                                                                                            G_\mu(E)=\overline{\langle \{g_E^0(\omega):\omega\in\Omega\}\rangle}.                                                                                \end{equation}
We do not write the parameter $n$ since the matrices are identically distributed.
The constancy of the Wronskian implies that $G_\mu$ is included in $SL_2(\R)$. Note that the hypotheses of Theorem~\ref{Furst}  are preserved if we take a bigger group: we can assume that $\supp~\mu$ has only two elements. We can assume, without loss of generality, that these two elements are 0 and 1 since, if they are 2 real numbers $q_1$ and $q_2$, we can replace $V_{per}$ by $V_{per}+\sum_{n\in\Z}q_1u(\cdot-n)$ and $u$ by $(q_2-q_1)u$.

\begin{prop}\label{defM}
 There exists a discrete set $M\subset\R$ such that, for all $E\in\R\backslash M$, $G_\mu(E)$ satisfies the hypotheses of~\ref{Furst}. As a consequence, $\gamma(E)>0$ for all $E\in\R\backslash M$.
\end{prop}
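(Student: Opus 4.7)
The plan is to apply Fürstenberg's theorem (Theorem~\ref{Furst}) to $G_\mu(E)$. After the reduction $\mathrm{supp}~\mu=\{0,1\}$, the group is generated by two transfer matrices over $[-1/2,1/2]$: $A(E):=g_0(E)$ (the unperturbed matrix) and $B(E)$, the transfer matrix of $H_0+f$ with $f=u$. The argument splits according to whether $E$ lies in the interior of a stability interval of $H_0$ or in a spectral gap, and a discrete set $M$ will collect the finitely-many-per-band exceptional energies.

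Inside a stability interval $(E_-,E_+)$, I would work in the Floquet basis $\{v_+(E),v_-(E)\}$. There $A(E)=\mathrm{diag}(\rho_+,\rho_-)$ with $|\rho_\pm|=1$ and $\rho_-=\overline{\rho_+}$, while the scattering decomposition of $u_\pm$ together with the Wronskian identity $|a|^2-|b|^2=1$ yield the matrix form
\[B(E)=\begin{pmatrix} a\rho_+ & \overline{b}\rho_+ \\ b\rho_- & \overline{a}\rho_- \end{pmatrix}.\]
A relatively compact subgroup of $SL_2(\R)$ preserves a positive-definite Hermitian form $F$ on $\C^2$. The constraint $A^*FA=F$ forces the off-diagonal entry of $F$ to vanish whenever $\rho_+^2\neq 1$, i.e.~throughout the band interior, so $F=\mathrm{diag}(p,q)$ in the Floquet basis with $p,q>0$; imposing then $B^*FB=F$ on the diagonal and using $|\rho_\pm|=1$ produces $(p+q)|b|^2=0$, hence $b(E)=0$. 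By Lemma~\ref{lem:b0} this vanishing set is discrete in $(E_-,E_+)$, so $G_\mu(E)$ is non-compact off that discrete set. For the second Fürstenberg hypothesis, $A(E)$ has no real eigenvector, which rules out $1$-orbits in $P(\R^2)$. A $2$-orbit $\{\ell_1,\ell_2\}$ would have to be swapped by $A(E)$, so $A(E)^2$ would fix each $\ell_i$; but $A^2$ has eigenvalues $\rho_\pm^2$, which are real only at the isolated points where $D(E)\in\{-2,0,2\}$. These resonance energies are absorbed into $M$.

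On a spectral gap $(\alpha,E_-)$, $A(E)$ is hyperbolic with real eigenvalues of modulus $\neq 1$, so non-compactness of $G_\mu(E)$ is automatic. The only candidate invariant pair in $P(\R^2)$ is the pair of real eigendirections of $A(E)$; the analogous scattering analysis using the coefficients $(a_i,b_i)$ introduced before the statement shows that $B(E)$ preserves this pair precisely when $b_1(E)b_2(E)=0$, again a discrete condition on the gap. I then take $M$ to be the union over bands and gaps of: stability-interval endpoints ($|D|=2$, isolated zeros of the entire function $D^2-4$); interior zeros of $b$ (in bands) or of $b_1 b_2$ (in gaps); and interior zeros of $D$ (the band resonances). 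Each piece is locally discrete, and since the bands and gaps of the one-dimensional periodic Dirac operator $H_0$ accumulate only at $\pm\infty$, $M$ itself is discrete in $\R$. Outside $M$ both Fürstenberg hypotheses hold, and Theorem~\ref{Furst} yields $\gamma(E)>0$.

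The main obstacle is controlling condition~(ii) of Fürstenberg at the in-band resonance energies $D(E)=0$, where $A(E)^2=-I$ renders the ``$A^2$ fixes $\ell_i$'' argument vacuous; handling them directly would require a finer analysis using $B(E)$, and the cleanest route is simply to absorb these isolated energies into $M$. The non-compactness step, by contrast, reduces cleanly to the Floquet-basis form of $B(E)$ displayed above and to the decisive non-vanishing of the reflection coefficient $b$ provided by Lemma~\ref{lem:b0}, which is itself the translation of the uniqueness property of Weyl--Titchmarsh functions from Appendix~\ref{app:A}.
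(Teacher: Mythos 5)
Your treatment of the stability-interval case is correct, and the invariant-Hermitian-form argument for non-compactness is a genuine alternative to the paper's. The paper conjugates $g_0,g_1$ into $SL_2(\R)$ as $\tilde g_0 s$ with $\tilde g_0$ a rotation and $s$ a ``boost'', finds $\theta$ with $\|s\,v(\theta)\|>1$, and shows that alternating $s$ with powers of the rotation $\tilde g_0$ yields an unbounded orbit, hence a non-compact group. Your route is cleaner: a relatively compact subgroup of $SL_2(\R)$ preserves a positive-definite Hermitian form; writing $A=\mathrm{diag}(\rho_+,\rho_-)$ forces $F$ diagonal (off-band-endpoints), and then $B^*FB=F$ together with $|a|^2-|b|^2=1$ forces $b=0$. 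Both roads ultimately invoke Lemma~\ref{lem:b0}. Your ``$A^2$ fixes each $\ell_i$'' argument for condition~(2) is likewise a correct and slightly different way to isolate the band resonances $D(E)=0$, which the paper handles via the rotation angle of $\tilde g_0$.

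There is, however, a gap in the spectral-gap case. You exclude $\{b_1b_2=0\}$, asserting that $B(E)$ preserves the pair $\{\ell_1,\ell_2\}$ of eigendirections of $A(E)$ precisely when $b_1b_2=0$. That is not so: the pair is also preserved, by a swap, when $a_1(E)=a_2(E)=0$. The determinant relation $a_1a_2-b_1b_2=1$ (coming from $\det g_1=1$ and $\rho_1\rho_2=1$) is perfectly compatible with $a_1=a_2=0$ and $b_1b_2=-1$, a point that lies \emph{outside} your excluded set; there the orbit of either $\ell_i$ under $\langle A(E),B(E)\rangle$ is exactly $\{\ell_1,\ell_2\}$, so Fürstenberg's second hypothesis fails. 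The remedy is to remove as well the zero sets of $a_1$ and $a_2$, which are discrete (each $a_i$ agrees with $a$ on a half of the strip $S$, and $|a|^2=1+|b|^2\geq 1$ on the band, so $a_i$ is not identically zero). This is exactly why the paper takes $M_{(E',E_-)}=\{a_1b_1a_2b_2=0\}$ rather than $\{b_1b_2=0\}$.
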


We will treat separately the case where $E$ is in a stability interval of $H_0$ and the one where $E$ is in a spectral gap of $H_0$.

Let us start by the case where $E$ is in a stability interval. We have the following lemma.
\begin{lemma}
 Let $E$ be in a stability interval $(E_-,E_+)$ of $H_0$. Then, \begin{equation}
                                                                   g_1(E)=C(E)\tilde{g}_0(E)s(E)C(E)^{-1}
                                                                  \end{equation}
where
\begin{equation}
 C(E)=\begin{pmatrix}
             1&0\\\Re[c(E)]&\Im[c(E)]
            \end{pmatrix}
\text{ ; }\tilde{g}_0(E) =\begin{pmatrix}
            \Re[\rho(E)]&\Im[\rho(E)]\\-\Im[\rho(E)]&\Re[\rho(E)]
           \end{pmatrix}
\end{equation}
\begin{equation}
 s(E) =\begin{pmatrix}
            \Re[a(E)+b(E)]&\Im[a(E)+b(E)]\\-\Im[a(E)-b(E)]&\Re[a(E)-b(E)]
           \end{pmatrix}
\end{equation}
with $c(E)=c_+(E)=\overline{c_-(E)}$, $\rho(E)=\rho_+(E)=\overline{\rho_-(E)}$, $a(E)$ and $b(E)$ are respectively defined by \eqref{defc}, \eqref{defrho} and \eqref{defab}.
\end{lemma}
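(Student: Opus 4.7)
My approach is to change basis to a real basis adapted to the Floquet structure and read the factorization off the resulting matrix. Set $c := c_+(E)$ and $\rho := \rho_+(E)$; since $E$ lies strictly inside a stability interval, by \eqref{defrho} $\Im\rho > 0$, and from the explicit formula $c_\pm = (\rho_\pm - u_N(1/2))/u_D(1/2)$ recalled just before \eqref{defc}, one has $\Im c = \Im\rho/u_D(1/2,E) \neq 0$. Consequently the vectors
\[
v := \begin{pmatrix}1 \\ \Re c\end{pmatrix}, \qquad w := \begin{pmatrix}0 \\ \Im c\end{pmatrix}
\]
form a real basis of $\R^2$, and $C(E)$ is precisely the matrix $(v \mid w)$ whose columns are $v$ and $w$. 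Observe that $v_\pm = v \pm \iu w$. The claim is thus equivalent to showing that $C(E)^{-1} g_1(E)\, C(E) = \tilde g_0(E)\, s(E)$.

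Next I compute the action of $g_1(E)$, extended by complex linearity to $\C^2$, on the Floquet vectors $v_\pm$. From $g_1(E)\, u(-1/2) = u(1/2)$ for any solution $u$ of $H_1 u = Eu$, applied to $u = u_+$, together with the Floquet identity $\phi_\pm(1/2,E) = \rho_\pm v_\pm$ (immediate from $\phi_\pm(x+1) = \rho_\pm \phi_\pm(x)$ at $x = -1/2$) and the expansion \eqref{defab},
\[
g_1(E)\, v_+ \;=\; u_+(1/2) \;=\; a\rho\, v_+ \,+\, b\,\overline{\rho}\, v_-.
\]
Since $H_1$ has real coefficients and $v_- = \overline{v_+}$, complex conjugation of $H_1 u_+ = E u_+$ gives $u_- = \overline{u_+}$ for real $E$, hence
\[
g_1(E)\, v_- \;=\; \overline{b}\, \rho\, v_+ \,+\, \overline{a}\,\overline{\rho}\, v_-.
\]
Using $v = (v_+ + v_-)/2$ and $w = (v_+ - v_-)/(2\iu)$, a short linear combination yields
\begin{align*}
g_1(E)\, v &= \Re\bigl[\rho(a+\overline{b})\bigr]\, v \;-\; \Im\bigl[\rho(a+\overline{b})\bigr]\, w,\\
g_1(E)\, w &= \Im\bigl[\rho(a-\overline{b})\bigr]\, v \;+\; \Re\bigl[\rho(a-\overline{b})\bigr]\, w,
\end{align*}
so that
\[
C(E)^{-1} g_1(E)\, C(E) \;=\; \begin{pmatrix} \Re[\rho(a+\overline{b})] & \Im[\rho(a-\overline{b})] \\ -\Im[\rho(a+\overline{b})] & \Re[\rho(a-\overline{b})] \end{pmatrix}.
\]

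To conclude, I identify this matrix with $\tilde g_0(E)\,s(E)$. The key (and only) subtlety is that $\Re(a+\overline{b}) = \Re(a+b)$ but $\Im(a+\overline{b}) = \Im(a-b)$, and symmetrically $\Re(a-\overline{b}) = \Re(a-b)$, $\Im(a-\overline{b}) = \Im(a+b)$; it is precisely this reshuffling of $b$ versus $\overline{b}$ that forces $s(E)$ to have its asymmetric mixed form rather than the plain complex-multiplication structure of $\tilde g_0(E)$. With this in hand, expanding $\Re[\rho z] = \Re\rho\,\Re z - \Im\rho\,\Im z$ and $\Im[\rho z] = \Im\rho\,\Re z + \Re\rho\,\Im z$ with $z = a \pm \overline{b}$ reproduces, row by row, the four entries of the explicit product
\[
\begin{pmatrix}\Re\rho & \Im\rho\\ -\Im\rho & \Re\rho\end{pmatrix} \begin{pmatrix}\Re(a+b) & \Im(a+b)\\ -\Im(a-b) & \Re(a-b)\end{pmatrix} \;=\; \tilde g_0(E)\, s(E),
\]
and multiplying by $C(E)$ on the left and $C(E)^{-1}$ on the right gives the announced identity. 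I do not anticipate any conceptual obstacle beyond the careful bookkeeping of real and imaginary parts in this last step.
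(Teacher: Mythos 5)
Your proof is correct and uses essentially the same idea as the paper: express $g_1(E)$ in the (complex) basis of Floquet eigenvectors $v_\pm$ via \eqref{defab} and the Floquet multiplier, then conjugate by $C(E)$, which is exactly the real-to-complex change of basis. The paper presents this as a chain of three matrix factorizations ($g_1 = P\,D_\rho\,S\,P^{-1}$ with $P = (v_+\mid v_-)$, $D_\rho = \mathrm{diag}(\rho,\bar\rho)$, $S = \left(\begin{smallmatrix}a&\bar b\\ b&\bar a\end{smallmatrix}\right)$) and ends with "putting all these equalities together," leaving the final identification $C^{-1}P\,D_\rho S\,P^{-1}C = \tilde g_0\,s$ to the reader; you instead compute $g_1 v_\pm$ directly and then the images of the real basis vectors $v,w$, making the bookkeeping of $b$ versus $\overline b$ (which is precisely what produces the non-conformal form of $s(E)$) explicit. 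This is a more transparent presentation of the same computation, and your observation that $\Im c \neq 0$ on stability intervals (so $C$ is invertible) is a detail the paper leaves implicit.
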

\begin{proof}
 The proof is similar to the one of Lemma~2.4 of \cite{DSS} and is a very simple matrix calculation. We omit the dependence in $E$ to alleviate notations. First, expressing the solutions $u_{N/D}$ in terms of the Weyl solutions $u_\pm$ gives
 \[g_1(E)=\begin{pmatrix}
                 u_+^\uparrow(1/2)& u_-^\uparrow(1/2)\\ u_+^\downarrow(1/2)& u_-^\downarrow(1/2)
                \end{pmatrix}
\begin{pmatrix}
 1&1\\c&\overline{c}
\end{pmatrix}^{-1}.\]
Then, by \eqref{defab} and \eqref{WeylFlo}, \[\begin{pmatrix}
                 u_+^\uparrow(1/2)& u_-^\uparrow(1/2)\\ u_+^\downarrow(1/2)& u_-^\downarrow(1/2)
                \end{pmatrix}=\begin{pmatrix}
                 \phi_+^\uparrow(1/2)& \phi_-^\uparrow(1/2)\\ \phi_+^\downarrow(1/2)& \phi_-^\downarrow(1/2)
                \end{pmatrix}
\begin{pmatrix}
 a&\overline{b}\\b&\overline{a}
\end{pmatrix}.\]
Finally, the definition of the Floquet solutions gives us that
\[\begin{pmatrix}
                 \phi_+^\uparrow(1/2)& \phi_-^\uparrow(1/2)\\ \phi_+^\downarrow(1/2)& \phi_-^\downarrow(1/2)
                \end{pmatrix}=\begin{pmatrix}
 1&1\\c&\overline{c}
\end{pmatrix}\begin{pmatrix}
\rho&0\\0&
\overline{\rho}\end{pmatrix}.\]
Putting all these equalities together, we get the result.
\end{proof}

The matrix $g_0$ corresponding to trivial scattering (i.e. the case $a=1$, $b=0$), we have 
\begin{equation}
   g_0(E)=C(E)\tilde{g}_0(E)C(E)^{-1}.   
\end{equation}

We will consider the group $\tilde{G}(E)$ defined as the subgroup of $SL_2(\R)$ generated by $\tilde{g}_0(E)$ and $s(E)$. It is easy to see that it is conjugate to $G(E)$, and thus that $G(E)$ satisfies the hypotheses of Theorem~\ref{Furst} if and only if $\tilde{G}(E)$ satisfies them.

Given a stability interval $(E_-,E_+)$, let $\tilde{M}_{(E_-,E_+)}=\{E\in(E_-,E_+):b(E)=0\}$ and choose $E\in(E_-,E_+)\backslash\tilde{M}_{(E_-,E_+)}$. We set $a(E)=A\e^{\iu\alpha}$ and $b(E)=B \e^{\iu\beta}$, with $A$ and $B $ real. We have thus
\begin{equation}
 s(E)=A\begin{pmatrix}
                    \cos \alpha&\sin \alpha\\-\sin\alpha&\cos \alpha
                   \end{pmatrix}
+B \begin{pmatrix}  \cos \beta&\sin \beta\\\sin\beta&-\cos \beta
                   \end{pmatrix}.
\end{equation}
In order to prove that $G(E)$ is not compact, we will exhibit a sequence of elements with an unbounded norm. An arbitrary element of $P(\R^2)$ can be represented by $v(\theta)=\begin{pmatrix}\cos \theta\\\sin \theta\end{pmatrix}$, for some $\theta\in[0,\pi)$. We find then that
\[s(E)v(\theta)=Av(\theta-\alpha)+B v(\beta-\theta).\]
In particular, if we take $\theta'=\frac12(\alpha+\beta)$, we have that
\[s(E)v(\theta')=(A+B )v((\beta-\alpha)/2).\]

Let us now consider the squared norm $R(\theta)=\|s(E)v(\theta)\|^2$. Using the fact that $A^2-B ^2=1$ (cf.~\eqref{Wronskian}), we find that 
\begin{equation}
 R(\theta)-1=2B (B +A\cos(\alpha+\beta-2\theta)).
\end{equation}
We have thus that $R(\theta)>1$ if and only if $\cos(\alpha+\beta-2\theta)>-\frac{B }{A}$, which happens on an interval of measure higher than $\frac\pi2$.
Thus, there exists a compact interval $K$ with measure larger than $\frac\pi2$ such that $\|s(E)v(\theta)\|>c>1$ for all $\theta$ in $K$. In particular, $s(E)$ applied to $v(\theta')$ produces a vector $v'$ with norm larger than one.

Concerning $g_0(E)$, as $E$ is in a stability interval, $|\rho(E)|=1$ and thus $g_0(E)$ is a rotation. The angle of this rotation can be taken in $[-\pi/2,\pi/2)$ and is not 0 since $\rho\neq\pm1$.

The sequence is constructed in the following way.
\begin{enumerate}
 \item Choose a $v(\theta_0)$ in $K$.
 \item Apply $s(E)$ to get a new vector $w$ with larger norm.
 \item If $\overline{w}$, the class of $w$ in $P(\R^2)$, is in $K$, then do (2) again.
 \item If $\overline{w}\notin K$, then apply $g_0(E)$ enough times to get a vector in $K$. It can be reached in a finite number of iterations since $g_0(E)$ has an angle in $[-\pi/2,\pi/2)$ and $K$ is an interval of length larger than $\pi/2$. Then, apply (2) again.
\end{enumerate}


To conclude the proof in the stability intervals, we will pick
\[M_{(E_-,E_+)}=\{E\in(E_-,E_+):E\in\tilde{M}_{(E_-,E_+)}\text{ or } D(E)=0\}.\]

We clearly have that $M_{(E_-,E_+)}$ is finite and, by the above, $\tilde{G}(E)$ is not compact when we take $E\in(E_-,E_+)\backslash M_{(E_-,E_+)}$. Moreover, when $D(E)\neq 0$, the matrix $\tilde{g}_ 0(E)$ is a rotation with an  angle different from $\frac\pi2$.
Therefore, powers of this matrix produce at least three distinct elements in the projective space.

Let us now consider the case where $E$ is in a spectral gap of $H_0$, which will be denoted $(E',E_-)$. Since in such an interval, we have $|D(E)|>2$,  $g_0(E)$ always have an eigenvalue with modulus larger than one. Hence, repeated iterations of $g_0(E)$ on the associated eigenvector will produce an unbounded sequence of elements of  $G(E)$, showing that this group is not compact.

To prove that condition (2) of Theorem~\ref{Furst} is satisfied in $(E',E_-)$ (outside a discrete interval), let us introduce the set \[M_{(E',E_-)}=\{E\in(E',E_-):a_1(E)b_1(E)a_2(E)b_2(E)=0\},                                            \]which is  discrete as we have seen in the last subsection, and take $E$ in $(E',E_-)\backslash M_{(E',E_-)}$.

Let us denote by $\bar{v}_1$ and $\bar{v}_2$ the directions of the eigenvectors of $g_0(E)$. Since we have chosen $E$ such that $a_i(E)$ and $b_i(E)\neq0$, we have that
\[s(E)\bar{v}_i\notin\{\bar{v}_1,\bar{v}_2\}\] for $i=1,2$.
Thus, if $\bar{v}\notin\{\bar{v}_1,\bar{v}_2\}$, then $\#\{g_0(E)^n\bar{v},n\in\Z\}=\infty$. If $\bar{v}$ is one the the eigenvectors, then an application of $s(E)$ followed by iterations of $g_0(E)$ gives an infinite orbit.
This proves the second condition for spectral gaps.

Finally, we conclude the proof by taking $M$ as the union of, first, the $M_{(E_-,E_+)}$ for all the stability intervals, second, all the $M_{(E',E_-)}$ for all the maximal spectral gaps, and, third, all the endpoints of the stability intervals.

\section{Hölder regularity of the Lyapunov exponent}\label{sec:4}
In this section, we will prove the Hölder regularity of the Lyapunov exponent on compact intervals. Fix $I$ a compact interval of $\R\backslash M$, where $M$ is the discrete set introduced in the previous section.
\begin{prop}\label{reguLyap}
 The Lyapunov exponent defined by~\eqref{defLyap} is uniformly Hölder continuous on $I$, i.e. there exist $\alpha>0$ and $C$ such that, for all $E$, $E'\in I$,
 \[|\gamma(E)-\gamma(E')|\leq C|E-E'|^\alpha.\]
\end{prop}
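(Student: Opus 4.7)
The plan is to combine a deterministic Grönwall-type Lipschitz estimate on the transfer matrices with a uniform large deviation principle for their norms, following the classical Craig--Simon--Le Page strategy as implemented for random Schrödinger operators in~\cite{CL,BL,DSS}. For the deterministic bound: since $V_{per}$, $u$ and the coupling constants $\lambda_n$ are uniformly bounded and $E$ varies in the compact set $I$, the ODE $H_\omega \psi = E \psi$ has uniformly bounded coefficients on each period $[n-1/2, n+1/2]$. Applying Grönwall's lemma to the equation satisfied by the difference of two solutions with the same initial data at energies $E$ and $E'$ yields
\[
 \|g_E(n,\omega) - g_{E'}(n,\omega)\| \le C_1 |E - E'|,
\qquad
 \|g_E(n,\omega)\| \le C_2,
\]
uniformly in $n$ and $\omega$. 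A telescoping expansion of the product then produces $\|U_E(n,\omega) - U_{E'}(n,\omega)\| \le e^{K n} |E - E'|$ for some constant $K = K(I)$.

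Second, I would invoke a uniform Le Page-type large deviation theorem (see~\cite[Theorem V.6.2]{BL}) for products of i.i.d.\ elements of $SL_2(\R)$ satisfying Fürstenberg's conditions. By Proposition~\ref{defM}, these conditions hold at every $E \in I$; since $E \mapsto g_E^0(\omega)$ is analytic and the constructions in Section~\ref{sec:3} are explicit, the associated contraction parameters can be made locally uniform in $E$. This should provide constants $c, C, \eta_0 > 0$ such that, for all $E \in I$, all $0 < \eta < \eta_0$ and all $n \ge 1$,
\[
 \pc\!\left( \left| \tfrac{1}{n} \log \|U_E(n,\omega)\| - \gamma(E) \right|
 > \eta \right) \le C e^{-c n \eta^2}.
\]
Establishing the uniformity of this estimate over $I$ is the main obstacle, as it requires a uniform spectral gap for the Markov operator on $P(\R^2)$ associated with the random walk generated by the transfer matrices.

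Finally, Hölder regularity follows by the standard Craig--Simon argument. Given $E, E' \in I$ with $\epsilon = |E - E'|$ small, set $n = \lfloor \beta \log(1/\epsilon) \rfloor$ with $\beta > 0$ small. Since $\det U_E = 1$ implies $\|U_E(n,\omega)\| \ge 1$, the elementary inequality $|\log a - \log b| \le |a - b| / \min(a,b)$ combined with the Grönwall bound gives
\[
 \bigl| \log \|U_E(n,\omega)\| - \log \|U_{E'}(n,\omega)\| \bigr|
 \le e^{K n} \epsilon \le \epsilon^{1 - \beta K}.
\]
Dividing by $n$ and combining with the large deviation bound applied with $\eta = \epsilon^{\alpha}$ for a suitable $\alpha > 0$ yields, after optimizing $\beta$ and $\alpha$, the desired inequality $|\gamma(E) - \gamma(E')| \le C \epsilon^{\alpha'}$ on $I$.
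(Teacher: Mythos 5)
Your plan is conceptually on target and the deterministic ingredients are exactly those of the paper: the two Grönwall bounds you write down are precisely Lemmas~\ref{lem:Gronwall1} and~\ref{lem:Gronwall2} (uniform boundedness of $g_E(n,\omega)$ and Lipschitz dependence of $g_E(n,\omega)$ on $E$), and your telescoping step giving $\|U_E(n,\omega)-U_{E'}(n,\omega)\|\leq e^{Kn}|E-E'|$ is routine. Where you depart from the paper is that you then attempt to re-derive the Hölder continuity ``from scratch,'' via a uniform Le Page large-deviation estimate for $\frac1n\log\|U_E(n,\omega)\|$ followed by the Craig--Simon interpolation, and you honestly flag the uniformity of that LDP over the compact interval $I$ --- equivalently a uniform spectral gap for the Markov operator on $P(\R^2)$ --- as ``the main obstacle.'' The paper avoids this obstacle entirely by invoking Theorem~\ref{thm:Boum} (Boumaza's Theorem~1), whose hypotheses are exactly the three things you have already established: pointwise Fürstenberg conditions on $I$ (from Proposition~\ref{defM}), a uniform bound on $\|g_E(n,\omega)\|$, and Lipschitz dependence on $E$. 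Boumaza's theorem internally carries out the Le Page/Craig--Simon machinery and, crucially, shows that the required uniformity comes for free from compactness of $I$ and continuity in $E$; you do not have to establish a uniform spectral gap by hand. So your proposal is a correct outline of the proof of the cited black box, but as it stands it leaves a genuine gap (the uniform LDP), which is precisely what the paper's citation of Boumaza is designed to close. If you either fill in the spectral-gap argument or simply invoke the known uniform Hölder-continuity result for $SL_2(\R)$-cocycles under the stated hypotheses, your argument becomes complete and matches the paper's.
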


The proof will use the following theorem, whose statement has been written (in a more general setting) by Boumaza.

\begin{theorem}[\cite{Boumazabis}, Theorem~1]\label{thm:Boum}
 Let $(A_\omega^n(E))$ a sequence of i.i.d. random matrices of $SL_2(\R)$ depending on a real parameter $E$. Let $\mu_E$ be the common distribution of the $(A_\omega^n(E))$, $G_{\mu_E}$ the smallest closed subgroup of $SL_2(\R)$ containing the support of $\mu_E$ and $\gamma(E)$ the associated Lyapunov exponent. We fix a compact interval $I\subset\R$ and we assume that for all $E\in I$ we have:
 \begin{enumerate}
  \item $G_{\mu_E}$ is non-compact and strongly irreducible.
  \item There exist $C_1>0$, $C_2>0$, independent of $n$, $\omega$ and $E$ such that
  \begin{equation}\label{Gronwall1}
   \|A_\omega^n(E)\|^2\leq \exp(C_1+|E|+1)\leq C_2.
  \end{equation}
\item There exists $C_3>0$, independent of $n$, $\omega$ and $E$ such that for any $E$, $E'\in I$:
  \begin{equation}\label{Gronwall2}
   \|A_\omega^n(E)-A_\omega^n(E')\|\leq  C_3|E-E'|.
   \end{equation}
 \end{enumerate}
Then, there exist two real numbers $\alpha>0$ and $C>0$ such that
\begin{equation}
 \forall E, E'\in I; \quad |\gamma(E)-\gamma(E')|\leq C|E-E'|^\alpha.
\end{equation}
\end{theorem}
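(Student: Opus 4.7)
The plan is to deduce Proposition~\ref{reguLyap} from Theorem~\ref{thm:Boum} applied to the i.i.d.\ sequence of transfer matrices $(g_E(n,\omega))_{n\in\N}$ introduced in Section~\ref{sec:3}. The task thus reduces to verifying the three hypotheses of that theorem, uniformly on the compact set $I\subset\R\setminus M$.

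For hypothesis~(1), non-compactness of $G_{\mu_E}$ for $E\in I$ is already part of the conclusion of Proposition~\ref{defM}, together with condition~(2) of Theorem~\ref{Furst}. The remaining work is to upgrade this orbit condition to \emph{strong} irreducibility. I would argue by contradiction: a finite $G_{\mu_E}$-invariant set $F\subset P(\R^2)$ of cardinality $\leq 2$ contradicts the orbit condition by taking $v\in F$, while one of cardinality $\geq 3$ would force each element of $G_{\mu_E}$ to permute at least three points of $P(\R^2)$. Since a non-trivial element of $SL_2(\R)$ acts on $P(\R^2)$ with at most two fixed points, the intersection of the stabilisers of three such points is $\{\pm I\}$; this finite-index subgroup being trivial, $G_{\mu_E}$ itself would be finite, contradicting its non-compactness.

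For hypotheses~(2) and~(3), I would rewrite the eigenvalue equation $H_\omega u = Eu$ on $[n-1/2,n+1/2]$ as the first-order system $\partial_x U(x;E)=-J\bigl(E - V_{per}(x) - \lambda_n(\omega)\,u(x-n)\bigr)U(x;E)$, of which $g_E(n,\omega)$ is the fundamental solution. A standard Grönwall estimate over an interval of unit length, using the almost-sure boundedness of $\lambda_n$ and the boundedness of $V_{per}$ and $u$, yields the exponential bound~\eqref{Gronwall1} with constants depending only on the compact set $I$ and the potentials. A second Grönwall estimate, applied to the inhomogeneous linear equation satisfied by $U(x;E)-U(x;E')$ whose forcing term is proportional to $E-E'$, delivers the Lipschitz bound~\eqref{Gronwall2} with a constant uniform in $n$, $\omega$, and $E,E'\in I$.

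The conceptually delicate step is the passage from Fürstenberg's orbit condition to strong irreducibility in hypothesis~(1); hypotheses~(2) and~(3) are routine ODE estimates once the first-order formulation is in place. With the three hypotheses verified on $I$, Theorem~\ref{thm:Boum} applies and yields the uniform Hölder continuity of $\gamma$ on $I$, concluding the proof of Proposition~\ref{reguLyap}.
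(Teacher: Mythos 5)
Theorem~\ref{thm:Boum} is a cited result (Boumaza, Theorem~1) which the paper uses without proof; neither the paper nor your proposal actually proves it. What you have written is a verification of the hypotheses of this theorem for the transfer matrices of $H_\omega$, which is the content of the text \emph{following} the theorem in Section~\ref{sec:4} and amounts to a proof of Proposition~\ref{reguLyap} \emph{assuming} Theorem~\ref{thm:Boum}. Read in that way, your argument and the paper's agree: for hypotheses~(2) and~(3) you propose a unit-interval Gr\"onwall bound for the first-order system, which is exactly what Lemmas~\ref{lem:Gronwall1} and~\ref{lem:Gronwall2} establish; for hypothesis~(1) the paper simply refers back to Section~\ref{sec:3}, and what you add is a correct derivation of strong irreducibility from the F\"urstenberg orbit condition plus non-compactness (a finite invariant set of cardinality at most two contradicts the orbit condition, one of cardinality at least three forces a finite-index kernel fixing three directions, hence trivial, hence $G_{\mu_E}$ finite and compact). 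That last step is sound, but it is already contained in the cited equivalence \cite[Proposition~II.4.3]{BL} on which Theorem~\ref{Furst} rests, so the paper does not spell it out. If the intent was to supply a proof of Theorem~\ref{thm:Boum} itself, that would require the large-deviation and contraction estimates of \cite{Boumazabis}, which your proposal does not touch.
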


We apply this theorem for $A_\omega^n(E)=g_\omega^n(E)$, the transfer matrices defined in the previous section.
This will directly give Proposition~\ref{reguLyap}; we only need to check the hypotheses of the theorem.
The first one has already been proven in the previous section.

Let us now prove~\eqref{Gronwall1}.
We have the following lemma.
\begin{lemma}\label{lem:Gronwall1}
 Let $\psi=\begin{pmatrix}\psi_\uparrow\\\psi_\downarrow\end{pmatrix}$ be a solution to $D_0\psi+V\psi=0$, with $V=q_{am}\sigma_1+q_{sc}\sigma_3+q_{el}I_2\in L^1_{loc}(\R,\mathcal{M}_2(\R))$. Then, for all $x$, $y\in\R$,
 \begin{equation} \label{majo1}
  |\psi_\uparrow(x)|^2+|\psi_\downarrow(x)|^2\leq  (|\psi_\uparrow(y)|^2+|\psi_\downarrow(y)|^2)\exp\left(2\int_{\min(x,y)}^{\max(x,y)}|q_{am}(t)|+|q_{sc}(t)|\d t\right).
 \end{equation}
\end{lemma}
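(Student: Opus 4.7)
The plan is to set $f(x) = |\psi_\uparrow(x)|^2 + |\psi_\downarrow(x)|^2 = \|\psi(x)\|_{\C^2}^2$, compute its derivative using the ODE, and close with a Grönwall argument. Since $J^2 = -I_2$, the equation $D_0\psi + V\psi = 0$ rewrites as $\psi' = JV\psi$, so assuming for a moment that $\psi$ is differentiable,
\[
f'(x) = 2\Re\langle \psi'(x), \psi(x)\rangle_{\C^2} = 2\Re\langle JV(x)\psi(x), \psi(x)\rangle_{\C^2}.
\]
(In general $\psi$ is only absolutely continuous and the identity holds in the sense of distributions, which is enough to integrate.)

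The main observation is that the electrostatic part drops out. Using $J\sigma_1 = -\sigma_3$ and $J\sigma_3 = \sigma_1$, I get
\[
JV = q_{sc}\sigma_1 - q_{am}\sigma_3 + q_{el}J.
\]
Since $q_{el}$ is real and $J^* = -J$, the matrix $q_{el}J$ is anti-Hermitian, so $\Re\langle q_{el}J\psi,\psi\rangle = 0$. On the other hand $\sigma_1$ and $\sigma_3$ are Hermitian with operator norm $1$, hence
\[
|\Re\langle (q_{sc}\sigma_1 - q_{am}\sigma_3)\psi,\psi\rangle| \leq (|q_{sc}(x)| + |q_{am}(x)|)\,\|\psi(x)\|_{\C^2}^2.
\]
Putting the two together,
\[
|f'(x)| \leq 2\bigl(|q_{am}(x)| + |q_{sc}(x)|\bigr) f(x)
\]
for almost every $x$.

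Finally, Grönwall's lemma applied to the inequality $\pm f'(x) \leq 2(|q_{am}(x)| + |q_{sc}(x)|)f(x)$ on the interval $[\min(x,y),\max(x,y)]$ — treating both cases $x \geq y$ and $x \leq y$ separately — yields
\[
f(x) \leq f(y)\exp\!\left(2\int_{\min(x,y)}^{\max(x,y)}\!\!|q_{am}(t)| + |q_{sc}(t)|\,\d t\right),
\]
which is exactly \eqref{majo1}. The only subtle point is that the potential $V$ lies only in $L^1_{loc}$, so $f$ need not be classically differentiable; but $f$ is absolutely continuous (as the squared norm of an absolutely continuous function), so the computation above is valid almost everywhere and the integral form of Grönwall applies directly.
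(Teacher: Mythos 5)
Your proof is correct and follows essentially the same route as the paper: compute the derivative of $R(t)=\|\psi(t)\|^2$, observe that the electrostatic part contributes nothing, bound $|R'|$ by $2(|q_{am}|+|q_{sc}|)R$, and close with Grönwall. The paper does the middle step by writing the system out in coordinates and watching the $q_{el}$-terms cancel, whereas you obtain the same cancellation more structurally by noting $JV=q_{sc}\sigma_1-q_{am}\sigma_3+q_{el}J$ with $q_{el}J$ anti-Hermitian — a cleaner way to the identical differential inequality.
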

\begin{proof}
 If we write in coordinates the equation satisfied by $\psi$, we find:
 \[-\psi_\downarrow'+(q_{sc}+q_{el})\psi_\uparrow+q_{am}\psi_\downarrow=0\] and 
 \[\psi_\uparrow'+q_{am}\psi_\uparrow+(q_{el}-q_{sc})\psi_\downarrow=0.\]
 
 We have then, denoting $R(t)=|\psi_\uparrow(t)|^2+|\psi_\downarrow(t)|^2$, for almost every $t\in\R$,
  \begin{align*}R'(t)=&2\Re (\psi_\uparrow'\overline{\psi_\uparrow}+\psi_\downarrow'\overline{\psi_\downarrow})\\
  =&2\Re(-q_{am}|\psi_\uparrow|^2-(q_{el}-q_{sc})\psi_\downarrow\overline{\psi_\uparrow}+(q_{sc}+q_{el})\psi_\uparrow\overline{\psi_\downarrow}+q_{am}|\psi_\downarrow|^2)\\
  =&2\Re(q_{am}(|\psi_\downarrow|^2-|\psi_\uparrow|^2)+2q_{sc}\psi_\uparrow\overline{\psi_\downarrow}).
  \end{align*}
Hence, we have $|R'(t)|\leq 2(|q_{am}(t)|+|q_{sc}(t)|)|R(t)|$ and we conclude with Grönwall's lemma.
\end{proof}
We immediately get~\eqref{Gronwall1}, taking $V=V_{per}+V_\omega-E I_2$; we even have something independent of $E$.

Let us now turn to the proof of~\eqref{Gronwall2}. We use the following lemma.
\begin{lemma}\label{lem:Gronwall2}
Let $\psi_1=\begin{pmatrix}\psi_{1,\uparrow}\\\psi_{1,\downarrow}\end{pmatrix}$ and $\psi_2=\begin{pmatrix}\psi_{2,\uparrow}\\\psi_{2,\downarrow}\end{pmatrix}$ be solutions to $D_0\psi_i+V_i\psi_i=0$, where, for $i=1,2$, $V_i$ is a $L_{loc}^1$ function with values in 2-by-2 matrices, and such that, for some $y$, $\psi_1(y)=\psi_2(y)$. Then, we have, for all $x$ in $\R$,
\begin{equation}
 |\psi_1(x)-\psi_2(x)|\leq|\psi_2(y)| \exp\left(\int_{\min(y,x)}^{\max(y,x)}|V_1(t)|+|V_2(t)|\d t\right)\times\int_{\min(y,x)}^{\max(y,x)}|V_1(s)-V_2(s)|\d s.
\end{equation}

\end{lemma}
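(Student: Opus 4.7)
The plan is to reduce the problem to a scalar integral inequality for the difference $\phi = \psi_1 - \psi_2$ and then apply Grönwall's lemma twice. First I would observe that, by subtracting the two equations satisfied by $\psi_1$ and $\psi_2$, the function $\phi$ solves the inhomogeneous system
\[D_0 \phi + V_1 \phi = (V_1 - V_2)\,\psi_2, \qquad \phi(y) = 0.\]
Since $J^{-1}=-J$, this rewrites as $\phi'(x) = J V_1(x)\phi(x) + J(V_1(x)-V_2(x))\psi_2(x)$. Integrating from $y$ to $x$ (treating the case $x\geq y$; the opposite case is symmetric and accounts for the $\min$/$\max$ in the statement) and taking norms with $\|J\|=1$ yields
\[|\phi(x)| \leq \int_y^x |V_1(s)|\,|\phi(s)|\,\d s + \int_y^x |V_1(s)-V_2(s)|\,|\psi_2(s)|\,\d s.\]

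Next I would control the factor $|\psi_2(s)|$ appearing in the forcing term, by the same elementary argument used in Lemma~\ref{lem:Gronwall1}: the equation $\psi_2'=JV_2\psi_2$ gives $|\psi_2'|\leq |V_2|\,|\psi_2|$, so Grönwall's lemma provides
\[|\psi_2(s)| \leq |\psi_2(y)|\exp\Big(\int_y^s |V_2(t)|\,\d t\Big) \leq |\psi_2(y)|\exp\Big(\int_y^x |V_2(t)|\,\d t\Big)\]
for every $s\in[y,x]$. Pulling this $s$-independent upper bound out of the second integral leaves
\[|\phi(x)| \leq \int_y^x |V_1(s)|\,|\phi(s)|\,\d s + |\psi_2(y)|\exp\Big(\int_y^x |V_2(t)|\,\d t\Big)\int_y^x |V_1(s)-V_2(s)|\,\d s.\]

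A second application of Grönwall's lemma, in its integral form applied to the scalar function $|\phi|$ with an affine right-hand side, then produces the additional multiplicative factor $\exp(\int_y^x |V_1(t)|\,\d t)$, which combines with the previous exponential to yield the claimed $\exp(\int |V_1|+|V_2|)$ prefactor in front of $\int |V_1-V_2|$. There is no genuine obstacle here; the argument is a routine two-step Grönwall estimate, very much in the spirit of Lemma~\ref{lem:Gronwall1}. The only points requiring a bit of care are the sign $J^{-1}=-J$ used when converting to first-order form, and the bookkeeping that keeps $|V_2|$ in the exponent produced by bounding $|\psi_2|$ while $|V_1|$ enters the exponent produced by the final Grönwall step.
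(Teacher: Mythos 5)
Your proof is correct and follows essentially the same route as the paper: subtract the equations, integrate, take norms, bound $|\psi_2|$ via Lemma~\ref{lem:Gronwall1}, and apply Grönwall. The only difference is cosmetic — you pull the uniform bound on $|\psi_2|$ out of the forcing term before applying Grönwall, whereas the paper applies Grönwall first and bounds $|\psi_2(t)|\le|\psi_2(y)|\exp(\int_y^t|V_2|)$ afterwards; both orderings give the same result.
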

\begin{proof}
 We have \[D_0(\psi_1-\psi_2)=V_1\psi_1-V_2\psi_2=V_1(\psi_1-\psi_2)+(V_1-V_2)\psi_2.\]
 Thus, 
 \[(\psi_1-\psi_2)'=J^{-1}\left[V_1(\psi_1-\psi_2)+(V_1-V_2)\psi_2\right]\] and then, assuming without loss of generality that $y\leq x$,
 \[(\psi_1-\psi_2)(x)=J^{-1}\left[\int_y^xV_1(t)(\psi_1-\psi_2)(t)\d t+\int_y^x(V_1(t)-V_2(t))\psi_2(t)\d t\right].\]
 We get then that 
 \[|(\psi_1-\psi_2)(x)|\leq\left[\int_y^x|V_1(t)||\psi_1(t)-\psi_2(t)|\d t+\int_y^x|V_1(t)-V_2(t)||\psi_2(t)|\d t\right].\]
 Grönwall's lemma gives us that
 \[|(\psi_1-\psi_2)(x)|\leq\int_y^x|V_1(t)-V_2(t)||\psi_2(t)|\d t\exp \left(\int_y^x|V_1(s)|\d s\right).\]
 Finally, the previous lemma gives us that for all $t\in[y,x]$, \[|\psi_2(t)|\leq |\psi_2(y)|\exp\left(\int_y^t|V_2(s)|\d s\right),\] which concludes the proof.
\end{proof}
We directly get~\eqref{Gronwall2} by taking $V_1=V_{per}+V_\omega-E$ and $V_2=V_{per}+V_\omega-E'$. We have thus satisfied all the hypotheses of Theorem~\ref{thm:Boum}, which gives us Proposition~\ref{reguLyap}.

\section{Kotani theory and regularity of the integrated density of states}\label{sec:5}
We prove here Theorem~\ref{thm:reguDOS}, which is Hölder regularity of the density of states; we get this regularity from the one we got in the previous section for the Lyapunov exponent.

We recalled in the introduction that, in~\cite{Zdos}, we had defined the density of states measure by~\eqref{def-dos}.
Nevertheless, with this expression, it would be difficult to recover a Wegner estimate. As a consequence, we give here another definition. We restrict the operator on an interval with what we will call \emph{Dirichlet boundary conditions}. Note that they are not the usual Dirichlet conditions, where $\psi=0$ on the boundary, but we choose this name by analogy with the second-order case where we consider the vector $\begin{pmatrix}
 \psi\\ \psi'                                                                                                                                                                                                                                                                                                                                                                                                                                
                                                                                                                                                                                                                                            \end{pmatrix}$.

Given $x\in\R$ and $L>0$, we define the operator $H_{\omega,x,L}$ , called the \emph{restricted operator with Dirichlet boundary condition} to the interval $I=(x-L/2,x+L/2)$, as the operator acting as $H_\omega$ on the domain
\[\Dom(H_{\omega,x,L})=\{\psi=\begin{pmatrix}
                          \psi^\uparrow\\\psi^\downarrow
                         \end{pmatrix}\in H^1(I)\text{ such that } \psi^\uparrow(x-L/2)=\psi^\uparrow(x+L/2)=0\}.\] We will use the notation $H_{\omega,L}$ for $H_{\omega,0,L}$.

                         We define the density of states in the following way.
\begin{defi}
 For all bounded measurable $\phi$, the density of states is 
 \[\nu(\phi)=\lim_{L\to\infty}\frac1L\ec(\tr(\phi(H_{\omega,L}))).\]
\end{defi}
This definition is not the same as the one we have used in~\cite{Zdos}, but we prove in appendix~\ref{app:B} that both are equivalent
.
It is easy to see that $\nu$ is a positive linear functional on the bounded continuous functions and, thus comes from a measure on $\R$ which will be denoted by $\nu$ too.
Since our Hamiltonian is not bounded from below, we cannot define the integrated density of states in the usual way.
Instead of that, we take
\begin{equation}
 N(E)=\left\{\begin{array}{l}
        -\nu((E,0])\text{ if }E<0\\
        \nu((0,E])\text{ if } E\geq 0 
        \end{array}
 \right.
\end{equation}

We introduce too $N_0$, which is the integrated density of states for the free Dirac operator. An explicit calculation proves that the eigenvalues of this operator with Dirichlet boundary conditions on $[-L/2,L/2]$ are the $(k\pi/L)_{k\in\Z}$, and thus $N_0(E)=\frac{E}{\pi}$.


%


In order to get between the Lyapunov exponent and the integrated density of states, we prove the following result, called Thouless formula.
\begin{prop}
 Let $N$ and $N_0$ be the integrated densities of states, respectively of the operator $H_\omega$ and $D_0$, and $\gamma$ be the Lyapunov exponent for the operator $H_\omega$. Then, there exists $\alpha\in\R$ such that, for every $E\in\R$,
 \begin{equation}
  \gamma(E)=-\alpha+\int_\R\log\left|\frac{E-t}{t-\iu}\right|\d(N-N_0)(t).
 \end{equation}
\end{prop}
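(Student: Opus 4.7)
The plan is a finite-volume Hadamard-factorization argument, using the free Dirac operator as a natural regularization to handle the unboundedness of the Dirac spectrum. For $L>0$ and $\omega\in\Omega$, let $u_L(\cdot,z,\omega)$ denote the solution of $(H_\omega-z)u=0$ on $[-L/2,L/2]$ with initial data $u_L(-L/2,z,\omega)=(0,1)^T$, and set $\psi_L(z,\omega):=u_L^\uparrow(L/2,z,\omega)$. By the Grönwall estimate of Lemma~\ref{lem:Gronwall1}, $z\mapsto\psi_L(z,\omega)$ is entire of exponential type $O(L)$ uniformly in $\omega$ (hence of order $\le 1$), and its real zeros are exactly the Dirichlet eigenvalues $\{E_j^L(\omega)\}_j$ of $H_{\omega,L}$. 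For the free operator, an explicit computation gives $\psi_L^{(0)}(z)=\sin(zL)$, with zeros $\{k\pi/L\}_{k\in\Z}$.

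The first step is to relate $\gamma(E)$ to $\psi_L$. Since $G_\mu(E)$ is non-compact and strongly irreducible for $E\in I$ (Proposition~\ref{defM}), the classical theorems on projective products of strongly irreducible i.i.d.\ matrices (see~\cite[Chapter V]{BL}) give
\[
\lim_{L\to\infty}\frac{1}{L}\log\bigl|(U_E(L,\omega)v)_i\bigr|=\gamma(E)\quad\text{a.s.}
\]
for every unit vector $v$ and coordinate $i$. Specialising to $v=(0,1)^T$ and $i=1$ yields $\frac{1}{L}\log|\psi_L(E,\omega)|\to\gamma(E)$ a.s.; combined with the uniform Grönwall upper bound and a LePage-type large-deviation lower bound for strongly irreducible products, this upgrades to $\frac{1}{L}\ec\log|\psi_L(E,\omega)|\to\gamma(E)$.

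The core step is Hadamard factorization. Up to an explicit factor $z^{m_L(\omega)}$ accounting for a possible zero at the origin,
\[
\psi_L(z,\omega)=z^{m_L(\omega)}c_L(\omega)e^{a_L(\omega)z}\prod_{j:E_j^L\neq 0}\left(1-\frac{z}{E_j^L(\omega)}\right)e^{z/E_j^L(\omega)},
\]
and similarly $\sin(zL)=zL\prod_{k\neq 0}(1-zL/(k\pi))e^{zL/(k\pi)}$. Taking $\log|\cdot|$ at real $z=E$, subtracting, dividing by $L$, and using the algebraic identity
\[
\log|1-E/t|+\frac{E}{t}=\log\left|\frac{E-t}{t-\iu}\right|+\frac{1}{2}\log(1+t^{-2})+\frac{E}{t}
\]
to recast the Weierstrass factors into the regularized form, I will arrive at
\[
\frac{1}{L}\bigl(\log|\psi_L(E,\omega)|-\log|\psi_L^{(0)}(E)|\bigr)=C_L(\omega)+\beta_L(\omega)E+\int_\R\log\left|\frac{E-t}{t-\iu}\right|d(\mu_L^\omega-\mu_L^{(0)})(t),
\]
where $\mu_L^\omega=\frac{1}{L}\sum_j\delta_{E_j^L(\omega)}$, $\mu_L^{(0)}=\frac{1}{L}\sum_{k\in\Z}\delta_{k\pi/L}$, and $C_L,\beta_L$ collect the $E$-independent and linear-in-$E$ residual terms respectively. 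Taking expectation and letting $L\to\infty$: the LHS tends to $\gamma(E)$ (the free Lyapunov exponent being $0$); by Appendix~\ref{app:B}, $\ec\mu_L^\omega\to dN$ and $\mu_L^{(0)}\to dN_0$ weakly, so the integral converges to $\int_\R\log|(E-t)/(t-\iu)|d(N-N_0)(t)$; $\ec[C_L]\to-\alpha$ for some $\alpha\in\R$; and $\ec[\beta_L]\to 0$. This last vanishing I would verify by analytically continuing the identity to $z\in\C^+$ and matching imaginary parts (both sides extend to boundary values of Herglotz-type functions whose Stieltjes representations are determined by $d(N-N_0)$ alone, which forbids a non-trivial linear term).

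The main obstacle will be the passage to the weak limit in the integral against the unbounded integrand $\log|(E-t)/(t-\iu)|$, which is $O(1/|t|)$ at infinity. A uniform tail estimate of the form $\sup_{L,t\in\R}\ec[\mu_L^\omega([t,t+1])]<\infty$ (and the same for $\mu_L^{(0)}$) is needed, and this follows from a min-max comparison of $H_{\omega,L}$ with $D_0$: the uniform $L^\infty$-bound on $V_{per}+V_\omega$ constrains each unit-length interval to contain $O(1)$ Dirichlet eigenvalues uniformly in $L$, $t$, and $\omega$. A secondary technical point is the control of $\beta_L$, where the formal sum $\frac{1}{L}\sum_j 1/E_j^L(\omega)$ converges only as a principal value; careful symmetric pairing of eigenvalues combined with ergodic averaging produces the required cancellation in the limit.
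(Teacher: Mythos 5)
Your proposal takes a genuinely different route from the paper. You propose a finite-volume Hadamard-factorization argument: work with $\psi_L(z,\omega)=u_L^\uparrow(L/2,z,\omega)$, whose zeros are the Dirichlet eigenvalues, factorize it via Hadamard, subtract the free case, divide by $L$, and pass to the limit. The paper instead works entirely in infinite volume via the Kotani $w$-function of Sadel and Schulz-Baldes: it uses $\gamma(z)=-\Re(w(z))$ together with $\partial_z w(z)=\eb(\tr G(z))$ and the identity $\eb(\tr G(z))=\int \d\nu(E)/(E-z)$ (Lemma~\ref{lem:G-IDS}), then integrates and takes real and imaginary parts. Both are recognized strategies for Thouless formulas, but the $w$-function approach has a decisive advantage here: the Herglotz structure of $w-w_0$ automatically forbids any spurious affine term and couples the real part (Lyapunov exponent) to the imaginary part (density of states), so the formula follows by one integration by parts.

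The substantive gap in your proposal is the claim $\ec[\beta_L]\to 0$. Because the Dirichlet eigenvalues of $H_{\omega,L}$ grow like $k\pi/L$, the function $\psi_L$ has genus $1$, and the Weierstrass factors $e^{z/E_j^L}$ are genuinely needed; the resulting linear-in-$E$ coefficient $\beta_L(\omega)=\Re(a_L)/L+\frac{1}{L}\sum_j 1/E_j^L$ is only conditionally convergent, is $O(1)$ in general (a single eigenvalue near the origin contributes $O(1)$ after dividing by $L$, and the far tail of symmetric-pairing deviations contributes another $O(1)$), and there is no obvious reason it should vanish in expectation. Your suggested fix — analytic continuation to $\C^+$ and matching imaginary parts of Herglotz-type representations — is exactly the content of the paper's argument via $w$; carried out rigorously it would replace, not supplement, your Hadamard calculation. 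Two further points need work as well: upgrading $\frac{1}{L}\log|\psi_L(E,\omega)|\to\gamma(E)$ from a.s.\ to $L^1$ convergence requires a uniform lower bound on $\ec[\log^-|\psi_L(E,\omega)|]$ (a LePage-type large-deviation estimate is indeed needed, but you must state it in a form that survives the Weierstrass-factor regularization and the tail of the eigenvalue sum); and your argument only applies for $E$ where Fürstenberg holds, i.e.\ $E\notin M$, whereas the statement is asserted for all $E\in\R$ — an extra continuity or subharmonicity argument is required to remove the exceptional discrete set. Your tail estimate $\sup_{L,t}\ec[\mu_L^\omega([t,t+1])]<\infty$ via Hellmann--Feynman/min-max is fine and corresponds to the paper's lemma bounding $|N-N_0|$; the Weierstrass-to-regularized-form identity is also correctly stated.
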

The original formula for  Schrödinger operators, in the discrete case, is given by $\gamma(E)=-\alpha+\int_\R\log\left|{E-t}\right|\d N(t)$\cite[Proposition~VI.4.3]{CL}. Nevertheless, it is easy to see that the integral is not convergent in our case. Several methods have been used in the case of continuous Schrödinger operators: one of them is to introduce the normalization term $t-\iu$~\cite{DSS}, another one is to consider the differences $N-N_0$ and $\gamma-\gamma_0$ \cite{AS}. None of these two methods provides a convergent integral in our case, but a combination of both works. Note that, since the free Dirac operator has purely absolutely continuous spectrum on the whole real line, its Lyapunov exponent is identically zero (cf.~\cite[Corollary~VII.3.4]{CL}). That is why, in most of the cases, we do not write $\gamma_0$ in the formulas.
In all the following, the splitting between $V_{per}$ and $V_\omega$ will not be used. Hence, we define $W_\omega=V_{per}+V_\omega$.
We have the following lemma.
\begin{lemma}
 There exists a constant $C$, depending only on $\|W_\omega\|_\infty$, such that for all $E\in\R$, we have that
 \begin{equation}
  |N(E)-N_0(E)|\leq C.
 \end{equation}
 \end{lemma}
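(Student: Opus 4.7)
The plan is to bound, for each fixed $E$ and each $L > 0$, the difference between the number of Dirichlet eigenvalues of $H_{\omega, L}$ in the window with endpoints $0$ and $E$ and the corresponding count for the free restricted operator $D_{0, L}$, then divide by $L$ and let $L \to \infty$.

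Set $M := \|W_\omega\|_\infty$ (which can be taken deterministic, since the potential is uniformly bounded in $\omega$). The main input would be a pair of variational comparisons of spectral projections: for any $a < b$,
\[
\dim \Ran P_{(a, b]}(H_{\omega, L}) \leq \dim \Ran P_{[a - M,\, b + M]}(D_{0, L}),
\]
and, symmetrically, $\dim \Ran P_{(a, b]}(H_{\omega, L}) \geq \dim \Ran P_{(a + M,\, b - M]}(D_{0, L})$ whenever $b - a > 2M$. I would prove both by the following one-line argument: any unit vector $v \in \Ran P_{(a, b]}(H_{\omega, L})$ satisfies $\langle H_{\omega, L} v, v\rangle \in [a, b]$, hence $\langle D_{0, L} v, v\rangle \in [a - M, b + M]$; if the stated inequality failed, one could pick such a $v$ in the orthogonal complement of $\Ran P_{[a - M, b + M]}(D_{0, L})$, producing a vector with $\langle D_{0, L} v, v\rangle \notin [a - M, b + M]$, a contradiction.

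Specialising to $(a, b] = (0, E]$ for $E \geq 0$ (the case $E \leq 0$ is handled symmetrically with $(E, 0]$), and using that the Dirichlet eigenvalues of $D_{0, L}$ are $k\pi/L$ for $k \in \Z$, the counting function $\dim \Ran P_{[\alpha, \beta]}(D_{0, L})$ is bounded by $L(\beta - \alpha)/\pi + 1$. Combining this with the two variational inequalities yields
\[
\bigl|\dim \Ran P_{(0, E]}(H_{\omega, L}) - \dim \Ran P_{(0, E]}(D_{0, L})\bigr| \leq 2 M L / \pi + 2.
\]
Dividing by $L$, taking expectation, and letting $L \to \infty$ gives $|N(E) - N_0(E)| \leq 2M/\pi$, which is the claimed uniform bound with $C = 2 \|W_\omega\|_\infty / \pi$.

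The main conceptual obstacle is that $H_{\omega, L}$ and $D_{0, L}$ are both unbounded below, so the usual min-max form ``the $k$-th eigenvalue of a bounded perturbation stays within $M$ of the $k$-th eigenvalue of the unperturbed operator'' is not directly available, since there is no natural way to label the eigenvalues starting from the bottom. The variational argument above sidesteps this by comparing dimensions of spectral projections on bounded intervals, which is exactly the information needed to compare $N$ and $N_0$; no individual eigenvalue labelling is required.
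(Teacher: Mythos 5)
Your overall strategy (compare the number of Dirichlet eigenvalues of $H_{\omega,L}$ in a window with the corresponding count for $D_{0,L}$, divide by $L$, let $L\to\infty$) is the same one the paper uses, and the claimed counting inequality
\[
\dim \Ran P_{(a,b]}(H_{\omega,L}) \leq \dim \Ran P_{[a-M,\,b+M]}(D_{0,L})
\]
is indeed true. However, your variational proof of it has a genuine gap, and it is precisely the gap you identify at the end of the proposal as ``the main conceptual obstacle''. You claim to sidestep the unboundedness from below, but the one-line argument does not actually do so.

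The problematic step is the assertion that a unit vector $v$ orthogonal to $\Ran P_{[a-M,b+M]}(D_{0,L})$ must satisfy $\langle D_{0,L} v, v\rangle \notin [a-M,b+M]$. For a semi-bounded operator this is true, because such a $v$ is entirely supported on one side of the interval. For $D_{0,L}$ it is false: $v$ can have components far above and far below the interval whose contributions to $\langle D_{0,L} v, v\rangle$ cancel. Concretely, take $[a-M,b+M]$ to be a tiny interval around $0$ containing only the eigenvalue $0$ of $D_{0,L}$, and let $v = (\psi_1 + \psi_{-1})/\sqrt{2}$ with $\psi_{\pm 1}$ eigenvectors for $\pm\pi/L$. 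Then $v\perp\Ran P_{[a-M,b+M]}(D_{0,L})$ yet $\langle D_{0,L} v, v\rangle = 0 \in [a-M,b+M]$. So the contradiction never materialises, and the variational comparison of spectral projections, as written, does not yield the counting inequality.

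The paper proves the same counting inequality, but in a way that genuinely avoids any labelling from the bottom: it introduces the family $H_{\omega,L,\theta}=D_{0,L}+\theta W_\omega$, uses the Hellmann--Feynman formula $\frac{\d E_k}{\d\theta}=\langle\psi_k(\theta),W_\omega\psi_k(\theta)\rangle$ to show that each eigenvalue branch moves at most $\|W_\omega\|_\infty$ as $\theta$ goes from $0$ to $1$, and then counts branches. If you want to keep a purely variational flavour you would need a different device (for instance a Birman--Schwinger-type argument, or applying min--max to a semi-bounded auxiliary operator), but the quadratic-form comparison alone is not enough once the spectrum is unbounded in both directions. The rest of your argument (the eigenvalue count $N_0$ for the free Dirichlet operator, dividing by $L$, passing to the limit, and the resulting constant $2\|W_\omega\|_\infty/\pi$) matches the paper and is fine once the counting inequality is properly established.
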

\begin{proof}
 We will give a proof that uses the Hellmann-Feynman formula. Indeed, the restricted operator has discrete spectrum. We will give the proof for $E>0$, it is similar for $E<0$. For $\theta\in[0,1]$, let $H_{\omega,L,\theta}=D_0+\theta W_\omega$ on $[-L/2,L/2]$ with Dirichlet boundary conditions. Then, there exist continuous functions $E_k(\cdot)$, defined on $[0,1]$, such that the spectrum of $H_{\omega,L,\theta}$ is the set $\{E_k(\theta),k\in\Z\}$ for all $\theta$. Moreover, we have that each of these functions is differentiable and that, denoting by $\psi_k(\theta)$ a \emph{normalized} eigenfunction associated with $E_k(\theta)$, \[\frac{\d E_k}{\d \theta}=\langle \psi_k(\theta), W_\omega\psi_k(\theta)\rangle.\] Thus, the modulus of the derivative is smaller than $\|W_\omega\|_\infty$. Then, the number of eigenvalues of $H_{\omega,L,1}$ in $[0,E]$ cannot be larger than the number of eigenvalues of $H_{\omega,L,0}$ in $[-\|W_\omega\|_\infty, E+\|W_\omega\|_\infty]$, and thus
 \[N(E)\leq N_0(E+\|W_\omega\|_\infty)-N_0(-\|W_\omega\|_\infty)=\frac{1}{\pi}(E+2\|W\|_\infty).\]
 Symmetrically, we have that for all $E'$, \[N_0(E')\leq N(E'+\|W_\omega\|_\infty)-N(-\|W_\omega\|_\infty)\] and thus
 \[N(E)\geq N_0(E-\|W_\omega\|_\infty)+N(-\|W_\omega\|_\infty).\]
\end{proof}

In the following, we need to have an $\R$-ergodic operator. In \cite{Kirsch}, Kirsch proves that, given a $\Z$-ergodic operator $H_\omega$, we can construct an $\R$-ergodic operator $\overline{H}_{\overline{\omega}}$, on a wider probability space, with, for each $\overline{\omega}$,  $\overline{H}_{\overline{\omega}}$ is unitarily equivalent to $H_\omega$ for some $\omega$.

Since we apply Kirsch's suspension procedure, the expectation must here be understood as both expectation on $\Omega$ and average value on $[0,1]$ for the potentials. It will be denoted by $\eb$.

The next step towards the Thouless formula is to introduce a tool named Kotani $w$-function. We use it in the way it has been introduced by Sadel and Schulz-Baldes for Dirac operators in \cite{SSB}.

For the operator we have introduced in the beginning, we denote by $m_+$ the Weyl-Titchmarsh function on $(0,+\infty)$, as it is defined in Appendix~\ref{app:A}. If we can decompose $W_\omega=W_{\omega,am}\sigma_1+W_{\omega,sc}\sigma_3+W_{\omega,el}I_1$, we define the function
\[w(z)=-\eb(W_{\omega,am}+m_+(z)(W_{\omega,el}-W_{\omega,sc}-z)).\]

Similarly, we introduce the corresponding function for the free Dirac operator $D_0$:
\[w_{0}(z)=-zm_{0}(z)\]
where $m_{0}(z)=\iu$  is the Weyl function associated with the free operator.

This function will provide us a link between the Lyapunov exponent and the integrated density of states through the Green's function. We recall that the Green's function $G(z,\cdot,\cdot)$ is the integral kernel of the resolvent $(H_\omega-z) ^{-1}$.  We start by  the following theorem of Sadel and Schulz-Baldes, which links the Green's function and the Lyapunov exponent. Here $G(z)=G(z,x,x)$, and we drop the dependence in $x$ since we take the expectation $\eb$.
\begin{theorem}[\cite{SSB}, Theorem~5]\label{thm:SSB}
 Let $\Im(z)\neq0$. Then,
 \begin{enumerate}
  \item $\gamma(z)=-\Re(w(z))$.
  \item $\partial_zw(z)=\eb(\tr G(z))$.
 \end{enumerate}
\end{theorem}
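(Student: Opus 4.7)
The plan is to establish both identities via the standard Kotani-theoretic dictionary between Weyl solutions, the Weyl--Titchmarsh function and transfer matrices, adapted to the first-order matrix structure of the Dirac equation. I would work throughout on Kirsch's $\R$-ergodic suspension $\overline{H}_{\overline\omega}$ of $H_\omega$, so that spatial translation acts by measure-preserving transformations and the Weyl function, viewed at translated base-point, becomes a stationary process in $x$.

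For (1), let $\psi_+(\cdot,z)=(u,v)^T$ denote the Weyl solution at $+\infty$ normalised by $u(0)=1$, so that $v(0)=m_+(z)$. Writing the Dirac system $J\psi_+'+W_\omega\psi_+=z\psi_+$ in coordinates, with $W_\omega=W_{\omega,am}\sigma_1+W_{\omega,sc}\sigma_3+W_{\omega,el}I_2$, yields
\[
 u'(x)=-W_{\omega,am}(x)\,u(x)+\bigl(z+W_{\omega,sc}(x)-W_{\omega,el}(x)\bigr)\,v(x),
\]
so that, introducing the pointwise ratio $m(x,z):=v(x)/u(x)$,
\[
 (\log u)'(x,z)=-W_{\omega,am}(x)+\bigl(z+W_{\omega,sc}(x)-W_{\omega,el}(x)\bigr)\,m(x,z).
\]
Integrating from $0$ to $L$ and dividing by $L$, the left-hand side tends almost surely to $-\gamma(z)$ because for $\Im z\neq 0$ the solution $\psi_+$ is, by unique decay at $+\infty$, aligned with the stable direction in Oseledec's decomposition of the cocycle generated by the transfer matrices. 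The right-hand side, by Birkhoff applied to the stationary process $x\mapsto m(x,z)$, converges almost surely to $\eb[-W_{\omega,am}+(z+W_{\omega,sc}-W_{\omega,el})m_+(z)]=w(z)$. Taking real parts gives $\gamma(z)=-\Re w(z)$.

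For (2), the strategy is to write the diagonal Green's function $G(z,x,x)$ in terms of the Weyl solutions $\psi_\pm$ at $\pm\infty$: normalising them so that their Wronskian equals one, $G(z,x,x)$ becomes an explicit rank-one combination of $\psi_+(x)$ and $\psi_-(x)$, whose trace is a rational function of $m_\pm(x,z)$ and the potential values at $x$. On the other side, differentiating in $z$ the Riccati equation satisfied by $m_+$ and combining with the identification $w(z)=\eb[(\log u)'(0,z)]$ derived above, one expresses $\partial_z w(z)$ as an ergodic spatial average of exactly the same rational expression in $m_\pm$. Matching the two expressions then yields $\partial_z w(z)=\eb(\tr G(z))$.

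The main obstacle is the almost-sure identification in (1) of the exponential decay rate of the Weyl solution with the Lyapunov exponent $\gamma(z)$ defined from the expectation of transfer matrix norms: this requires Oseledec's multiplicative ergodic theorem, together with the verification that the Weyl direction avoids the one-dimensional unstable subspace, which in this setting follows from $\gamma(z)>0$ whenever $\Im z\neq 0$. A secondary book-keeping difficulty, specific to the Dirac setting, is carrying the three potential components and the matrix structure of $G$ through step (2): unlike the scalar Schrödinger case, the off-diagonal entries contribute nontrivially and must cancel in the trace in order to produce the clean formula $\partial_z w=\eb\tr G$.
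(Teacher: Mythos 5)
Note first that the paper does not prove this statement: it is imported verbatim as Theorem~5 of Sadel--Schulz-Baldes and used as a black box, so there is no in-paper argument to compare against. What you have written is a sketch of how one \emph{would} prove it.

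Your outline of part (1) is in line with the expected Kotani-theoretic route: write the Dirac system in coordinates, pass to the logarithmic derivative of the first component of the Weyl solution, average in $x$ on Kirsch's $\R$-ergodic suspension, and identify the almost-sure decay rate of the Weyl solution with $\gamma(z)$ via Oseledec. The coordinate computation does reproduce the paper's $w$-function, $w(z)=-\eb\bigl(W_{\omega,am}+m_+(z)(W_{\omega,el}-W_{\omega,sc}-z)\bigr)$, and taking real parts gives the claim. Two points are silently assumed and would have to be supplied in a complete proof: the integrability $\eb|m_+(z)|<\infty$ needed to invoke Birkhoff, and the fact that $\tfrac1L\log|u(L)|$ (the \emph{first component}, not the norm) converges to $-\gamma(z)$, i.e.\ that the Weyl direction does not collapse onto the second coordinate axis along a positive-measure set.

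Part (2), however, is not a proof but a program. You name the ingredients --- the diagonal Green's kernel as a rank-one combination of $\psi_\pm$ divided by the Wronskian, the Riccati equation for $m_+$, differentiation in $z$, and a matching of ergodic averages --- but you carry out none of the computations, and in your closing paragraph you explicitly defer the matrix bookkeeping (``the off-diagonal entries contribute nontrivially and must cancel in the trace'') as a ``secondary difficulty.'' That cancellation is not secondary: it is precisely the content of the identity $\partial_z w(z)=\eb(\tr G(z))$ in the Dirac setting. Without writing down $\tr G(z,0,0)$ in terms of $m_\pm$, differentiating the Riccati equation $\partial_x m_+=(W_{sc}+W_{el}-z)-2W_{am}m_+-(z+W_{sc}-W_{el})m_+^2$ in $z$, and verifying term-by-term that the stationary averages coincide, one has not established (2); one has only described what a proof should look like. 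As it stands the argument for (2) has a genuine gap.
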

These equalities are still valid when the potential is zero, i.e. for $w_0$, $G_0$ and $\gamma_0$.

To get the link with the integrated density of states, we prove the following lemma:
\begin{lemma}\label{lem:G-IDS}
 Let $z\in\C\backslash\R$. Then,
 \[\eb(\tr(G(z)))=\int_\R\frac{\d\nu(E)}{E-z}.\]
\end{lemma}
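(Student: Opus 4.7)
The approach is to evaluate the density of states on the test function $\phi_z(E) := (E-z)^{-1}$, which is bounded and continuous on $\R$ for $z \in \C \setminus \R$, with $\|\phi_z\|_\infty = 1/|\Im z|$. Applying the Dirichlet-restricted definition of $\nu$ and the functional calculus,
\[
\int_\R \frac{\d\nu(E)}{E-z} \;=\; \nu(\phi_z) \;=\; \lim_{L\to\infty}\frac{1}{L}\,\eb\!\left(\tr\bigl((H_{\omega,L}-z)^{-1}\bigr)\right),
\]
so the task reduces to identifying the right-hand side with $\eb(\tr G(z))$.

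For each $L$, the restricted resolvent has a continuous integral kernel $G_L(z,x,y)$, and the trace decomposes as $\int_{-L/2}^{L/2} \tr G_L(z,x,x)\,\d x$, the inner trace being that of a $2\times 2$ matrix. I would then compare $G_L$ to the infinite-volume Green's function $G$ via a Combes--Thomas type resolvent estimate valid for $\Im z \neq 0$: with constants depending only on $|\Im z|$ and $\|V_{per}+V_\omega\|_\infty$, the difference $G_L(z,x,x) - G(z,x,x)$ decays exponentially in $\dist(x,\{\pm L/2\})$. After excising a boundary layer of width $O(\log L)$, one may replace $G_L$ by $G$ with total error $o(L)$ in the integral. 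After Kirsch's $\R$-suspension procedure (already invoked in the present section), $x \mapsto \tr G(z,x,x)$ is a stationary ergodic process, uniformly bounded by $C/|\Im z|$; Birkhoff's ergodic theorem then gives
\[
\lim_{L\to\infty}\frac{1}{L}\int_{-L/2}^{L/2}\tr G(z,x,x)\,\d x \;=\; \eb(\tr G(z,0,0)) \;=\; \eb(\tr G(z)),
\]
almost surely, and dominated convergence transports the equality under $\eb$.

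The main obstacle is that in one dimension the eigenvalues of $H_{\omega,L}$ grow linearly, so $\sum_k|E_k-z|^{-1}$ is only conditionally convergent and $\tr((H_{\omega,L}-z)^{-1})$ must be given a principal-value (or subtracted) meaning; similarly $\int \d\nu(E)/(E-z)$ is not absolutely convergent. I would resolve this by carrying out the whole argument for the difference of resolvents $R(z)-R(\iu)$, which by the resolvent identity is a product of two Hilbert--Schmidt operators per $\chi_L$-window, hence genuinely trace-class, and for which the corresponding Borel-transform integral $\int \d\nu(E)\bigl[(E-z)^{-1}-(E-\iu)^{-1}\bigr]$ is absolutely convergent thanks to the bound $|N-N_0|\leq C$ proved just above. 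Once the identity is established for this difference, the additive constant is fixed by an explicit computation at the base point $z=\iu$, where for the free operator $G_0(\iu,x,x)$ is translation-invariant and the formula can be checked directly against $\eb(\tr G_0(\iu)) = -\iu m_0(\iu) = 1$, yielding the stated equality for every $z \in \C\setminus\R$.
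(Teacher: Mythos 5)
Your approach is sound in spirit but takes a genuinely different, and considerably heavier, route than the paper's. The paper bypasses the finite-volume comparison entirely: by $\R$-ergodicity, for every $l>0$ and every bounded Borel set $B$ one has $\nu(B)=\eb\bigl(\tfrac1l\tr(\chi_l E_\omega(B)\chi_l)\bigr)$, and letting $l\to 0$ (using continuity of the spectral-projection kernel) this collapses to $\nu(B)=\eb\bigl(\tr(\delta_0 E_\omega(B)\delta_0)\bigr)$. Integrating $(E-z)^{-1}$ against $\nu$ and interchanging with $\eb$ and the $2\times 2$ trace then gives the result directly, with no Combes--Thomas comparison, no Birkhoff theorem applied to the diagonal Green's function, and no handling of the non-trace-class finite-volume resolvent. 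What your route buys is an explicit treatment of the conditional-convergence issue, which the paper's proof glosses over: since $\nu(\d E)\sim \pi^{-1}\d E$ at $\pm\infty$, the integral $\int\d\nu(E)/(E-z)$ is only conditionally convergent and $\tr\bigl((H_{\omega,L}-z)^{-1}\bigr)$ is only a conditionally summable series, so some subtraction is genuinely needed to make the Dirichlet-box definition apply to $\phi_z$.

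Your constant-fixing step, however, has a real gap. The subtracted identity pins down $\eb\tr G(z)-\int\d\nu(E)/(E-z)$ only up to a $z$-independent constant $c$, and $c$ depends on $H_\omega$, not on $D_0$; verifying the formula for the \emph{free} operator at $z=\iu$ therefore does not determine $c$. To close this you would have to subtract the analogous identity for $D_0$ as well, then let $z\to\infty$ along the imaginary axis, where both $\eb\tr\bigl(G(z)-G_0(z)\bigr)$ and $\int\d(\nu-\nu_0)(E)/(E-z)$ tend to $0$ (the latter because $|N-N_0|\leq C$ gives $\int (N-N_0)(E)(E-z)^{-2}\d E\to 0$ after integration by parts); only then does a single free-operator computation finish the job. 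Beware also that $\tr G_0(z)$ itself does \emph{not} vanish as $z\to\infty$ -- for the free Dirac operator it is a nonzero constant in $z$, consistent with $\int\d\nu_0(E)/(E-z)$ being a nonzero constant principal value -- so a naive large-$z$ limit applied to $\eb\tr G(z)$ alone, without subtracting the free quantity, would not normalize the constant.
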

\begin{proof}
 We follow the proof given in \cite{polyHB} for Schrödinger operators. For $l>0$, we denote by $\chi_l$ the characteristic function of $[-l/2,l/2]$.
 $\R$-ergodicity implies that, 
  for all $l>0$ and all bounded borelian set $B$, \[\nu(B)=\eb(\frac1l\tr(\chi_lE_\omega(B)\chi_l)).\]
 By taking the limit $l\to0$, we get that
 \[\nu(B)=\eb(\tr(\delta_0E_\omega(B)\delta_0)).\]
 The operator in the trace is well defined on $H^1(\R,\C^2)$ since this space is included in $\mathcal{C}(\R,\C^2)$.
 Thus,
 \begin{align*}
  \int_\R\frac{\d\nu(E)}{E-z}&=\int_\R\frac{1}{E-z}\d \eb(\tr(\delta_0E_\omega\delta_0))(E)\\
  &=\eb(\tr(\delta_0\int_\R\frac{1}{E-z}\d E_\omega(E)\delta_0))\\
  &=\eb(\tr(\delta_0(H_\omega-z)^{-1}\delta_0))\\
  &=\eb(\tr (G(z))).
 \end{align*}
\end{proof}

The second point of Theorem~\ref{thm:SSB} and Lemma~\ref{lem:G-IDS} can be combined to get  a link between the imaginary part of the $w$ functions and the integrated densities of states.
\begin{lemma}
 There exists $c\in\R$ such that, for all $E\in\R$, we have
 \begin{equation}
  \lim_{a\to0^+} \Im\left(w(E+\iu a)-w_0(E+\iu a)\right)=\pi(N(E)-N_0(E))+c.
 \end{equation}
\end{lemma}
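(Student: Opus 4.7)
The plan is to use Theorem~\ref{thm:SSB}(2) and Lemma~\ref{lem:G-IDS} to express $\partial_z(w-w_0)$ as a Cauchy-type integral against $\d(N-N_0)$, then antidifferentiate in $z$ and recover the statement by taking boundary values $a\to 0^+$ via the Poisson kernel. The bound $|N-N_0|\leq C$ from the previous lemma is the ingredient that makes several otherwise conditionally convergent expressions tractable.

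First, combining Theorem~\ref{thm:SSB}(2) applied to both $H_\omega$ and $D_0$ with Lemma~\ref{lem:G-IDS} gives, for $\Im z\neq 0$,
\[
\partial_z(w-w_0)(z)
\;=\;\eb(\tr G(z))-\eb(\tr G_0(z))
\;=\;\int_\R\frac{\d(N-N_0)(E')}{E'-z}.
\]
Neither Stieltjes transform converges absolutely on its own, since $\nu$ and $\nu_0$ both have linear growth, but in the difference $N-N_0$ is bounded. Hence integration by parts on $[-M,M]$ kills the boundary terms in the limit $M\to\infty$ and yields
\[
\partial_z(w-w_0)(z)\;=\;\int_\R\frac{N(E')-N_0(E')}{(E'-z)^2}\,\d E',
\]
which is now absolutely convergent for $\Im z\neq 0$.

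To antidifferentiate, I would introduce the regularized Cauchy integral
\[
F(z)\;=\;\int_\R(N(E')-N_0(E'))\left(\frac{1}{E'-z}-\frac{E'}{E'^2+1}\right)\d E'.
\]
The subtraction makes the integrand $O(E'^{-2})$ at infinity, so $F$ is well-defined and analytic on each open half-plane, with $F'(z)=\partial_z(w-w_0)(z)$. On the upper half-plane one therefore has $(w-w_0)(z)=F(z)+C$ for some $C\in\C$. Specializing to $z=E+\iu a$, the real term $E'/(E'^2+1)$ drops out of the imaginary part, leaving
\[
\Im(w-w_0)(E+\iu a)\;=\;\int_\R(N(E')-N_0(E'))\,\frac{a}{(E'-E)^2+a^2}\,\d E'\;+\;\Im C.
\]
By the classical approximation-to-the-identity property of the Poisson kernel, this tends to $\pi(N(E)-N_0(E))$ as $a\to 0^+$ at every continuity point of $N-N_0$. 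Since $N_0$ is linear and $N$ is continuous under the present hypotheses (a standard consequence of ergodicity with a non-trivial distribution in one dimension), the limit holds for every $E\in\R$, and the choice $c:=\Im C$ concludes the proof.

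The main subtlety is the regularization bookkeeping: the Stieltjes transform against $\d\nu$ alone diverges, so one must work with the difference $N-N_0$ throughout and introduce the counterterm $E'/(E'^2+1)$ in $F$ in order to antidifferentiate on a simply connected domain without introducing a branch cut. Once that is in place, the conclusion is essentially the Poisson representation of a harmonic function on the upper half-plane applied to the bounded function $N-N_0$.
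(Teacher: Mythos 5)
Your proposal takes essentially the same route as the paper: differentiate $w-w_0$ via Theorem~\ref{thm:SSB}(2) and Lemma~\ref{lem:G-IDS}, integrate by parts, antidifferentiate with the regularizing counterterm (your $\tfrac{1}{E'-z}-\tfrac{E'}{E'^2+1}$ is algebraically identical to the paper's $\tfrac{1+E'z}{(E'-z)(1+E'^2)}$), then take the imaginary part and pass to the boundary via the Poisson kernel. The only cosmetic difference is in how the final boundary limit is justified (you invoke continuity of $N$ as an external fact, whereas the paper appeals to right-continuity of the CDFs), but the structure and all the key identities coincide.
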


\begin{proof}
 According to the second point of Theorem~\ref{thm:SSB}, we know that, for all $z\in\C$ with $\Im(z)>0$, $w'(z)=\eb(G(z))$. With the previous lemma, it gives that
 \[w'(z)-w_0'(z)=\int_\R\frac{\d (\nu-\nu_0)(E')}{E'-z}.\]
 With an integration by parts, in which we use the fact that $(N(E)-N_0(E))/(E-z)$ tends to 0 in $\pm\infty$ since the numerator is bounded, we get that
 \[w'(z)-w_0'(z)=\int_\R\frac{N(E')-N_0(E')}{(E'-z)^2}\d E'.\]
 
 If we integrate, we find that
 \begin{equation}\label{eq:w}w(z)-w_0(z)=c+\int_\R\frac{1+E'z}{(E'-z)(1+E'^2)}(N(E')-N_0(E'))\d E'.\end{equation}
 
 Taking the imaginary part, for $z=E+\iu a$, we have that
 \begin{align*}\Im(w(E+\iu a)-w_0(E+\iu a))&=c'+a\int_\R\frac{N(E')-N_0(E')}{(E-E')^2+a^2}\d E'\\
 &=c'+\int_\R\frac{N(E+au)-N_0(E+au)}{1+u^2}\d u.
 \end{align*}
In the last line, we have done the change of variable $u=(E'-E)/a$.

Since $N$ and $N_0$ are right-continuous (as they are cumulative distribution functions), we get
\[\lim_{a\to 0+}\Im(w(E+\iu a)-w_0(E+\iu a))=(N(E)-N_0(E))\int_\R\frac{1}{1+u^2}\d u =\pi(N(E)-N_0(E)).\]
\end{proof}

We can now conclude the proof of the Thouless formula using the first point of Theorem~\ref{thm:SSB}.
We know that, for all $z$ with $\Im(z)\neq 0$, $\gamma(z)=\gamma(z)-\gamma_0(z)=-\Re(w(z)-w_0(z)).$
Thus, we get from~\eqref{eq:w} that 
\[\gamma(z)=-c-\Re\left(\int_\R\frac{1+E'z}{(E'-z)(1+E'^2)}(N(E')-N_0(E'))\d E'\right).\]

If we integrate by parts, we find that 
\[\gamma(z)=-c+\Re\left(\int_\R\log\left|\frac{t-z}{t-\iu}\right|\d(\nu-\nu_0)(t)\right).\]
We can get rid of the terms with $\lim_{t\to\pm\infty}\log\left|\frac{t-z}{t-\iu}\right|(N(t)-N_0(t))$ since the first factor goes to 0 in $\pm\infty$ and the second one is bounded. We see that the integral is real.

It remains to take the limit when $z$ becomes real. For the right-hand side, the pointwise convergence of the integrand is clear. If $z=\iu y$ (for positive $y$),  then the convergence as $y$  goes to 0 is monotone. In the other cases, we use the fact that $\log\left|\frac{t-E-\iu a}{t-\iu}\right|=\log\left|\frac{t-E-\iu a}{t-E-\iu}\right|+\log\left|\frac{t-E-\iu}{t-\iu}\right|$ to conclude.

The limit in the left-hand side is proven as in Proposition~8.2.14 of \cite{polyHB}.

With the Thouless formula, we can prove Hölder regularity of the integrated density of states with the same arguments as~\cite{DSS}. We give the proof here for the sake of completeness.

The tool is the Hilbert transformation defined, for a square-integrable function $\psi$, by
\[(T\psi)(x)=\frac1\pi\lim_{\epsilon\to0^+}\int_{|x-t|>\epsilon}\frac{\psi(t)}{x-t}\d t.\]

We use the two following results on this transformation, the proof of which being in \cite{CL}:
\begin{enumerate}
 \item For all $\psi\in L^2(\R)$, $T\psi$ is a square-integrable function and $T^2\psi=-\psi$ almost everywhere with respect to Lebesgue measure.
 \item If a function $\psi$ is Hölder-continuous on some interval $[x_0-a,x_0+a]$, then $T\psi$ is Hölder-continuous on $[x_0-\frac a2,x_0+\frac a2]$.
\end{enumerate}

We are now ready to prove Theorem~\ref{thm:reguDOS}.

\begin{proof}[Proof of Theorem~\ref{thm:reguDOS}]
 The first point is that, since $t\mapsto \log|(E-t)/(t-\iu)|$ is $\d (N-N_0)$-integrable, then
 \[\lim_{\epsilon\to0^+}\int_{E-\epsilon}^{E+\epsilon}|\log\left|\frac{t-E-\iu a}{t-\iu}\right||\d (N-N_0)(t)=0,\]
 and thus \begin{equation}
           \lim_{\epsilon\to0^+}|\log(\epsilon)|(N-N_0)(E+\epsilon)-(N-N_0)(E-\epsilon)=0.
          \end{equation}
Thus, $(N-N_0)$ is log-Hölder continuous and in particular continuous. Since $N_0$ is continuous, we have the same property for $N$.

Let $E_0\in I$ and $a>0$ such that $J=[E_0-4a,E_0+4a]\subset\R\backslash M$, which implies that $\gamma$ is Hölder-continuous in $J$ by Proposition~\ref{reguLyap}. Taking $\psi(t)=(N-N_0)(t)\chi_J(t)$, we note that $T^2\psi(t)=-(N-N_0)(t)$ for every $t$ in $J$. 

For $|E-E_0|<4a$, we can compute the Hilbert transform of $\psi$, using the Thouless formula to make $\gamma$ appear and the previously proven Hölder continuity to get rid of the limit in $\epsilon$:
\begin{eqnarray*}
 \pi(T\psi)(E)&=&\gamma(E)+\alpha-\int_{[t-E_0|>4a}\log\left|\frac{E-t}{t-\iu}\right|\d (N-N_0)(t)\\&+&\frac12\int_{E_0-4a}^{E_0+4a}\log(t^2+1)\d(N-N_0)(t)\\&-&\log(E_0-E+4a)(N-N_0)(E_0+4a)+\log(E-E_0+4a)(N-N_0)(E_0-4a).
\end{eqnarray*}
We see then that $T\psi$ is Hölder-continuous in $[E_0-2a,E_0+2a]$, which implies that $T^2\psi$ is Hölder-continuous in $[E_0-a,E_0+a]$. But we know that $T^2\psi=N-N_0$ almost everywhere in this interval and thus, since both are continuous, they are equal everywhere in the interval. We get thus that $N-N_0$ is Hölder-continuous in $[E_0-a,E_0+a]$ and $N$ has this property too since $N_0$ is Lipschitz.

By a compactness argument, we get that the Hölder continuity is uniform in any compact interval.
\end{proof}

\section{Multiscale analysis and localization}\label{sec:6}
\subsection{The requirements of multiscale analysis}
This section will be devoted to the proof of spectral and dynamical localization for our model. We use a method called \emph{bootstrap multiscale analysis}, which has been developed by Germinet and Klein in \cite{GKbootstrap}. Even if we do not need to write a proof, we have to state the hypotheses.

The first one is the possibility to use generalized eigenfunctions.

Let $\Hilb=L^2(\R,\d x;\C^2)$. Given $\nu>1/4$, we put, for $x\in\R$, $\langle x\rangle=\sqrt{1+x^2}$ and we define the weighted spaces $\Hilb_\pm$ as
 $$
 \Hilb_\pm=L^2(\R,\langle x\rangle ^{\pm4\nu}\d x;\C^2).
 $$

The sesquilinear form
 $$
 \langle\phi_1,\phi_2\rangle_{\Hilb_+,\Hilb_-}=\int\bar{\phi_1}(x)\cdot\phi_2(x)\d x
 $$
where $\phi_1\in\Hilb_+$ and $\phi_2\in\Hilb_-$ is the duality map.

We set $T$ to be the self-adjoint operator on $\Hilb$ given by multiplication by the function $\langle x\rangle ^{2\nu}$ ; note that $T^{-1}$ is bounded.

\begin{property}[SGEE]
 We say that an ergodic random operator $H_\omega$ satisfies the strong property of  generalized eigenfunction expansion (SGEE) in some open interval $I$ if, for some $\nu>1/4$, 
 \begin{enumerate}
  \item The set
 $$
 \Dom^\omega_+=\{\phi\in\Dom(H_\omega)\cap \mathcal{H}_+; 
 H_\omega\phi\in \Hilb_+\}
 $$
 is dense in $\Hilb_+$ and is an operator core for $H_\omega$ with probability one.
 
 \item There exists a bounded, continuous function $f$ on $\R$, strictly positive on the spectrum of $H_\omega$
 such that
 \begin{equation*}
  \ec\left\{[ \tr(T^{-1}f(H_\omega) \Pi_\omega(I)T^{-1})]^2\right\}<\infty,
 \end{equation*}
$\Pi_\omega$ being the spectral projection associated with $H_\omega$.
 \end{enumerate}
\end{property}
%
%
%
\begin{defi}
A measurable function $\psi:\R\to\C^2$ is said to be a generalized eigenfunction of $H_\omega$ with generalized eigenvalue $E$ if $\psi\in\Hilb_-\backslash\{0\}$ 
and 
 $$
 \langle H_\omega\phi,\psi \rangle_{\Hilb_+,\Hilb_-}=E\langle \phi,
 \psi \rangle_{\Hilb_+,\Hilb_-},\ \mbox{ for all } \phi\in\Dom_+^\omega.
 $$
\end{defi}
%
%
%
As explained in \cite{klein}, when (SGEE) holds, a generalized eigenfunction which is in $\Hilb$ is a bona fide eigenfunction. 
Moreover, if $\mu_\omega$ is the spectral measure for the restriction of $H_\omega$ to the Hilbert space $E_\omega(I)\Hilb$, then $\mu_\omega$-almost every $E$ is a generalized eigenvalue of $H_\omega$.

The following properties are about finite volume operators
%
%
%
%
Given an interval $\Lambda_L(x)=[x-L/2,x+L/2]$, we recall that the localized operator $H_{\omega,x,L}$ is the operator $H_\omega$ restricted to $\Lambda_L(x)$ with Dirichlet boundary condition, as defined in the previous section.

We can then define $R_{\omega,x,L}(z)=(H_{\omega,x,L}-z)^{-1}$ the resolvent of 
$H_{\omega,x,L}$ and $\Pi_{\omega,x,L}(\cdot)$ its spectral projection.

\begin{defi}
We say that  an ergodic random family of operators $H_\omega$ is standard \cite{klein} if for each $x\in\Z$, $L\in\N$ there is a measurable map $H_{\cdot,x,L}$ from $\Omega$ to self-adjoint operators
 on $L^2(\R,\C^2)$ such that 
 $$
   U(y)H_{\omega,x,L}U(-y)=H_{\tau_y\omega,x+y,L}
 $$ 
 where $\tau$ and $U$ define the ergodicity:
 $$
 U(y)H_\omega U(y)^*=H_{\tau_y(\omega)}.
 $$
\end{defi}
It is easy to see that the restriction of the operators to intervals with Dirichlet boundary conditions  makes $H_\omega$ a  standard operator.

We now enumerate the properties which are needed for multiscale analysis to be performed, yielding thus various localization properties.

\begin{defi}
An event is said to be based in a box $\Lambda_L(x)$ if it is determined by 
conditions on the finite volume operators 
$(H_{\omega,x,L})_{\omega\in\Omega}$.
\end{defi}
\begin{property}[IAD]
Events based in disjoint boxes are independent.
\end{property}

The following properties are to hold in a fixed open interval $I$. We will denote by $\chi_{x,L}$ the characteristic function of $\Lambda_L(x)$ and 
$\chi_x := \chi_{x,1}$. We also denote  by $\Gamma_{x,L}$ the characteristic function of the union of two regions near the boundary of $\Lambda_L(x)$: $[x-(L-1)/2,x-(L-3)/2]\cup[x+(L-3)/2,x+(L-1)/2]$. 
\begin{property}[SLI]\label{SLI}
For any compact interval $J\subset I$ there exists a finite constant $\kappa_J$ such that, given $L$, $l'$, $l''\in 2\N$, $x$, $y$, $y'\in\Z$ with 
 $\Lambda_{l''}(y)\subset\Lambda_{l'-3}(y')\subset\Lambda_{L-6}(x)$, then, for $\mathbb{P}$-almost every $\omega$, if $E\in J$ with $E\notin\sigma(H_{\omega,x,L })\cup\sigma(H_{\omega,y',l'})$
 we have
 \begin{equation}\label{eq:SLI-cond}
  \|\Gamma_{x,L}R_{\omega,x,L}(E)\chi_{y,l''}\|\leqslant\kappa_J\|\Gamma_{y',l'}R_{\omega,y',l'}(E)\chi_{y,l''}\|\|\Gamma_{x,L}R_{\omega,x,L}(E)\Gamma_{y',l'}\|.
 \end{equation}

\end{property}

\begin{property}[EDI]
 For any compact interval $J\subset I$ there exists a finite constant $\tilde{\kappa}_J$ such that for $\mathbb{P}$-almost every $\omega$, given a generalized eigenfunction $\psi$ of $H_\omega$
 with generalized eigenvalue $E\in J$, we have, for any $x\in\Z$ and $L\in2\N$ with $E\notin\sigma(H_{\omega,x,L})$, that
 \begin{equation*}
  \|\chi_x\psi\|\leqslant\tilde{\kappa}_J\|\Gamma_{x,L}
  R_{\omega,x,L}(E)\chi_x\|\|\Gamma_{x,L}\psi\|.
 \end{equation*}
\end{property}

\begin{property}[NE]
 For any compact interval $J\subset I$ there exists a finite constant $C_J$ such that, for all $x\in\Z$ and $L\in2\N$,
 \begin{equation*}
  \ec \left(\tr \left(\Pi_{\omega,x,L}(J)\right)\right)\leqslant C_J L.
 \end{equation*}

\end{property}

\begin{property}[W]\label{propW}
 For some $b\geqslant 1$, there exists for each compact subinterval $J$ of $I$ a constant $Q_J$ such that
 \begin{equation}\label{eq:wegner}
  \mathbb{P}\{\dist (\sigma(H_{\omega,x,L}),E)<\eta\}\leqslant Q_J\eta L^{b} ,
\end{equation}
for any $E\in J$, $0<\eta<\frac{1}{2}\dist(E_0,\sigma(H_0))$, $x\in\Z$ and $L\in2\N$.
\end{property}

The last property depends on several parameters: $\theta>0$, $E_0\in\R$ and $L_0\in6\N$.
\begin{property}[H1($\theta$, $E_0$, $L_0$)]
 \begin{equation*}
  \pc\left\{\left\|\Gamma_{0,L_0}R_{\omega,0,L_0}(E_0)
  \chi_{0,L_0/3}\right\|
  \leqslant\frac{1}{L_0^\theta}\right\}>1-\frac{1}{841}.
 \end{equation*}
\end{property}

 These properties are the hypotheses of the bootstrap multiscale analysis.
\begin{defi}
Given $E\in\R$, $x\in\Z$ and $L\in6\N$ with $E\notin \sigma(H_{\omega,x,L})$, we say that the box $\Lambda_L(x)$ is $(\omega,m,E)$-regular for a given $m>0$ if
\begin{equation}\label{eq:regular}
  \left\|\Gamma_{x,L}R_{\omega,x,L}(E)\chi_{x,L/3} \right\|
  \leqslant \e^{-mL/2}.
\end{equation}
\end{defi}
%
%
%
In the following, we denote
 $$
 [L]_{6\N}=\sup\{n\in6\N | n\leqslant L\}.
 $$
\begin{defi}
For $x$, $y\in\Z$, $L\in6\N$, $m>0$ and $I\subset\R$ an interval, we denote
\begin{equation*}
\begin{split}
 & R(m,L,I,x,y) \\
 & =\left\{\omega;\text{for every }E'\in I\text{ either }\Lambda_L(x)\text{ or }\Lambda_L(y)\text{ is }(\omega,m,E')\text{-regular.}\right\}.
\end{split}
\end{equation*}
The multiscale analysis region $\Sigma_{MSA}$ for $H_\omega$ is the set of $E\in\Sigma$  for which there exists some open interval $I \ni E$
such that, given any $\zeta$, $0<\zeta<1$ and $\alpha$, $1<\alpha<\zeta^{-1}$, there is a length scale $L_0\in6\N$ and a mass $m>0$ so if we set $L_{k+1}=[L_k^\alpha]_{6\N}$, $k=0,1, \ldots $,
we have
\begin{equation*}
 \mathbb{P}\left\{R(m,L_k,I,x,y)\right\}\geqslant 1-e^{-L_k^\zeta}
\end{equation*}
for all $k\in\N$, $x$, $y\in\Z^d$ with $|x-y|>L_k$.
\end{defi}
%
%
%
\begin{theorem}[Multiscale analysis - Theorem 5.4 p136 of \cite{klein}]\label{pi}
 Let $H_\omega$ be a standard ergodic random operator with (IAD) and properties (SLI), (NE) and (W) fulfilled in an open interval $I$. For $\Sigma$ being the almost sure spectrum of $H_\omega$ and for $b$ as in \eqref{eq:wegner}, given $\theta>b$, for each $E\in I$ there exists a finite scale
 $\mathcal{L}_\theta(E)=\mathcal{L}_\theta(E,b)>0$ bounded on compact subintervals of $I$ such that, if for a  given $E_0\in\Sigma\cap I$ we have (H1)($\theta$, $E_0$, $L_0$) at some scale $L_0\in6\N$ with $L_0>\mathcal{L}_\theta(E_0)$,
 then $E_0\in\Sigma_{MSA}$.
\end{theorem}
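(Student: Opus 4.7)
The plan is to prove the statement by a multi-scale induction following the Germinet--Klein bootstrap architecture. The core idea is that regularity of boxes at a given scale $L_k$, in the sense of the bound \eqref{eq:regular}, can be propagated to regularity at the next scale $L_{k+1}=[L_k^\alpha]_{6\N}$ by a purely geometric argument using the Schur--like resolvent identity (SLI), provided the energy avoids the spectrum of the finite-volume operators on the relevant boxes. The role of the Wegner estimate (W) is precisely to control the probability of this spectral avoidance failing, while the number estimate (NE) and independence at a distance (IAD) are used to bound the combinatorics of simultaneous ``bad'' boxes. The initial scale hypothesis (H1)$(\theta,E_0,L_0)$ serves as the seed of the induction, providing a single $(\omega,m,E_0)$-regular box with probability at least $1-1/841$, a threshold that is calibrated precisely to make the first deterministic step of the bootstrap go through.

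More concretely, one first upgrades (H1) from a statement about one energy $E_0$ to a statement about a small interval $I\ni E_0$ and about regularity for \emph{all} energies in $I$, using a perturbation argument together with (W) and (NE) to exclude energies where $H_{\omega,x,L_0}$ has spectrum near $E_0$. This is the content of the first Germinet--Klein bootstrap step and leads to a polynomial decay in $L_0$ of the probability that $\Lambda_{L_0}(x)$ fails to be regular. Then one iterates: to estimate $\|\Gamma_{x,L_{k+1}}R_{\omega,x,L_{k+1}}(E)\chi_{x,L_{k+1}/3}\|$ one covers the annular region where $\Gamma_{x,L_{k+1}}$ is supported by boxes of radius $L_k$, applies (SLI) to break the long resolvent into products of short resolvents and boundary factors $\|\Gamma_{y,L_k}R_{\omega,y,L_k}(E)\chi_{\cdot,L_k/3}\|$, and inserts the exponential bound $\mathrm{e}^{-mL_k/2}$ coming from regularity of \emph{most} of the intermediate boxes.

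The main difficulty, and the heart of the argument, is ensuring that with very high probability no chain of non-regular boxes connects the inner region of $\Lambda_{L_{k+1}}$ to its boundary. This is where (IAD) enters: boxes at mutual distance larger than $L_k$ yield independent events, so the probability of having two non-regular boxes $\Lambda_{L_k}(x),\Lambda_{L_k}(y)$ with $|x-y|>L_k$ is bounded by the square of the probability of one being non-regular. Combined with the scale relation $L_{k+1}\sim L_k^\alpha$ and the choice $\theta>b$, one shows inductively that $\mathbb{P}\{R(m,L_k,I,x,y)\}\ge 1-\mathrm{e}^{-L_k^\zeta}$, the bootstrap turning a polynomial decay into a stretched-exponential one through four successive refinement steps, each time using (W) to rule out small spectral denominators on boxes of the current scale.

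The deepest step is therefore the passage from polynomial to sub-exponential probability, which requires the careful choice of the four families of scales $L_k^{(j)}$ and the bookkeeping of exceptional sets; here one must ensure that the threshold $\mathcal{L}_\theta(E_0)$ is large enough both to absorb the geometric constants coming from (SLI) and to beat the Wegner bound $Q_JL_0^b\eta$ for $\eta=L_0^{-\theta}$. Once that is achieved, the inductive step is automatic and $E_0\in\Sigma_{MSA}$ by definition.
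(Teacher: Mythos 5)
The paper does not prove this theorem: it is quoted verbatim as Theorem~5.4 of \cite{klein} (the Germinet--Klein bootstrap multiscale analysis), and the entire strategy of Section~\ref{sec:6} is to verify the \emph{hypotheses} (IAD), (SLI), (NE), (W), (SGEE), (EDI) and (H1) so that this black-box theorem and Theorem~\ref{thm:main-section4} can be invoked directly. So there is no proof in the paper to compare your sketch against, and you should not attempt one here either; the correct move is simply to cite.

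As a standalone description of the Germinet--Klein argument, your sketch captures the right skeleton (seed from (H1), deterministic propagation via (SLI), probabilistic control of bad boxes via (W), (NE), (IAD), bootstrap from polynomial to stretched-exponential decay over several refinement stages), but it is far too coarse to count as a proof and contains some imprecisions. In particular: the constant $1-1/841$ is not ``calibrated to make the first deterministic step go through'' in the loose sense you describe --- it arises because the bootstrap works with pairs of boxes and needs the failure probability to be strictly below $1/(2\cdot 20^2)$ after a fixed number of geometric subdivisions, and $841=29^2$ encodes this; the passage from polynomial to sub-exponential probability decay is not a single induction but a chain of four distinct multiscale analyses, each with its own scale sequence and its own notion of ``good'' box, which your sketch alludes to but does not actually carry out; and the upgrade from a single energy to an energy interval is genuinely delicate (it uses a resolvent-perturbation argument together with the Wegner bound, not just ``(W) and (NE)''). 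None of this matters for the paper, which correctly delegates all of it to \cite{klein}, and you should do the same.
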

%
%
%
\begin{theorem}[Localization - Theorem~6.1 p139 of \cite{klein}]\label{thm:main-section4}
 Let $H_\omega$ be a standard ergodic operator with (IAD) and properties (SGEE) and (EDI) in an open interval $I$. Then,
 $$
 \Sigma_{MSA}\cap I\subset\Sigma_{EL}\cap\Sigma_{SSEHSKD}\cap I.
 $$
\end{theorem}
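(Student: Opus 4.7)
My plan is simply to invoke the cited theorem: the statement is Theorem~6.1 of \cite{klein} and no original argument is needed at this step. The content of Section~\ref{sec:6} is rather the verification of the hypotheses (IAD), (SLI), (NE), (W), (H1), (SGEE), and (EDI) for the random Dirac operator $H_\omega$; once this is done, the theorem applies as a black box and, combined with Theorem~\ref{pi}, yields Theorems~\ref{thm:loc-spec} and~\ref{thm:loc-dyna}.

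If one wanted to reprove the statement from scratch, the argument would split into three phases. First, take $E_0 \in \Sigma_{MSA} \cap I$ with associated open neighborhood $J$. Apply Borel--Cantelli to the summable tail probabilities $e^{-L_k^\zeta}$ built into the definition of $\Sigma_{MSA}$ to conclude that $\pc$-almost surely, for $k$ large enough and for every pair $x$, $y \in \Z$ with $|x - y| > L_k$, at least one of $\Lambda_{L_k}(x)$ and $\Lambda_{L_k}(y)$ is $(\omega, m, E')$-regular simultaneously for \emph{every} $E' \in J$. Second, feed this into (EDI): for any generalized eigenfunction $\psi$ with generalized eigenvalue $E' \in J$, iterating the bound $\|\chi_x\psi\| \leq \tilde{\kappa}_J \|\Gamma_{x,L_k} R_{\omega,x,L_k}(E')\chi_x\| \|\Gamma_{x,L_k}\psi\|$ combined with the $\Hilb_-$-polynomial growth of $\psi$ forces exponential decay $\|\chi_x \psi\| \lesssim e^{-m|x|/2}$. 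By (SGEE) the spectral measure on $J$ is carried by generalized eigenvalues, so every such $\psi$ is genuinely in $\Hilb$, giving pure point spectrum in $J$ with exponentially decaying eigenfunctions, i.e. $E_0 \in \Sigma_{EL}$.

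The hardest step will be upgrading this pointwise exponential decay to the stronger Hilbert--Schmidt kernel decay defining $\Sigma_{SSEHSKD}$. The MSA output only controls pairs of \emph{boxes}, whereas dynamical localization requires summable decay of matrix elements $\|T^{-1}\chi_y \Pi_\omega(J)\chi_x T^{-1}\|$ as $|x - y| \to \infty$, uniformly in the spectral window. The passage from the former to the latter, carried out in the bootstrap step of \cite{klein}, combines an averaging argument over the spectral window with the weighted Hilbert--Schmidt trace estimate furnished by the second clause of (SGEE) and with repeated use of (SLI) to relocate ``bad'' boxes. Since the full argument is already in \cite{klein}, my plan here is just to cite it and move on to verifying the hypotheses.
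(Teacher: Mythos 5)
Your plan matches the paper exactly: the result is quoted from Germinet--Klein (Theorem~6.1 of \cite{klein}) as a black box, and the work of Section~\ref{sec:6} consists entirely in verifying the hypotheses (the paper even delegates (SGEE), (SLI), (IAD), (EDI), (NE) to \cite{BCZ} and only proves (W) and (H1) afresh). Your optional sketch of the Borel--Cantelli/(EDI)/(SGEE) argument is a reasonable summary of the cited proof, but it is not required and is not present in the paper.
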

%
%

Thus, to prove Theorems~\ref{thm:loc-spec} and~\ref{thm:loc-dyna}, we only have to prove that all these hypotheses are satisfied. (SGEE), (SLI), (IAD) and (EDI) have already been proven in \cite[Proof of Theorem~4.1]{BCZ}. Similarly, (NE) is proven in the same paper, in the proof of Theorem~4.2. Even if there are extra hypotheses in that paper, they are not used in the proof of these specific assumptions.

It remains thus to proof the Wegner estimate (W) and the initial length-scale estimate (H1), which will be done in the last two subsections.

\subsection{Proof of the Wegner estimate}
We now prove the Wegner estimate, which is one the main ingredients which will enable us to do the multiscale analysis. We will get it from the Hölder continuity of the integrated density of states, as it is done in \cite{DSS}. 

We consider a compact interval $I=[a,b]\subset\R\backslash M$, where $M$ is the discrete set introduced in Proposition~\ref{defM}. In order to get uniform estimates on $I$, it will at some point be convenient to consider a larger set $I_\xi=[a-\xi, b+\xi]$, for some $\xi>0$ such that $I_\xi\subset \R\backslash M$.

We will prove the following:
\begin{prop}\label{propWeg}
 For every $\beta\in(0,1)$ and every $\sigma>0$, there exists $L_0\in\N$ and $\alpha>0$ such that
 \begin{equation}
  \pc(\dist(E,\sigma(H_{\omega,L}))\leq \e^{-\sigma L^\beta})\leq\e^{-\alpha L^\beta}
 \end{equation}
for all $E\in I$ and $L\geq L_0$.
\end{prop}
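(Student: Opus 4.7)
My plan is to derive Proposition~\ref{propWeg} directly from the Hölder regularity of the density of states (Theorem~\ref{thm:reguDOS}), following the general strategy of~\cite{DSS}. The room for manoeuvre comes from the fact that the probed scale $\eta=\e^{-\sigma L^\beta}$ is already exponentially small, so any polynomial factor of $L$ can be absorbed into the target exponential $\e^{-\alpha L^\beta}$ at the price of slightly decreasing $\alpha$.

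First, I would apply a Markov-type bound: the event $\{\dist(E,\sigma(H_{\omega,L}))\le\eta\}$ is contained in $\{\tr\chi_{[E-\eta,E+\eta]}(H_{\omega,L})\ge 1\}$, so
\[\pc\bigl(\dist(E,\sigma(H_{\omega,L}))\le\eta\bigr)\le \eb\bigl(\tr\chi_{[E-\eta,E+\eta]}(H_{\omega,L})\bigr).\]
The next step is to convert the limit definition of $\nu$ from Section~\ref{sec:5} into a quantitative finite-volume estimate of the shape
\[\eb\bigl(\tr\chi_{[E-\eta,E+\eta]}(H_{\omega,L})\bigr)\le C L\,\nu\bigl([E-\eta',E+\eta']\bigr),\]
uniformly in $E\in I$ and $L\ge L_0$, with $\eta'$ comparable to $\eta$. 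The route I have in mind is a decomposition of $[-L/2,L/2]$ into unit blocks with the Dirichlet condition of Section~\ref{sec:5} imposed at each integer point, combined with translation invariance under Kirsch's suspension and the Lipschitz control of eigenvalues with respect to the couplings provided by the Hellmann--Feynman argument already exploited in the proof of $|N(E)-N_0(E)|\le C$.

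Given such a uniform comparison, Theorem~\ref{thm:reguDOS} supplies $\gamma>0$ and $C'>0$ with $\nu([E-\eta',E+\eta'])\le C'(\eta')^\gamma$ for all $E\in I$, whence
\[\pc\bigl(\dist(E,\sigma(H_{\omega,L}))\le\eta\bigr)\le C''L\eta^\gamma.\]
Substituting $\eta=\e^{-\sigma L^\beta}$, the prefactor $C''L$ is dominated by $\e^{\epsilon L^\beta}$ for any $\epsilon>0$ once $L$ is large enough, since $\beta<1$; the choice $\epsilon=\gamma\sigma/2$ therefore yields the statement with $\alpha=\gamma\sigma/2$ and a suitable $L_0$.

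I expect the main obstacle to lie in the finite-volume comparison step. Since the Dirac operator is not semibounded, the classical Dirichlet bracketing used in the Schrödinger setting is not directly available, and the boundary eigenvalues introduced by the unit-box Dirichlet cuts must be absorbed without producing an additive $O(1)$ error that would ruin the exponential decay in~$L$. The analogue of the boundary correction handled in Section~\ref{sec:5} suggests that Hellmann--Feynman, together with a mild widening $\eta\mapsto\eta'$ of the energy window, is the right tool: the boundary terms move each eigenvalue by at most $\|V_{per}+V_\omega\|_\infty$, and widening $\eta$ by an $O(1)$ amount is harmless because the estimate will be combined with the Hölder bound only after the substitution $\eta=\e^{-\sigma L^\beta}$.
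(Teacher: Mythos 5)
There is a genuine gap, and it is precisely at the step you yourself flagged as the obstacle: the finite-volume comparison $\eb(\tr\chi_{[E-\eta,E+\eta]}(H_{\omega,L}))\le CL\,\nu([E-\eta',E+\eta'])$ does not follow from the tools you invoke, and in fact cannot hold uniformly in $L$ with $\eta'$ merely comparable to $\eta$. Imposing a Dirichlet cut at an interior point is a change of the operator's \emph{domain}, not a bounded perturbation of the operator, so the Hellmann--Feynman derivative formula is simply not applicable — it controls how eigenvalues move when you vary a potential in the operator, not when you change boundary conditions. What the correct tool (rank-bounded perturbation of resolvents under a Dirichlet cut) would give you is that each interior cut changes the eigenvalue counting function on any interval by a bounded amount; with $O(L)$ cuts this produces an additive $O(L)$ error, which is catastrophic since the target is a bound of size $\e^{-\alpha L^\beta}$. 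Alternatively, if you try instead to absorb the boundary-induced eigenvalue shifts by widening the energy window by $\|V_{per}+V_\omega\|_\infty$ as you suggest, then $\eta'=O(1)$ and $\nu([E-\eta',E+\eta'])=O(1)$, so the Hölder gain of Theorem~\ref{thm:reguDOS} disappears entirely. Either way, the boundary contribution cannot be tamed by IDS regularity alone.

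The missing idea — and the structural reason the paper's proof looks quite different — is the positivity of the Lyapunov exponent. The paper's Lemma~\ref{lemWeg} applies the Hölder continuity of $N$ only to eigenvalues whose normalized eigenfunctions have \emph{exponentially small boundary values} $|\phi_\downarrow(\pm L/2)|\le\epsilon$; under that extra hypothesis one can smoothly truncate eigenfunctions on unit blocks and produce many genuine eigenvalues of a large Dirichlet operator, which is exactly what kills the per-block boundary error that wrecks your version. The Markov bound you write sees \emph{all} eigenvalues near $E$, including ones whose eigenfunctions have $O(1)$ boundary values, for which the block construction fails. The paper then uses the transfer-matrix growth events (via Lemma~\ref{lem:WegLyap}, ultimately $\gamma(E)>0$) to show that, up to an exponentially small probability, an eigenvalue near $E$ must in fact have exponentially small boundary values, and the continuity estimate~\eqref{Gronwall2} makes this uniform over a window of width $\e^{-\sigma L^\beta}$. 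So the Lyapunov-exponent machinery is not an optional extra route but the essential ingredient that lets the IDS regularity be invoked in the first place; Theorem~\ref{thm:reguDOS} by itself is not enough to produce Proposition~\ref{propWeg}.
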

This Wegner estimate is not exactly similar to the one stated in Proposition~\ref{propW}, but it is possible to use this version for multiscale analysis, as explained in~\cite{CKM} and~\cite[Remark~4.6]{klein}.

Our first lemma is the following, which is Lemma~5.2 of \cite{DSS}. Since the proof does not depend on the precise shape of the operators but only on properties of transfer matrices, we will not give it here.
\begin{lemma}[\cite{DSS}, Lemma~5.2]\label{lem:WegLyap}
 There exists $\alpha>0$, $\delta>0$, $n_0\in\N$ such that for all $E\in I$, $n\geq n_0$ and $x$ with norm 1, we have
 \begin{equation}
  \ec\{\|U_E(n)x\|^{-\delta}\}\leq \e^{-\alpha n},
 \end{equation}
where we recall that $U_E(n)=g_E(n)...g_E(1)$, where $g_E(k)$ is the transfer matrix from $k-1/2$ to $k+1/2$ (we have omitted the dependence in $\omega$).
\end{lemma}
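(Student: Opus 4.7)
The plan is to derive this bound from the classical theory of products of i.i.d.\ random matrices in $SL_2(\R)$ satisfying Fürstenberg's conditions, as developed by Le Page and presented in \cite{BL}. Since the compact interval $I$ lies in $\R\backslash M$, Proposition~\ref{defM} gives that $G_\mu(E)$ is non-compact and strongly irreducible for every $E\in I$, and the transfer matrices $g_E(k,\omega)$ are uniformly bounded in $E$, $k$ and $\omega$ by Lemma~\ref{lem:Gronwall1}. All moment hypotheses of Le Page's theorem are therefore trivially met.

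First I would fix $E\in I$ and consider, for small $s\geq 0$, the Markov operator
\begin{equation*}
T_{s,E} f(\bar v)=\ec\bigl\{\|g_E(1,\omega)v\|^{-s}\,f(\overline{g_E(1,\omega)v})\bigr\}
\end{equation*}
acting on the Banach space of Hölder continuous functions on $P(\R^2)$. By Le Page's theorem (\cite{BL}, Theorems V.2.5--V.6.2), $T_{s,E}$ is quasi-compact in a complex neighbourhood of $s=0$, with a simple dominant eigenvalue $\rho(s,E)$ depending analytically on $s$ and satisfying $\rho(0,E)=1$ and $\frac{d\rho}{ds}(0,E)=-\gamma(E)$. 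Since $\gamma(E)>0$, there exists $\delta_E>0$ with $\rho(\delta_E,E)<1$. Iterating the identity $T_{s,E}^n 1(\bar v)=\ec\{\|U_E(n)v\|^{-s}\}$ then gives, uniformly in unit vectors $x$,
\begin{equation*}
\ec\bigl\{\|U_E(n)x\|^{-\delta_E}\bigr\}\leq C_E\,\rho(\delta_E,E)^n.
\end{equation*}

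To upgrade to uniformity in $E\in I$, I would show that $E\mapsto T_{s,E}$ is continuous in the operator norm on Hölder functions. This follows from the Lipschitz estimate $\|g_E-g_{E'}\|\leq C|E-E'|$ of Lemma~\ref{lem:Gronwall2}, combined with the uniform two-sided bounds on $\|g_E v\|$ for $\|v\|=1$ coming from $\det g_E=1$ and the uniform upper bound of Lemma~\ref{lem:Gronwall1}. Stability of isolated eigenvalues of quasi-compact operators under small perturbations then makes $E\mapsto \rho(s,E)$ continuous on $I$, and together with the continuity of $\gamma$ from Proposition~\ref{reguLyap} and its strict positivity on the compact set $I\subset \R\backslash M$, one can pick a single small $\delta>0$ with $\rho(\delta,E)\leq \e^{-2\alpha}$ uniformly on $I$, for some $\alpha>0$. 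The constants $C_E$ are also continuous in $E$, hence bounded on $I$, so absorbing them into a suitable threshold $n_0$ yields $\ec\{\|U_E(n)x\|^{-\delta}\}\leq \e^{-\alpha n}$ for all $n\geq n_0$, all $E\in I$ and all unit vectors $x$.

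The main obstacle is precisely this uniformity in $E$: pointwise in $E$ the estimate is a direct reading of Le Page's theorem, but the perturbative stability of the spectral radius of $T_{s,E}$ across $I$ must be carefully justified. An alternative, if one wishes to avoid importing spectral theory, is to follow the direct iterative proof of Bougerol--Lacroix (\cite{BL}, Theorem V.6.1): at each step of their induction, bounds on contraction of projective distances depend continuously on $E$ through the Lipschitz dependence of the cocycle, so compactness of $I$ and of the projective line turns all estimates uniform. Either route produces the desired $\delta$, $\alpha$ and $n_0$.
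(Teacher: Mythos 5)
The paper does not give a proof of this lemma at all: it cites DSS Lemma~5.2 verbatim and remarks that the proof carries over because it only uses properties of the transfer matrices that you have already established (uniform boundedness, Lipschitz dependence on $E$, and the Fürstenberg conditions on $I$). DSS in turn delegate to CKM Theorem~A.4, which is exactly the Le Page spectral theory you invoke. So your blind reconstruction is a faithful account of the argument underlying the citation, and the two ``proofs'' are the same route.

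Two small remarks on the write-up. First, the cocycle identity $T_{s,E}^n 1(\bar v)=\ec\{\|U_E(n)v\|^{-s}\}$ you iterate is correct and is the heart of the matter; it is worth stating that the inequality $\ec\{\|U_E(n)x\|^{-\delta}\}\le C_E\rho(\delta,E)^n$ comes from bounding $T_{s,E}^n 1$ in the sup norm by the Hölder operator norm of $T_{s,E}^n$, which quasi-compactness controls. Second, the uniformity in $E$ is indeed the only delicate point, and you have correctly flagged it; the operator-norm continuity of $E\mapsto T_{s,E}$ on a Hölder space needs both the Lipschitz estimate of Lemma~\ref{lem:Gronwall2} and a uniform \emph{lower} bound on $\|g_E(1,\omega)v\|$ for unit $v$ (which follows from $\det g_E=1$ together with the uniform upper bound of Lemma~\ref{lem:Gronwall1}), so that the map $\bar v\mapsto \overline{g_E v}$ and the weight $\|g_E v\|^{-s}$ are uniformly Lipschitz in both $\bar v$ and $E$. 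With that in place the spectral-stability argument (or the direct iterative Bougerol--Lacroix route you mention as an alternative) closes the gap, matching what CKM prove in their Appendix~A.
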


The proof of Proposition~\ref{propWeg} is based on a decomposition in several events. 
To bound one of them, we use the following lemma.

\begin{lemma}\label{lemWeg}
 There exists $\rho>0$ and $C<\infty$ such that, for every $E\in I$ and every $\epsilon>0$, we have for $L\geq L_0$
 \begin{equation}\label{proba1}\begin{split}
 & \pc\{\text{There exists }E'\in(E-\epsilon,E+\epsilon)\text{ and }\phi\in\Dom(H_{\omega,L}),\|\phi\|=1,\text{ such that }\\& (H_{\omega,L}-E')\phi=0, |\phi_\downarrow(-L/2)|^2+|\phi_\downarrow(L/2)|^2\leq\epsilon^2\}\leq CL\epsilon^\rho.\end{split}
 \end{equation}
\end{lemma}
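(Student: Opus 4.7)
The plan is to reinterpret the stated event as a smallness event for transfer matrices applied to $e_2:=(0,1)^\top$ and then reduce it, via the independence of coupling constants on disjoint half-intervals, to a Lyapunov-type event to which Lemma~\ref{lem:WegLyap} applies.

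\paragraph{Setup.} By the Dirichlet boundary conditions $\phi_\uparrow(\pm L/2)=0$, so, writing $c=\phi_\downarrow(-L/2)$, $d=\phi_\downarrow(L/2)$, and denoting by $T_{E'}(a,b)$ the transfer matrix of $(D_0+V_{per}+V_\omega-E')\psi=0$ from $a$ to $b$, the eigenfunction equation yields $\phi(x)=c\,T_{E'}(-L/2,x)e_2$. The hypothesis $|c|^2+|d|^2\leq\epsilon^2$ gives $|c|,|d|\leq\epsilon$, while the normalization $\|\phi\|_{L^2}=1$ reads
\[
|c|^2\int_{-L/2}^{L/2}\|T_{E'}(-L/2,x)e_2\|^2\,\d x=1,
\]
so in particular $c\neq 0$. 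Using Lemma~\ref{lem:Gronwall1} to compare the $L^2$-norm on each unit sub-interval with the squared norm at an endpoint, one obtains an integer $n$ such that $\|T_{E'}(-L/2,n)e_2\|\geq(CL)^{-1/2}|c|^{-1}$, and the symmetric argument starting at $L/2$ gives an integer $n'$ such that $\|T_{E'}(n',L/2)^{-1}e_2\|\geq(CL)^{-1/2}|d|^{-1}$.

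\paragraph{Midpoint splitting and independence.} Choose a half-integer $x_0\in(-L/2,L/2)$ so that the sets of coupling constants $\lambda_n$ entering $T_{E'}(-L/2,x_0)$ and $T_{E'}(x_0,L/2)$ are disjoint; then these two random matrices are independent. Evaluating the equation $\phi(x)=c\,T_{E'}(-L/2,x)e_2=d\,T_{E'}(x,L/2)^{-1}e_2$ at $x_0$ shows that the two independent random vectors $T_{E'}(-L/2,x_0)e_2$ and $T_{E'}(x_0,L/2)^{-1}e_2$ are collinear, with ratio $d/c$. Conditioning on the value of the first, the event reduces to the image of $e_2$ under the second matrix lying in a narrow "tube" around a fixed direction, with the magnitude ratio constrained by $|c|,|d|\leq\epsilon$ and the normalization bound above.

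\paragraph{Probability estimate.} Markov's inequality applied to Lemma~\ref{lem:WegLyap} gives, for fixed $E'$ and $k\geq n_0$,
\[
\pc\bigl(\|U_{E'}(k)e_2\|\leq s\bigr)\leq s^\delta e^{-\alpha k},
\]
and because the elementary transfer matrices lie in $SL_2(\R)$ and the $\lambda_n$ are i.i.d., the backward propagator $T_{E'}(x_0,L/2)^{-1}$ obeys a Markov bound of the same form (after reversing the order of the coupling constants). Combining with the conditional tube-probability estimate yields a bound of the form $C\epsilon^\rho$ at fixed $E'$; to pass from fixed $E'$ to $E'\in(E-\epsilon,E+\epsilon)$, I construct an energy net whose spacing is controlled by Lemma~\ref{lem:Gronwall2} and absorb the exponential-in-$L$ Gronwall factor into the exponential Lyapunov decay by taking $k$ slightly larger than $L$. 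The resulting polynomial blow-up produces the $CL$ prefactor.

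\paragraph{Main obstacle.} The delicate step is the conditional "tube" probability: one must bound the probability that an independent random matrix sends $e_2$ close to a prescribed direction, with prescribed magnitude ratio. This requires either a Hölder regularity statement on the Furstenberg invariant measure on $P(\R^2)$ --- extractable from Lemma~\ref{lem:WegLyap} by standard projective arguments --- or a direct computation exploiting the action of $SL_2(\R)$ on the projective line. A secondary subtlety is the energy discretization, where the exponential-in-$L$ Gronwall constant must be carefully balanced against the Lyapunov decay, precisely as in the proof of Proposition~\ref{reguLyap}.
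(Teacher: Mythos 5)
Your proposal takes a genuinely different route from the paper, and it has a genuine gap at exactly the step you flag. The paper's proof never touches transfer-matrix large-deviation estimates at this point; instead it relies entirely on the H\"older regularity of the integrated density of states established in Theorem~\ref{thm:reguDOS}. Concretely, the paper tiles $[-nL-L/2,nL+L/2]$ by translated copies $\Lambda_k$ of $[-L/2,L/2]$, observes by stationarity that the probability $p$ of the event in~\eqref{proba1} equals $\pc(A_k)$ for every $k$, and then converts each occurrence of $A_{k_i}$ into an approximate eigenfunction of $H_\omega^{(n)}$ by multiplying the Dirichlet eigenfunction $\phi_i$ by a smooth cutoff near the boundary of $\Lambda_{k_i}$ --- the error being controlled by $|\phi_{i,\downarrow}(\partial\Lambda_{k_i})|\leq\epsilon$ together with Lemma~\ref{lem:Gronwall1}. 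Having disjoint supports, these approximate eigenvectors force $H_\omega^{(n)}$ to have at least one eigenvalue in $[E-C_2\epsilon,E+C_2\epsilon]$ per occurrence. The ergodic theorem then yields $p\leq L\,(N(E+C_2\epsilon)-N(E-C_2\epsilon))$, and the H\"older continuity of $N$ gives $p\leq C_1L\epsilon^\rho$. This is why the paper spends Sections~\ref{sec:4}--\ref{sec:5} proving IDS regularity: it is the precise input to this lemma.

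Your route, by contrast, tries to deduce the estimate directly from transfer-matrix moment bounds, which is closer in spirit to the Carmona--Klein--Martinelli approach than to the Damanik--Sims--Stolz approach the paper follows. The step you call the ``main obstacle'' is indeed a real gap, not a technicality: the conditional ``tube'' probability is a regularity estimate on the \emph{direction} of $T_{E'}(x_0,L/2)^{-1}e_2$, i.e.\ a H\"older-continuity statement for the F\"urstenberg stationary measure on $P(\R^2)$, whereas Lemma~\ref{lem:WegLyap} controls only the \emph{norm} $\|U_E(n)x\|^{-\delta}$. Passing from the norm moment to directional regularity of the invariant measure is not a standard projective manipulation; it is a theorem on its own (Le Page type) and is not established anywhere in this paper. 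Moreover, because the eigenvalue condition is exact collinearity of the two half-line Weyl solutions, the event quantified over $E'\in(E-\epsilon,E+\epsilon)$ is an oscillation-theoretic statement about how the Pr\"ufer angle sweeps as $E'$ varies, and discretizing $E'$ and invoking Lemma~\ref{lem:Gronwall2} does not by itself control the probability that the sweep hits a matching phase with the stated magnitude constraints. In short: the argument could in principle be completed along CKM lines, but it would require importing substantial material (F\"urstenberg-measure H\"older regularity, Pr\"ufer-angle control) that the paper deliberately replaces by the Thouless-formula/IDS route, and as written your sketch does not supply it.
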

\begin{proof}
 The proof is basically similar to the one of Damanik, Sims and Stolz in \cite{DSS}, which follows itself the arguments of \cite{CKM} for the discrete case.
 
Let us fix $E\in I$, $\epsilon>0$ and $L$. For $k\in\Z$, we denote by $\Lambda_k$ the interval $[kL-L/2,kL+L/2]$ and by $H_{\omega,k}$ the operator $H_\omega$ restricted to $\Lambda_k$ with Dirichlet boundary conditions.

Let $A_k$ be the event \begin{equation}\begin{split}\label{event}\{H_{\omega,k}\text{ has an eigenvalue }E_k\in(E-\epsilon,E+\epsilon)\text{ such that the corresponding normalized eigenfunction }\phi_k\\\text{ satisfies }|\psi_{k,\downarrow}(kL-L/2)|^2+|\psi_{k,\downarrow}(kL+L/2)|^2\leq\epsilon^2\}\end{split}.\end{equation} The fact that the $V_\omega$ are i.i.d. implies that $p=\pc(A_k)$ is independent of $k$ and equals in particular the left-hand side of \eqref{proba1}.

For $n\in\N$, we define the interval $\Lambda^{(n)}=\cup_{k=-n}^n\Lambda_k=[-nL-L/2,nL+L/2]$ and the operator $H_\omega^{(n)}$ as the restriction of $H_\omega$ to $\Lambda^{(n)}$ with Dirichlet boundary conditions. For a given $\omega$, we assume that the event $A_k$ occurs for a number $j$ of distinct values of $k$ which will be denoted $k_1$,...,$k_j$.
For each $1\leq i\leq j$, we denotes by $\phi_i$ an eigenfunction of $H_{\omega,k_i}$ satisfying the conditions given in~\eqref{event}.

Our first step is to construct, for each $k_i$, a function $\tilde{\phi}_i$  in the domain of $H_\omega^{(n)}$ which will have support in $\Lambda_{k_i}$, and will be an approximate eigenfunction in the sense that there exists $C_2>0$ such that, for every $i$, 
\[\|(H_\omega^{(n)}-E)\tilde{\phi}_i\|\leq C_2\epsilon.\]

To this purpose, we introduce, as in \cite{DSS}, a smooth function $\chi$ such that $0\leq\chi\leq1$, $\chi(x)=0$ for $x\leq0$, and $\chi(x)=1$ for $x\geq1$. For a given $i$, we denote by $x_i^\pm=k_iL\pm L/2$. We define the function $\hat{\phi}_i$ on $\Lambda^{(n)}$ by
\[\hat{\phi}_i(x)=\left\{\begin{array}{c c l} 0&\text{if}&x\notin\Lambda_{k_i}\\
                          \chi(x-x_i^-)\phi_i(x)&\text{if}&x_i^-\leq x\leq x_i^-+1\\
                          \phi_i(x)&\text{if}&x_i^-+1\leq x\leq x_i^+-1\\
                          \chi(x_ i^+-x)\phi_i(x)&\text{if}&x_i^+-1\leq x\leq x_i^+
                         \end{array}\right.
\]

It is clear that $\hat{\phi}_i$ is in the domain of $H_\omega^{(n)}$ and that its norm is less than $\|\phi_i\|=1$. We know that $\phi_i$ is an eigenfunction of $H_{\omega,k_i}$ corresponding to an eigenvalue $E_{k_i}\in(E-\epsilon,E+\epsilon)$. We can thus decompose
$\|(H_\omega^{(n)}-E)\hat{\phi}_i\|\leq\|(H_\omega^{(n)}-E_{k_i})\hat{\phi}_i\|+\epsilon$. Let us now estimate the first term. To this purpose, we compute the pointwise value of $(H_\omega^{(n)}-E)\hat{\phi}_i$. First, for $x$ outside $\Lambda_{k_i}$, $\hat{\phi}_i(x)=0$, and thus the expression is 0.
Second, if $x\in(x_i^-+1,x_ i^+-1)$, $\hat{\phi}_i(x)=\phi_i(x)$ so, by definition of $\phi_i$, $(H_\omega^{(n)}-E_{k_i})\hat{\phi_i}$ is 0 almost everywhere in this interval. Finally, if $x\in(x_i^-,x_ i^-+1)$ (the case $x\in(x_i^+-1,x_ i^+)$ is totally similar),
\[(H_\omega-E_{k_i})\hat{\phi}_i(x)=(D_0+W_\omega-E_{k_i})\hat{\phi}_i(x)=\chi'(x-x_i^-)J\phi_i(x),\] where all the other terms cancel since $\phi_i$ is an eigenfunction of $H_{\omega,k_i}$.

Hence,
\begin{align*}
 \|(H_\omega-E_{k_i})\hat{\phi}_i\|^2&=\int_{x_i^-}^{x_i^-+1}| \chi'(x-x_i^-)\phi_i(x)|^2\d x+ \int_{x_i^+-1}^{x_i^+}| \chi'(x_i^+-x)\phi_i(x)|^2\d x\\
 &\leq \|\phi_i\|_{L^\infty(x_i^-,x_i^-+1)}^2\int_{x_i^-}^{x_i^-+1}| \chi'(x-x_i^-)|^2\d x+\|\phi_i\|_{L^\infty(x_i^+-1,x_i^+)}^2\int_{x_i^+-1}^{x_i^+}| \chi'(x_i^+-x)|^2\d x\\
 &\leq C_3(|\phi_i(x_ i^+)|^2+|\phi_i(x_ i^-)|^2)\\
 &\leq 2 C_2 \epsilon^2
\end{align*}
where the constant $C_3$ depends only on the single site potential, the single site distribution and the function $\chi$. We used Lemma~$\ref{lem:Gronwall1}$ in the third step. We define then $\tilde{\phi}_i=\hat{\phi_i}/\|\hat{\phi_i}\|$. The definition of $\hat{\phi}$ implies that its norm is greater than 1/2 if $\epsilon$ in  small enough. Thus, $\tilde{\phi}$ satisfies all the properties given in the left-hand side of~\eqref{proba1}.

Note that, because of the disjointness of supports, we have that if $i\neq j$,
\[\langle\tilde{\phi}_i,H_{\omega}^{(n)},\tilde{\phi}_j\rangle=0=\langle\tilde{\phi}_i,H_{\omega}^{(n)},H_{\omega}^{(n)}\tilde{\phi}_j\rangle.\]
We know then, by \cite[Lemma~A.3.2]{ST}, that $H_\omega^{(n)}$ has at least $j$ eigenvalues (counted with multiplicity) in the interval $[E-C_2\epsilon, E+C_2\epsilon]$.

On the other hand, ergodicity implies that, for almost every $\omega$,
\[p=\lim_{n\to\infty}\frac{1}{2n+1}\#\{l\in\{-n,...,n\}\text{ such that }A_l\text{ occurs}\}.\] If we take $\epsilon$ small enough so that $[E-C_2\epsilon, E+C_2\epsilon]\subset I_\xi$, we have that, for almost every $\omega$,
\begin{align*}
 p&\leq\lim_{n\to\infty}\frac{1}{2n+1}\#\{\text{eigenvalues of }H_\omega^{(n)}\text{ in }[E-C_2\epsilon, E+C_2\epsilon]\}\\
 &=L\lim_{n\to\infty}\frac{1}{(2n+1)L}\#\{\text{eigenvalues of }H_\omega^{(n)}\text{ in }[E-C_2\epsilon, E+C_2\epsilon]\}\\
 &=L(N(E+C_2\epsilon)-N(E-C_2\epsilon)).
\end{align*}
  We know, from Theorem~\ref{thm:reguDOS}, that there are constants $\rho=\rho(I_\xi)>0$ and $C_1=C_1(I_\xi)<\infty$ such that, for every $E, E'\in I_\xi$,
  \begin{equation}
|N(E)-N(E')|\leq C_1 |E-E'|^\rho.
  \end{equation}
  We get then that $p\leq C_1L\epsilon^\rho$.
\end{proof}

\begin{proof}[Proof of Proposition~\ref{propWeg}]
 Let $\beta$, $\sigma$ and $I$ be as above. For $L\in\N$ odd, we define $n_L=\lfloor\tau(L/2)^\beta\rfloor+1$, for some $\tau>0$ which will be chosen later. For $E\in I$ and $\theta>0$, we define the following events:
 \[A_\theta^{(E,L)}=\{\|g_E(-(L+1)/2+n_L)\cdot...\cdot g_E(-(L-1)/2)\begin{pmatrix}                                                                                          0\\1                                                                                         \end{pmatrix}\|>\e^{\theta(L/2)^\beta}\},\]
\[B_\theta^{(E,L)}=\{\|g_E((L+1)/2-n_L)^{-1}\cdot...\cdot g_E((L-1)/2)^{-1}\begin{pmatrix}                                                                                          0\\1                                                                                         \end{pmatrix}\|>\e^{\theta(L/2)^\beta}\},\]
\[C^{(E,L)}=\{\dist(E,\sigma(H_{\omega,L}))\leq\e^{-\sigma L^\beta}\}.\]
Our goal is to estimate $\pc(C^{(E,L)})$. We will decompose it in a sum of 4 terms, depending or whether the events $A_\theta^{(E',L)}$ and $B_\theta^{(E',L)}$ are satisfied in an interval around $E$.

First, we will bound \[p_1=\pc\left(C^{(E,L)}\cap \bigcap_{|E-E'|\leq\e^{-\sigma L^\beta}}(A_{\kappa/2}^{(E',L)}\cap B_{\kappa/2}^{(E',L)})\right).\]
We see that, if we are in this event, then, putting $\epsilon=\max(\e^{-\sigma L^\beta},\e^{-\frac12\kappa(L/2)^\beta})$, $H_{\omega,L}$ has an eigenvalue in $[E-\epsilon,E+\epsilon]$ with an associated eigenfunction satisfying the conditions of Lemma~\ref{lemWeg}: we have then that
\[p_1\leq CL\max(\e^{-\sigma L^\beta\rho},\e^{-\frac12\kappa(L/2)^\beta\rho}).\]

The second term to bound will be
\[p_2=\pc\left(A_\kappa^{(E,L)}\cap B_\kappa^{(E,L)}\cap \bigcup_{|E-E'|\leq\e^{-\sigma L^\beta}}(A_{\kappa/2}^{(E',L)})^c\right).\]
We can prove that this probability is smaller than $\e^{-\alpha_2(L/2)^\beta}$ for some $\alpha_2>0$ if $\tau$ is small enough, using \eqref{Gronwall2}.

We have the same bound for
\[p_3=\pc\left(A_\kappa^{(E,L)}\cap B_\kappa^{(E,L)}\cap \bigcup_{|E-E'|\leq\e^{-\sigma L^\beta}}(B_{\kappa/2}^{(E',L)})^c\right).\]

Finally, Lemma~\ref{lem:WegLyap} directly gives us that
\[p_4=\pc((A_\kappa^{(E,L)})^c)+\pc((B_\kappa^{(E,L)})^c)\leq2\e^{-\frac12\tau\alpha_1(L/2)^\beta}\] for $L$ large enough.

Seeing that 
$\pc(C^{(E,L)})\leq p_1+p_2+p_3+p_4$, we get the result.
\end{proof}

\subsection{Initial length-scale estimate}
In this section, we prove an initial length-scale estimate of the type (H1)($\theta$, $E$, $L_0$). We begin by giving a collection of  large deviation results on the convergence $\frac1n\log\|U_E(n)\|\to\gamma(E)$, which can be taken without modification from \cite{DSS}. In all the following, we fix $E\in\R\backslash M$.

We recall that \[U_E(n)=g_E(n,\omega)...g_E(1,\omega)\] and that the Lyapunov exponent is 
\[\gamma(E)=\lim_{n\to\infty}\frac{1}{n}\ec(\|U_E(n,\omega)\|).\]

\begin{lemma}[\cite{DSS}, Lemma~6.2]
 There exists $\alpha>0$ such that for every $\epsilon>0$ and $x\neq 0$, one has
 \[\limsup_{n\to\infty}\frac1n\log\pc(|\log\|U_E(n)x\|-n\gamma(E)|>n\epsilon)<\alpha.\]
\end{lemma}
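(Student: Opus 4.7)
The plan is to recognize this as a Le Page-type large deviation estimate for products of i.i.d. random matrices in $SL_2(\R)$, and to invoke a ready-made theorem from Bougerol–Lacroix. The hypotheses needed for such a theorem are of two kinds: an irreducibility/non-compactness condition on the Fürstenberg group (to guarantee positivity of $\gamma(E)$), and a moment condition on the matrices. Both are already available to us.

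More precisely, I would proceed as follows. Fix $E\in\R\backslash M$. By Proposition~\ref{defM}, the Fürstenberg group $G_\mu(E)$ is non-compact and strongly irreducible in the sense of Fürstenberg's theorem. Next, the uniform bound \eqref{Gronwall1} on the transfer matrices, together with $\det g_E(k,\omega)=1$, implies that both $\|g_E(k,\omega)\|$ and $\|g_E(k,\omega)^{-1}\|$ are uniformly bounded in $\omega$ and $k$, so any exponential moment condition $\ec(\|g_E(k,\omega)\|^t+\|g_E(k,\omega)^{-1}\|^t)<\infty$ holds for every $t>0$. These are exactly the assumptions of the large deviation theorem for products of random matrices (see for instance \cite[Theorem~V.6.1]{BL}), which asserts in this setting that for every $\epsilon>0$ there exists $\alpha=\alpha(\epsilon)>0$ with
\[
\pc\bigl(|\log\|U_E(n)x\|-n\gamma(E)|>n\epsilon\bigr)\leq \e^{-\alpha n}
\]
for all unit $x$ and $n$ large enough, uniformly in the direction of $x$. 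Taking logarithms and dividing by $n$ yields the desired bound on the limsup.

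The main subtlety is that the standard large deviation theorem is phrased for unit vectors and that the quoted statement must be uniform in (at least the direction of) $x$; this is handled by the projective compactness of the unit sphere and the invariance of the conclusion under scalar multiplication of $x$. For non-unit $x\neq 0$, normalize and use $\log\|U_E(n)x\|=\log\|x\|+\log\|U_E(n)(x/\|x\|)\|$, so that the constant $\log\|x\|$ is absorbed for $n$ large. Apart from this cosmetic step, no new argument beyond invoking \cite{BL} is required, and I expect no genuine obstacle: the delicate work — verifying the Fürstenberg hypotheses for the group $G_\mu(E)$ away from the discrete set $M$ — was already carried out in Section~\ref{sec:3}.
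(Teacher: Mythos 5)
Your proposal is correct and consistent with the paper's treatment: the paper does not actually prove this lemma but explicitly states that these large deviation results "can be taken without modification from \cite{DSS}", and DSS's Lemma~6.2 is itself a direct application of the Le~Page large deviation theorem for i.i.d.\ products in $SL_2(\R)$, which is precisely what you invoke from \cite{BL}. Your verification of the hypotheses — strong irreducibility and non-compactness from Proposition~\ref{defM} (recalling that for non-compact subgroups of $SL_2(\R)$, condition~(2) of Theorem~\ref{Furst} is equivalent to strong irreducibility via \cite[Proposition~II.4.3]{BL}), and the exponential moment condition from the uniform bound \eqref{Gronwall1} together with $\|g^{-1}\|=\|g\|$ for $g\in SL_2(\R)$ — is sound, and the normalization remark for non-unit $x$ is handled correctly.
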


In particular, we get that for every $\epsilon>0$ and $x\neq0$, there exists $n_0\in\N$ such that for all $n\geq n_0$
\begin{equation}\label{asymp-U}
 \pc\left(\e^{(\gamma(E)-\epsilon)n}\leq\|U_E(n)x\|\leq\e^{(\gamma(E)+\epsilon)n}\right)\geq1-\e^{-\alpha n}.
\end{equation}

\begin{lemma}[\cite{DSS}, Lemma~6.3]
 Fix $y$ with $\|y\|=1$. For all $\epsilon>0$, there are $n_0\in\N$ and $\delta_0>0$ such that
 \[\sup_{x\neq0}\pc\left(\frac{|\langle U_E(n)x,y\rangle|}{\| U_E(n)x\|}<\e^{-\epsilon n}\right)<\e^{-\delta_0n}\text{ for }n\geq n_0.\]
 \end{lemma}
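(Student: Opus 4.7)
The statement is a standard large deviation estimate for the direction of a product of random $SL_2(\R)$ matrices: it says that a fixed unit vector $y$ does not become near-orthogonal to $U_E(n)x$ other than on an exponentially small event. Since our transfer matrices live in $SL_2(\R)$ (constancy of the Wronskian) and the Fürstenberg hypotheses have been verified at every $E\in\R\setminus M$, the proof will follow the lines of Lemma~6.3 of \cite{DSS}, relying on the same tools from the general theory of random products of matrices collected in \cite{BL,CL}.

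The plan is to use duality together with the previous lemma applied in two different ways. First, write $|\langle U_E(n)x,y\rangle|=|\langle x,U_E(n)^T y\rangle|$ and observe that $U_E(n)^T=g_E(1,\omega)^T\cdots g_E(n,\omega)^T$ is a product of i.i.d.\ matrices in $SL_2(\R)$; since transposition preserves Fürstenberg's conditions and the Lyapunov exponent, the previous lemma gives, for any $\epsilon'>0$ and $n$ large enough,
\[
\pc\!\left(\|U_E(n)^T y\|\ge \e^{(\gamma(E)-\epsilon')n}\right)\ge 1-\e^{-\alpha n},
\quad\pc\!\left(\|U_E(n)x\|\le \e^{(\gamma(E)+\epsilon')n}\|x\|\right)\ge 1-\e^{-\alpha n}.
\]
By homogeneity one may assume $\|x\|=1$ throughout.

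The heart of the argument is to estimate the angle between $\tilde x:=x$ and $\tilde z_n:=U_E(n)^T y/\|U_E(n)^T y\|$. Under the Fürstenberg hypotheses, there is a unique invariant probability measure $\nu_E$ on $P(\R^2)$ for the group $G_\mu(E)$, and by Le~Page's theorem (\cite[Theorem~V.4.1]{BL}, see also~\cite[Ch.~V]{CL}), this measure is Hölder continuous of some exponent $s=s(E)>0$, uniformly for $E$ in any compact subset of $\R\setminus M$; moreover, the law of $\overline{U_E(n)^T y}$ on $P(\R^2)$ converges to $\nu_E$ at an exponential rate. Combining these two facts yields
\[
\sup_{\tilde x\in S^1}\pc\!\left(|\langle\tilde x,\tilde z_n\rangle|<\delta\right)\le C\delta^{s}+C\e^{-cn},
\]
so taking $\delta=\e^{-\epsilon' n}$ gives $\pc(|\langle\tilde x,\tilde z_n\rangle|<\e^{-\epsilon' n})\le \e^{-\delta_1 n}$ for some $\delta_1>0$, uniformly in $\tilde x$.

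Chaining the three estimates, on an event of probability at least $1-3\e^{-\min(\alpha,\delta_1)n}$ one has
\[
\frac{|\langle U_E(n)x,y\rangle|}{\|U_E(n)x\|}
=\frac{\|U_E(n)^T y\|\,|\langle \tilde x,\tilde z_n\rangle|}{\|U_E(n)x\|}
\ge \frac{\e^{(\gamma(E)-\epsilon')n}\,\e^{-\epsilon' n}}{\e^{(\gamma(E)+\epsilon')n}}=\e^{-3\epsilon' n}.
\]
Choosing $\epsilon'=\epsilon/3$ at the outset produces the announced bound with some $\delta_0>0$, uniformly in $x\neq 0$. The main obstacle is just the invocation of Le~Page-type regularity for $\nu_E$ and the exponential convergence to this invariant measure; once these are cited from \cite{BL,CL}, the rest is bookkeeping as above. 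Both results are purely group-theoretic statements about products of i.i.d.\ elements of $SL_2(\R)$ whose supporting group satisfies the Fürstenberg hypotheses, so they transfer without change from the Schrödinger setting of \cite{DSS} to the Dirac transfer matrices introduced here.
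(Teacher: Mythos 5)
The paper does not prove this lemma; it quotes it from \cite{DSS} with the remark that the large deviation results carry over without modification, so there is no in-paper proof to compare against and your task is to reconstruct the \cite{DSS} argument. Your version is in the right circle of ideas and would work once tightened, but it is more roundabout than necessary. Observe that $\frac{|\langle U_E(n)x,y\rangle|}{\|U_E(n)x\|}=|\langle\widehat{U_E(n)x},y\rangle|$ where $\widehat{v}=v/\|v\|$: the quantity depends only on the \emph{direction} $\widehat{U_E(n)x}\in P(\R^2)$, and what must be shown is that, uniformly in $x$ and $n$, this random direction falls into the $\e^{-\epsilon n}$-cap around $y^\perp$ with exponentially small probability. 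The standard route, which is the one \cite{DSS} follow, is to invoke the uniform moment bound from Le~Page's theory: under the Fürstenberg hypotheses there exist $\rho>0$ and $C<\infty$ with $\sup_n\sup_{\|x\|=1}\ec\bigl(|\langle\widehat{U_E(n)x},y\rangle|^{-\rho}\bigr)\le C$ (cf.~\cite{BL,CL}), and Chebyshev finishes in one line with $\delta_0=\rho\epsilon$, without any appeal to Lemma~6.2. Your detour through $U_E(n)^T$ and the chaining of three estimates is sound in principle -- by exchangeability $g_E(1)^T\cdots g_E(n)^T$ has the same law as $g_E(n)^T\cdots g_E(1)^T$, and transposition preserves the Fürstenberg conditions and the Lyapunov exponent, so the previous lemma does apply to $\|U_E(n)^T y\|$ -- but the two norm estimates you chain in cancel against each other and do not reduce what has to be proved; after cancellation you are left with exactly the angular estimate.

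There is also one genuine gap in the angular step. To obtain $\pc(|\langle\tilde x,\tilde z_n\rangle|<\delta)\le C\delta^s+C\e^{-cn}$ you combine Hölder regularity of the invariant measure with exponential convergence of the law of $\tilde z_n$ to it, but the spectral-gap convergence rate holds against Hölder (or Lipschitz) test functions, while the set you test against is a cap whose indicator is discontinuous; you need to smooth the indicator and balance the smoothing error against the mixing rate. You should also note that the relevant invariant measure is the one for the \emph{transposed} process, not $\nu_E$ itself as written, and verify its Hölder regularity separately (it follows by the same Le~Page argument since transposition preserves Fürstenberg's conditions). None of this is deep, but it is precisely the bookkeeping that the uniform moment inequality above sidesteps, which is why that formulation is the one to quote.
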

 
 With these two lemmas, we can prove that, with large probability, the matrix elements $|\langle U_E(n)x,y\rangle|$ grow exponentially at a rate almost equal to the Lyapunov exponent.
 \begin{coro}[\cite{DSS}, Corollary~6.4]\label{coro:H1}
  Let $\|x\|=\|y\|=1$. Then, for every $\epsilon>0$, there exists $\delta>0$ and $n_0\in\N$ such that 
  \[\pc\left(|\langle U_E(n)x,y\rangle|\geq \e^{(\gamma(E)-\epsilon) n}\right)\geq1-\e^{-\delta n}\text{ for }n\geq n_0.\]
 \end{coro}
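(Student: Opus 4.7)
The plan is to combine the two preceding lemmas via the obvious decomposition
\[
|\langle U_E(n)x,y\rangle| \;=\; \|U_E(n)x\|\cdot\frac{|\langle U_E(n)x,y\rangle|}{\|U_E(n)x\|},
\]
so that an exponential lower bound on the product follows from controlling the two factors simultaneously on a common large-probability event.

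First I would fix $\epsilon>0$ and apply the consequence \eqref{asymp-U} of Lemma~6.2 with $\epsilon/2$ to obtain, for all $n$ large enough, an event $\Omega_1^{(n)}$ of probability at least $1-\e^{-\alpha n}$ on which
\[
\|U_E(n)x\|\;\geq\;\e^{(\gamma(E)-\epsilon/2)n}.
\]
Next I would apply Lemma~6.3 to the same $x$ (and the fixed unit vector $y$) with parameter $\epsilon/2$, obtaining, for all $n$ large enough, an event $\Omega_2^{(n)}$ of probability at least $1-\e^{-\delta_0 n}$ on which
\[
\frac{|\langle U_E(n)x,y\rangle|}{\|U_E(n)x\|}\;\geq\;\e^{-(\epsilon/2)n}.
\]

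On the intersection $\Omega_1^{(n)}\cap\Omega_2^{(n)}$, multiplying the two bounds gives exactly
\[
|\langle U_E(n)x,y\rangle|\;\geq\;\e^{(\gamma(E)-\epsilon)n},
\]
while the union bound yields
$\pc\bigl(\Omega_1^{(n)}\cap\Omega_2^{(n)}\bigr)\geq 1-\e^{-\alpha n}-\e^{-\delta_0 n}$.
Choosing any $0<\delta<\min(\alpha,\delta_0)$ and absorbing the factor $2$ into the exponential by taking $n_0$ large, we get $\pc(\Omega_1^{(n)}\cap\Omega_2^{(n)})\geq 1-\e^{-\delta n}$ for all $n\geq n_0$, which is the desired estimate.

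There is no real obstacle here: the proof is essentially bookkeeping between the two lemmas. The only care needed is to use $\epsilon/2$ in each lemma so that the sum of the two small losses adds up to exactly $\epsilon$, and to keep track of the dependence of $n_0$ on $\epsilon$ (which comes from both lemmas) and of $\delta$ on $\alpha$ and $\delta_0$ (but not on $n$). Since $\alpha$ in Lemma~6.2 is independent of $\epsilon$ while $\delta_0$ in Lemma~6.3 depends on $\epsilon$, the resulting $\delta$ will depend on $\epsilon$, which is consistent with the statement.
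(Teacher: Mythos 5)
Your proof is correct and is exactly the intended argument: the paper simply cites \cite{DSS} for this corollary, and the standard derivation there is precisely the decomposition $|\langle U_E(n)x,y\rangle|=\|U_E(n)x\|\cdot\frac{|\langle U_E(n)x,y\rangle|}{\|U_E(n)x\|}$, controlling the two factors via~\eqref{asymp-U} and Lemma~6.3 with $\epsilon/2$ each and then a union bound. The bookkeeping on $\delta$ and $n_0$ is handled correctly, including the observation that $\delta$ ends up depending on $\epsilon$ through $\delta_0$.
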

We can now prove the following theorem.
\begin{theorem}\label{thm:H1}
 For every $\epsilon>0$, there exists $\delta>0$ and $L_0\in\N$ such that for $L\geq L_0$, $L\in3\N\backslash 6\N$, we have
 \begin{equation}
  \pc(\Lambda_L(0)\text{ is }(\omega,\gamma(E)-\epsilon,E)-\text{regular})\geq1-\e^{-\delta L}.
 \end{equation}

\end{theorem}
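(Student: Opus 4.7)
The plan is to express the resolvent kernel of $H_{\omega,L}$ in terms of two Dirichlet fundamental solutions and then combine the transfer-matrix large-deviation bounds from Corollary~\ref{coro:H1} and \eqref{asymp-U} to control its numerator and denominator separately. First I would introduce $\psi_-$, the unique solution of $H_\omega\psi=E\psi$ on $[-L/2,L/2]$ with $\psi_-^\uparrow(-L/2)=0$ and $\psi_-^\downarrow(-L/2)=1$, and the analogous $\psi_+$ carrying Dirichlet data at $+L/2$. Their Wronskian $W=\psi_-^\uparrow(x)\psi_+^\downarrow(x)-\psi_-^\downarrow(x)\psi_+^\uparrow(x)$ is constant in $x$ and non-zero precisely when $E\notin\sigma(H_{\omega,L})$, and the kernel of $(H_{\omega,L}-E)^{-1}$ then has the standard rank-one separable form built from $\psi_-$ and $\psi_+$ divided by $W$.

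The quantitative heart of the argument is an \emph{exponential} lower bound on $|W|$. Evaluating at $x=-L/2$, where $\psi_-(-L/2)=\binom{0}{1}$, one gets $|W|=|\psi_+^\uparrow(-L/2)|$. Writing $\psi_+(-L/2)=U_E(L)^{-1}\binom{0}{1}$ with $U_E(L)$ the transfer matrix from $-L/2$ to $L/2$, the cofactor formula on $SL_2(\R)$ identifies this quantity (up to sign) with a single matrix element $|\langle U_E(L)\binom{0}{1},\binom{1}{0}\rangle|$. Corollary~\ref{coro:H1} then yields
\[|W|\ge \e^{(\gamma(E)-\epsilon')L}\]
outside an event of probability at most $\e^{-\delta_1 L}$, for any prescribed $\epsilon'>0$.

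For the numerator, inequality \eqref{asymp-U} together with $\|M^{-1}\|=\|M\|$ on $SL_2(\R)$ gives $\|\psi_\pm(y)\|\le \e^{(\gamma(E)+\epsilon')|L/2\mp y|}$ for $y$ in the central third $[-L/6,L/6]$ outside a further event of probability $\le\e^{-\delta_2 L}$, while Lemma~\ref{lem:Gronwall1} bounds $\|\psi_\mp(x)\|\le C$ \emph{deterministically} for $x$ in $\supp\Gamma_{0,L}$, since that set lies within unit distance of the Dirichlet endpoints. On the intersection of the good events the pointwise kernel therefore obeys
\[|G_{\omega,L}(x,y;E)|\lesssim \frac{\|\psi_\mp(x)\|\,\|\psi_\pm(y)\|}{|W|}\le \e^{-(\gamma(E)-\epsilon)L/2}\]
after choosing $\epsilon'$ sufficiently small, and a Hilbert--Schmidt estimate over the region $\Gamma_{0,L}\times\supp\chi_{0,L/3}$ (whose measure is only polynomial in $L$) turns this pointwise bound into an operator-norm bound of the same exponential order.

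The principal obstacle is precisely the Wronskian lower bound: a generic non-vanishing (which only excludes eigenvalues) is far too weak, and we really need exponential largeness of a \emph{particular} matrix element of the random transfer matrix, not merely of its operator norm. This is exactly what Corollary~\ref{coro:H1} delivers, via the strong irreducibility and non-compactness of the Fürstenberg group established in Section~\ref{sec:3}. The arithmetic condition $L\in 3\N\setminus 6\N$ enters only to ensure that $L_0$, $L_0/3$ and the subsequent multiscale sizes are compatible integer lengths; it plays no role in the probabilistic estimates themselves.
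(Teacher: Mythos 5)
Your proposal takes essentially the same route as the paper: Dirichlet fundamental solutions $u_\pm$ at the two endpoints, the separable Green's kernel $u_\pm\otimes u_\mp/W$, a lower bound on the Wronskian as a single transfer-matrix entry via Corollary~\ref{coro:H1}, an upper bound on the numerator using Lemma~\ref{lem:Gronwall1} near the boundary and \eqref{asymp-U} on the central third, and then a passage from the pointwise kernel bound to an operator-norm bound. The only difference is that you invoke a Hilbert--Schmidt estimate where the paper uses Schur's test; both cost only a polynomial factor in $L$ and are interchangeable here.
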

\begin{proof}
We denote by $u_\pm$ the solutions to $H_\omega u=E u$ satisfying the Dirichlet boundary condition at $\pm L/2$, in the sense that $u_\pm(\pm L/2)=(0,1)^t$. We will denote by \[W(u_+,u_-)=u_+^\uparrow u_-^\downarrow-u_-^\uparrow u_+^\downarrow\] the (constant) Wronskian of $u_+$ and $u_-$.
This Wronskian is 0 if and only if $E\in\sigma(H_{\omega,\Lambda})$.

 The first fact we have to notice is that the integral kernel of the resolvent $G_\Lambda(E,x,y)$ is given by
 \begin{equation}\label{defG}
  G_\Lambda(E,x,y)=\frac{1}{W(u_+,u_-)}\left\{\begin{array}{r l }u_+(x)\otimes u_-(y)&\text{ for }x\geq y\\
                            u_-(x)\otimes u_+(y)&\text{ for }x< y
                            \end{array}\right.
 \end{equation}

 Indeed, if we take, for some $\phi\in L^2(\Lambda,\C^2)$, $\psi(x)=\int_\Lambda G_\Lambda(E,x,y)\phi(y)\d y$, we see first that $\psi$ is in the domain of $H_\omega$ (i.e. it is in $H^1$ and satisfies the boundary conditions). An explicit calculation of $H_\omega\psi$ completes the proof.

Thus, to prove that the operator norm of $\Gamma_{0,L} R_{\omega,\Lambda}\chi_{0,L/3}$ is small enough (with high probability), we will use Schur's test, which implies to control its integral kernel.
We begin by proving that $|W(u_+,u_-)|$ is big with probability large enough.
To this purpose, we remark that, in particular, if we take the functions at $L/2$,
\[W(u_+,u_-)=\langle u_-(L/2),(-1,0)^t\rangle=\langle U_E(L/2,-L/2)(0,1)^t,(-1,0)^t\rangle.\]

Because of stationarity, we get from Corollary~\ref{coro:H1} that, for $L\geq L_0$,
\[\pc\left(|W(u_+,u_-)|\geq \e^{(\gamma(E)-\epsilon)L}\right)\geq 1-\e^{-\delta L}.\] 

Let $x\in[(L-3)/2,(L-1)/2]$ and $y\in[-L/6,L/6]$. By Lemma~\ref{lem:Gronwall1}, we get that there exists $C>0$, independent of $L$ and $\omega$, such that for all such $x$
\[|u_+(x)|\leq C.\]

On the other hand, for all such $y$,
\[|u_-(y)|=|U_E(y,-L/2)(0,1)^t|\leq \|U_E(y,\lfloor y+1/2\rfloor-1/2)\||U_E(\lfloor y+1/2\rfloor-1/2,-L/2)(0,1)^t|.\]

Lemma~\ref{lem:Gronwall1} again gives that there exists $C>0$ independent of $L$ such that, for all $\omega\in\Omega$, $E\in\R$ and $y\in\R$, \[\|U_E(y,\lfloor y+1/2\rfloor-1/2)\|\leq C\] and, by stationarity and \eqref{asymp-U}, there exists $\alpha>0$ such that
\begin{equation}
 \pc\left(|A_E(\lfloor y+1/2\rfloor-1/2,-L/2)(0,1)^t|\leq \e^{(\gamma(E)+\epsilon)2L/3}\right)\geq1-\e^{-\alpha L/3}\text{ if } L/3\geq L_0.
\end{equation}

Using the expression of $G_\Lambda(E,x,y)$ given by~\eqref{defG}, all the previous estimates give that
\begin{equation}
 \pc\left(|G_\Lambda(E,x,y)|\leq C\e^{-(\gamma(E)-5\epsilon)L/3}\right)\geq 1-\e^{-\delta L}-\e^{-\alpha L/3}
\end{equation}
if $L$ is large enough. We obviously have the same estimate for $x\in[-(L-1)/2,-(L-3)/2]$ and $y\in[-L/6,L/6]$.

Thus, we get that there exists some $\delta>0$ such that, with probability at least $1-\e^{-\delta L}$, 
\[\sup_x\int|\Gamma(x)G_\Lambda(E,x,y)\chi_{0,L/3}(y)|\d y\leq \e^{-(\gamma(E)-\epsilon)L/3}\]
and
\[\sup_y\int|\Gamma(x)G_\Lambda(E,x,y)\chi_{0,L/3}(y)|\d y\leq \e^{-(\gamma(E)-\epsilon)L/3},\]
which gives the result through Schur's test.
\end{proof}

It remains to prove that the value of $L_0$ is uniform on any compact interval. In fact, we are even able to prove the stronger following result.
\begin{coro}
 For every $E\in I$ and every $\beta\in(0,1)$, $\sigma$, $\epsilon>0$, let $\alpha>0$, $\delta>0$ and $L_0\in\N$ be as Theorem~\ref{thm:H1} and Proposition~\ref{propWeg}. For every $\epsilon'>\epsilon$ and every $L\geq L_0$, we define
 \[\zeta_L=\frac12\e^{-2\sigma L_\beta}\left(\e^{-(\gamma(E)-\epsilon')L/3}-\e^{-(\gamma(E)-\epsilon)L/3}\right).\]
 Then, we have for every $L\geq L_0$:
 \begin{equation}
  \pc\left(\forall E'\in(E-\zeta_L,E+\zeta_L),\ \Lambda_L(0)\text{ is }(\omega, \gamma(E)-\epsilon',E')-\text{regular}\right)\geq1-\e^{-\delta L}-\e^{-\alpha L^\beta}.
 \end{equation}

\end{coro}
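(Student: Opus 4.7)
The plan is to combine the single-energy regularity from Theorem~\ref{thm:H1} with the Wegner-type estimate of Proposition~\ref{propWeg} via the first resolvent identity, in the classical way used e.g.\ in~\cite{DSS}. The rough idea is: once $\|\Gamma_{0,L}R_{\omega,0,L}(E)\chi_{0,L/3}\|$ is exponentially small at $E$ and the spectrum of $H_{\omega,L}$ is not too close to $E$, the resolvent at a nearby $E'$ differs from the one at $E$ by a quantity controlled by $|E-E'|\,\|R_{\omega,0,L}(E)\|\,\|R_{\omega,0,L}(E')\|$, and the definition of $\zeta_L$ is tailored so that this perturbation is exactly absorbed when passing from the exponent $\gamma(E)-\epsilon$ to the exponent $\gamma(E)-\epsilon'$.

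Concretely, introduce the events
\[
\mathcal{A}_L=\{\|\Gamma_{0,L}R_{\omega,0,L}(E)\chi_{0,L/3}\|\le \e^{-(\gamma(E)-\epsilon)L/3}\},\qquad \mathcal{B}_L=\{\dist(E,\sigma(H_{\omega,0,L}))\ge \e^{-\sigma L^\beta}\}.
\]
Theorem~\ref{thm:H1} yields $\pc(\mathcal{A}_L)\ge 1-\e^{-\delta L}$ and Proposition~\ref{propWeg} yields $\pc(\mathcal{B}_L)\ge 1-\e^{-\alpha L^\beta}$, so a union bound gives $\pc(\mathcal{A}_L\cap\mathcal{B}_L)\ge 1-\e^{-\delta L}-\e^{-\alpha L^\beta}$.

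On $\mathcal{A}_L\cap\mathcal{B}_L$, the second factor of $\zeta_L$ is bounded by $\e^{-(\gamma(E)-\epsilon')L/3}\le 1$, so $\zeta_L\le\tfrac12\e^{-\sigma L^\beta}$. Hence for every $E'\in(E-\zeta_L,E+\zeta_L)$,
\[
\dist(E',\sigma(H_{\omega,0,L}))\ge \tfrac12\e^{-\sigma L^\beta},
\]
which gives $\|R_{\omega,0,L}(E')\|\le 2\e^{\sigma L^\beta}$ and $\|R_{\omega,0,L}(E)\|\le\e^{\sigma L^\beta}$. The first resolvent identity gives
\[
\Gamma_{0,L}R_{\omega,0,L}(E')\chi_{0,L/3}=\Gamma_{0,L}R_{\omega,0,L}(E)\chi_{0,L/3}+(E'-E)\,\Gamma_{0,L}R_{\omega,0,L}(E)R_{\omega,0,L}(E')\chi_{0,L/3},
\]
and bounding the second term by $|E'-E|\,\|R_{\omega,0,L}(E)\|\,\|R_{\omega,0,L}(E')\|\le 2\zeta_L\e^{2\sigma L^\beta}$, together with the precise definition of $\zeta_L$, yields
\[
\|\Gamma_{0,L}R_{\omega,0,L}(E')\chi_{0,L/3}\|\le \e^{-(\gamma(E)-\epsilon)L/3}+\bigl(\e^{-(\gamma(E)-\epsilon')L/3}-\e^{-(\gamma(E)-\epsilon)L/3}\bigr)=\e^{-(\gamma(E)-\epsilon')L/3},
\]
which, recalling the definition of regularity, is exactly the $(\omega,\gamma(E)-\epsilon',E')$-regularity claimed.

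There is no serious obstacle once Theorem~\ref{thm:H1} and Proposition~\ref{propWeg} are in hand; the only delicate point is the bookkeeping of constants, where the factor $\tfrac12$ in $\zeta_L$ is precisely what is needed to simultaneously absorb the loss of a factor~$2$ in the bound on $\|R_{\omega,0,L}(E')\|$ and to produce the telescoping cancellation that upgrades the exponent from $\gamma(E)-\epsilon$ to $\gamma(E)-\epsilon'$.
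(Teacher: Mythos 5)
Your proof is correct and follows essentially the same route as the paper's: a union bound on the events from Theorem~\ref{thm:H1} and Proposition~\ref{propWeg}, followed by the first resolvent identity and the telescoping cancellation built into $\zeta_L$. If anything, your bookkeeping is slightly more careful than the paper's, which writes $\|R_{\omega,0,L}(E')R_{\omega,0,L}(E)\|\le\e^{2\sigma L^\beta}$ without justifying the bound on $\|R_{\omega,0,L}(E')\|$, whereas you correctly note that one only gets $2\e^{\sigma L^\beta}$ and that the factor $\tfrac12$ in $\zeta_L$ is there precisely to absorb that loss.
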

\begin{proof}
 We know that, with probability higher than $1-\e^{-\delta L}-\e^{-\alpha L^\beta}$, we have both that $\Lambda_L(0)$ is $(\omega,\gamma(E)-\epsilon,E)$-regular and that $\dist(E,\sigma(H_{\omega,L}))\leq \e^{-\sigma L^\beta}$. In this case, for every $E'\in (E-\zeta_L,E+\zeta_L)$, $E'\notin\sigma(H_{\omega,L})$. We can thus write the resolvent equation:
 \begin{eqnarray*}\|\Gamma_{0,L}R_{\omega,0,L}(E')\chi_{0,L/3}\|&=
  \|\Gamma_{0,L}(R_{\omega,0,L}(E)+(E-E')R_{\omega,0,L}(E')R_{\omega,0,L}(E))\chi_{0,L/3}\|\\
  &\leq\|\Gamma_{0,L}R_{\omega,0,L}(E)\chi_{0,L/3}\|+|E-E'|\|R_{\omega,0,L}(E')R_{\omega,0,L}(E)\|\\
  &\leq\e^{-(\gamma(E)-\epsilon)L/3}+|E-E'|\e^{2\sigma L^\beta}\\
  &\leq \e^{-(\gamma(E)-\epsilon')L/3}.
 \end{eqnarray*}
We conclude the proof with the fact that $|E-E'|\leq \zeta_L$.
\end{proof}

\appendix
\section{Some results on Weyl-Titchmarsh theory of Dirac operators}\label{app:A}
This section is devoted to the proof of Lemma~\ref{lem:b0}. The fundamental tool is the Weyl-Titchmarsh function.
Given a Dirac operator with potential $D=D_0+B$, a real number $x_0$ and a complex number $z$ with $z\notin\R$, we define the Weyl-Titchmarsh number $m_B(z,x_0)$ as the unique complex number such that the solution to $Du=zu$ on $(x_0,+\infty)$ satisfying $u(x_0)=\begin{pmatrix} 1\\m_B(z,x_0)\end{pmatrix}$ is in $L^2$ in a neighbourhood of $+\infty$. Existence and uniqueness come from the ``limit-point property'' of Dirac operators (cf.~\cite{Weidmann} and~\cite{CG}).

 Weyl-Titchmarsh theory for Dirac operators is studied in \cite{CG}. In this paper, they consider operators of the form
\[D_0+V_{am}\sigma_1+V_{sc}\sigma_3.\] Such operators will be called in \emph{normal form}.
They prove the following theorem of local uniqueness.

\begin{theorem}[\cite{CG},~Theorem~1.2]\label{loc-unic}
 Fix $x_0\in \R$ and suppose that there are two potentials $B_1$ and $B_2$ such that, for all $j$, $B_j\in L^1([x_0,x_0+R])$ for all $R>0$ and it is in normal form on $(x_0,+\infty)$. Denote by $m_{B_j}$ the Weyl-Titchmarsh functions at $x_0$ for the half-line $(x_0,+\infty)$ associated with the operator with potential $B_j$. 
 If, for all $\epsilon>0$, \begin{equation}\label{eg-WTF}
                            |m_{B_1}(z)-m_{B_2}(z)|=O(e^{-2\Im(z)(a-\epsilon)})
                           \end{equation}
as $z$ goes to infinity along a ray $\rho_1$ with argument in $(0,\pi/2)$ and along a ray $\rho_2$ with argument in $(\pi/2,\pi)$, then \[B_1(x)=B_2(x)\text{ for almost every }x\in[x_0,x_0+a].\]
\end{theorem}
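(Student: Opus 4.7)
The plan is to adapt the $A$-amplitude method, as developed by Simon for the local Borg--Marchenko theorem in the Schrödinger setting, to the Dirac operator. The fundamental object is a representation of the Weyl--Titchmarsh function of the form
\begin{equation*}
m_B(z,x_0) \;=\; m_\infty(z) \;+\; \int_0^{\infty} A_B(\alpha)\, \e^{2\iu z \alpha}\, \d\alpha,
\end{equation*}
valid for $z$ in the upper half-plane, where $m_\infty(z)$ captures the universal leading asymptotics at infinity (in the Dirac case $m \to \pm\iu$, unlike the Schrödinger case where $m \sim -\kappa$) and $A_B \in L^1_{loc}([0,\infty))$ is the so-called $A$-amplitude associated with $B$. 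The cornerstone of the method is the locality property: $A_B(\alpha)$ for $\alpha \in [0,a]$ depends only on the restriction of $B$ to $[x_0, x_0+a]$. This locality converts the globally defined object $m_B$ into a device for extracting purely local spectral data.

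Granted such a representation, the proof splits into two steps. First, I would convert the hypothesis $|m_{B_1}(z)-m_{B_2}(z)| = O(\e^{-2\Im(z)(a-\epsilon)})$, assumed along two rays with arguments in $(0,\pi/2)$ and $(\pi/2,\pi)$, into equality of the $A$-amplitudes on $[0,a]$. Setting $\Delta(\alpha) = A_{B_1}(\alpha)-A_{B_2}(\alpha)$, the difference of the two Weyl functions is the Laplace--Fourier transform $\int_0^\infty \Delta(\alpha) \e^{2\iu z \alpha}\,\d\alpha$. The assumed decay along two rays, upgraded to a uniform decay throughout the intervening sector via a Phragmén--Lindelöf argument, forces $\Delta$ to vanish on $[0,a]$ by a sectoral Paley--Wiener theorem. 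Second, I would invert the map $B \mapsto A_B$ locally: knowing $A_B$ on $[0,a]$ determines $B$ on $[x_0,x_0+a]$ through a Volterra-type integral equation relating $A_B$ to the diagonal of the Green's function, and thence to $B$ via a standard inverse-problem argument for first-order systems.

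The main obstacles are twofold. The first is constructing the $A$-amplitude and proving its locality for Dirac operators in normal form: here the potential is matrix-valued (with two scalar components $V_{am}$, $V_{sc}$), the Wronskian structure is different from the Schrödinger case, and the $m$-function has nontrivial universal asymptotics $\pm\iu$ that must be subtracted cleanly before any Paley--Wiener-type argument applies. This forces the Volterra equation for the inversion step to be a $2\times 2$ system rather than a scalar one. The second, more technical obstacle is the sectoral aspect: the hypothesis controls $m_{B_1}-m_{B_2}$ only on rays, not on the whole imaginary axis, and one must verify that these two rays suffice to pin down $\Delta$ on $[0,a]$. This is where the Phragmén--Lindelöf step is essential, combined with control of $m_{B_1}-m_{B_2}$ away from infinity (which is automatic from the general theory since both functions are Herglotz).

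A reasonable alternative to the $A$-amplitude approach would be to work directly with high-energy asymptotics of the transfer matrix, expressing $m_B(z,x_0)$ via the Weyl solution and expanding in inverse powers of $z$ along each ray; then the hypothesis translates to equality of infinitely many asymptotic coefficients, which by standard trace-formula arguments for Dirac operators recovers the potential locally. However, the $A$-amplitude route is cleaner and has the advantage of giving the full local equivalence in a single stroke, without having to identify each coefficient separately.
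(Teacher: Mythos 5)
The paper does not prove this theorem: it is imported verbatim from Clark and Gesztesy \cite{CG} (their Theorem~1.2), so there is no in-paper argument against which to check your proposal. Taken on its own terms, your outline is a plausible adaptation of Simon's $A$-amplitude route to the Dirac setting, and it is broadly in the spirit of how such local Borg--Marchenko theorems are proved. But what you have written is a plan rather than a proof. The three load-bearing ingredients --- (i) the existence of a representation $m_B(z)=m_\infty(z)+\int_0^\infty A_B(\alpha)\,\e^{2\iu z\alpha}\,\d\alpha$ for Dirac $m$-functions in normal form, (ii) the locality of $\alpha\mapsto A_B(\alpha)$ on $[0,a]$, and (iii) the local invertibility of $B\mapsto A_B$ --- are all asserted and none is established. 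For a second-order Schr\"odinger operator these are Simon's theorems; for a first-order $2\times 2$ system they require separate proofs: the universal asymptotics is $m\to\pm\iu$ rather than $-\sqrt{-z}$, the Riccati equation has a different algebraic structure, and the normal-form hypothesis is not cosmetic but is exactly what makes the inverse map $A_B\mapsto B$ well-posed (without it there is the gauge freedom by $R_{2\phi}$ discussed in Appendix~\ref{app:A}, and $B$ is not uniquely determined by $m_B$). The Phragm\'en--Lindel\"of step upgrading decay along two rays to decay in the intervening sector is the least problematic part, but even there you should justify that $m_{B_1}-m_{B_2}$ is bounded in the sector; this does follow from the Herglotz property, but only after the common universal term has been subtracted. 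In short: right strategy, honest about the obstacles, but the theorem has been reduced to several nontrivial lemmas that are left unproven.
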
 

According to the proof of Lemma~\ref{lem:b0}, we will take in all the following $x_0=-1/2$.
Now, let us consider the case where the operator is not necessarily in normal form -- under each of our two hypotheses. It is possible to prove (cf.~\cite{BEKT}) that the operator with an electrostatic potential $D=D_0+V_{am}\sigma_1+V_{sc}\sigma_3+V_{el}I_2$ is unitarily equivalent to an operator in normal form using the gauge transformation \[R_{2\phi}=\begin{pmatrix}                                                                                                        \cos(2\phi)&\sin(2\phi)\\-\sin(2\phi)&\cos(2\phi)                                                                                                                      \end{pmatrix}\text{ where }\phi(x)=\int_{-1/2}^xV_{el}(t)\d t;\]
we have thus $D=R_{2\phi}^{-1}\tilde{D}R_{2\phi}$, with
\[\tilde{D}=D_0+\tilde{V}_{am}\sigma_1+\tilde{V}_{sc}\sigma_3,\]
where \[\begin{pmatrix}\tilde{V}_{am}\\\tilde{V}_{sc}\end{pmatrix}=R_{2\phi}\begin{pmatrix} V_{am}\\V_{sc}                                                                           \end{pmatrix}.\]

We see that if $Du=zu$ for some $z\in\C$, then $\tilde{D}R_{2\phi}u=zR_{2\phi}u$. Moreover, for all $x\in\R$, $|u(x)|=|(R_{2_\phi}u)(x)|$. In particular, if $u$ is square-integrable near $+\infty$, $R_{2\phi}u$ has the same property.
These two facts imply that, if $u(\cdot,z)$ is the Weyl solution associated with $D$ at the point $-1/2$, then $R_{2\phi}(\cdot)u(\cdot,z)$ is the Weyl solution associated with $\tilde{D}$ at the same point. But, since we have chosen the transformation $R_{2\phi}$ such that $R_{2\phi}(-1/2)$ is the identity matrix, we have that the Weyl-Titchmarsh functions of the two operators at this point are equal. 

Now, let us consider the family of the potentials $V_\lambda=V_{per}+\lambda f$, where $V_{per}$ satisfies the hypotheses for the periodic potential, $f$ satisfies one of the conditions for the elementary random potential and $\lambda$ is in a subset $\Xi$ of $\R$. We want to prove that, if  Weyl-Titchmarsh functions  of the Dirac operator $D_{V_\lambda}$ coincides with the one of $D_{V_{per}}$ on $\C^+$, then $\lambda=0$. Let us assume that we are in this case for some $\lambda$. 
Then, the associated operators in normal form $\tilde{D}_{V_1}$ and $\tilde{D}_{V_2}$ satisfy the same property and, by the local uniqueness theorem,
\begin{equation}\label{eq:WT}R_{2\phi_0}\begin{pmatrix} v_{per}^{am}\\v_{per}^{sc}\end{pmatrix}=R_{2\phi_\lambda}\begin{pmatrix} V_{am,\lambda}\\V_{sc,\lambda}\end{pmatrix},\end{equation} where, $\phi_\lambda$ is an antiderivative of $V_{el,\lambda}$ ($v_{per}^{el}$ if $\lambda=0$) with $\phi_\lambda(-1/2)=0$. 

In our case (1), we have to deal with the case where $V_{el,\lambda}=v_{per}^{el}$ for all $\lambda\in\Xi$, which implies $\phi_\lambda=\phi_0$. As a consequence, in this case, \eqref{eq:WT} implies that $V_{am,\lambda}=v_{per}^{am}$ and $V_{sc,\lambda}=v_{per}^{sc}$, and thus $\lambda=0$.

 Case (2) corresponds to $\begin{pmatrix} V_{am,\lambda}\\V_{sc,\lambda}\end{pmatrix}=\begin{pmatrix} v_{per}^{am}\\v_{per}^{sc}\end{pmatrix}$. Thus, for all $x$ in the support of this vector, we have $\phi_\lambda(x)=\phi_0(x)+k\pi$, for some $\pi\in\Z$ and, by taking the derivative, $v_{per}^{el}(x)\sigma_3+\lambda f(x)=v_{per}^{el}(x)\sigma_3$. Since we have assumed that there exists $x\in\supp\begin{pmatrix} v_{per}^{am}\\v_{per}^{sc}\end{pmatrix}$ such that $f(x)\neq 0$, then $\lambda=0$.
\section{Density of states with Dirichlet boundary conditions}\label{app:B}
In section~\ref{sec:5}, we define the density of states as a linear form on bounded measurable functions, using operators restricted to boxes with Dirichlet boundary conditions. In~\cite{Zdos}, we had defined the density of states by\[\tilde{\nu}(\phi)=\lim_{L\to\infty}\frac{1}{L}\tr\left(\chi_L\phi(H_\omega)\chi_L\right),\]
where $\chi_L$ is the characteristic function of $[-L/2,L/2]$.

We recall that we have defined the operator $H_{\omega,L}$ as the restriction of $H_\omega$ to $(-L/2,L/2)$ with Dirichlet boundary conditions.
We have the following result, which gives both the convergence and the equivalenc of the two definitions.
\begin{prop}
 For all $\phi$, we have almost surely 
 \[\lim_{L\to\infty}\frac1L\tr(\phi(H_{\omega,L}))=\tilde{\nu}(\phi).\]
\end{prop}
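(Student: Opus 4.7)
The strategy is to realize the difference between the two candidate definitions of the density of states as a boundary term whose trace stays $O(1)$ as $L\to\infty$, so that division by $L$ kills it. I would introduce the intermediate operator $H_\omega^{(L)}$ on $L^2(\R,\C^2)$ obtained from $H_\omega$ by imposing the Dirichlet condition $\psi^\uparrow(\pm L/2)=0$ at both cut points; this decouples $L^2(\R,\C^2)$ into three invariant summands indexed by the intervals $(-\infty,-L/2)$, $(-L/2,L/2)$, $(L/2,+\infty)$, the middle summand being exactly $H_{\omega,L}$. Consequently
\[\tr(\phi(H_{\omega,L}))=\tr\bigl(\chi_L\phi(H_\omega^{(L)})\chi_L\bigr),\]
and since $\frac{1}{L}\tr(\chi_L\phi(H_\omega)\chi_L)\to\tilde\nu(\phi)$ almost surely by~\cite{Zdos}, it suffices to prove
\[\frac{1}{L}\tr\bigl(\chi_L[\phi(H_\omega^{(L)})-\phi(H_\omega)]\chi_L\bigr)\xrightarrow[L\to\infty]{}0.\]

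First I would treat the resolvent functions $\phi_z(t)=(t-z)^{-1}$ for $z\in\C\setminus\R$. Since $H_\omega$ and $H_\omega^{(L)}$ are two self-adjoint extensions of the common symmetric operator whose domain consists of $H^1$ functions on $\R$ whose up-component vanishes at $\pm L/2$, Krein's resolvent formula yields that $(H_\omega^{(L)}-z)^{-1}-(H_\omega-z)^{-1}$ has finite rank (bounded by twice the number of cut points, hence by $4$) with trace norm controlled by a constant depending only on $\Im z$, independently of $L$. Compressing by $\chi_L$ on both sides cannot raise the rank, so the resolvent-difference trace is $O(1)$, and division by $L$ produces the desired vanishing. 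By linearity this settles the case of finite linear combinations of resolvents.

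The extension to an arbitrary bounded Borel $\phi$ proceeds via the Helffer-Sjöstrand functional calculus. Every $\phi\in C_c^\infty(\R)$ admits an almost-analytic extension $\tilde\phi$ such that
\[\phi(H)=\frac{1}{\pi}\int_\C\overline{\partial}\tilde\phi(z)(H-z)^{-1}\,\d\Re z\,\d\Im z,\]
and $\overline{\partial}\tilde\phi$ vanishes to arbitrary order on the real axis, so integrating the uniform resolvent bound produces $\tr(\chi_L[\phi(H_\omega^{(L)})-\phi(H_\omega)]\chi_L)=O(1)$, uniformly in $L$. From $C_c^\infty(\R)$ I would extend first to $C_b(\R)$ by invoking weak convergence of the two candidate density-of-states sequences on a determining class of test functions, using their tightness (inherited from the boundedness of $V_{per}+V_\omega$), and then to bounded Borel $\phi$ by the usual monotone class / dominated convergence reasoning. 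The main obstacle I anticipate is the first technical step, namely writing the Krein correction explicitly for the Dirac operator with the non-standard Dirichlet condition $\psi^\uparrow=0$ and showing that its trace norm is genuinely $L$-independent; once this is secured, everything else is classical.
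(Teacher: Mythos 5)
Your proposal takes a genuinely different route from the paper. The paper's own argument (only sketched; it defers to Section~4 of~\cite{Zdos}) is computational: it hinges on the Hilbert--Schmidt estimate of Lemma~\ref{BS}, which diagonalizes the free Dirichlet Dirac operator $D_{0,L}^{(D)}$ (eigenvalues $n\pi/L$, explicit sine/cosine eigenvectors) and bounds $\|f(x)\,g(D_{0,L}^{(D)})\|_2 \le C L^{-1/2}\|f\|_2\bigl(\sum_n |g(n\pi/L)|^2\bigr)^{1/2}$; this Birman--Schwinger-type bound is what controls the error between the boxed operator and the compressed full-line operator. You instead insert a decoupled operator $H_\omega^{(L)}$ (Dirichlet cuts at $\pm L/2$ on the full line), observe $\tr\phi(H_{\omega,L})=\tr\bigl(\chi_L\phi(H_\omega^{(L)})\chi_L\bigr)$, and reduce everything to the fact that $(H_\omega^{(L)}-z)^{-1}-(H_\omega-z)^{-1}$ has rank at most four with $L$-independent trace norm, then propagate via Helffer--Sj\"ostrand. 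This is structurally cleaner and more robust: the finite rank comes from the codimension of the common symmetric restriction and does not require diagonalizing anything, and $O(1)/L\to 0$ is immediate. One can verify the rank bound directly without invoking Krein's formula abstractly: for $\phi\in L^2$, the difference $u-v$ of the two resolvents applied to $\phi$ is an $L^2$ solution of the homogeneous equation on each of the three intervals, so it lives in a space of complex dimension at most $1+2+1=4$.

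One caveat: your extension from $C_c^\infty$ to general test functions is phrased in terms of ``tightness'' and convergence on $C_b(\R)$, which is the wrong framework here. The density of states of a Dirac operator is a locally finite (Radon) measure with infinite total mass, so tightness and weak convergence of probability measures do not apply; the correct statement is vague convergence against $C_c(\R)$, and the relevant uniform control is that the \emph{difference} of the two approximating measures, not each separately, is $O(1/L)$ on compacts. Once you replace $C_b$ by $C_c$ and argue via vague convergence plus the uniform Helffer--Sj\"ostrand bound on compactly supported $\phi$, the step goes through, and passage to compactly supported Borel $\phi$ is then the usual regularization argument. With that adjustment, your scheme is a valid alternative to the paper's Hilbert--Schmidt computation.
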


%
The proof of this proposition is very similar to the one of Section~4 \cite{Zdos}, where we use periodic boundary conditions. Henceforth, we will not give it entirely, but we give only the following lemma, which corresponds to Lemma~3 of~\cite{Zdos}. We consider here operators restricted to $(0,L)$ since the calculations are easier in this case, but the result is obviously the same for operators restricted to $(-L/2,L/2)$.
\begin{lemma}
\label{BS}
  Let $L>0$ and $D_{0,L}^{(D)}$ be the Dirac operator restricted to $(0,L)$ with Dirichlet boundary condition. Let $H=f(x)g(D_{0,L}^{(D)})$ on $L^2(0,L)$ with $f\in L^2(0,L)$ and $\sum_{n\in\Z}|g(\frac{n\pi}{L})|^2<\infty$ . Then, $H$ is Hilbert-Schmidt and 
  \begin{equation}
   \|H\|_2\leq \frac{C}{\sqrt{L}}\|f\|_2\left(\sum_{n\in\Z}|g(\frac{n\pi}{L})|^2\right)^{1/2}.
  \end{equation}
\end{lemma}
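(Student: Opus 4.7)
My plan is to compute the Hilbert--Schmidt norm of $H$ directly by diagonalizing the free Dirac operator with Dirichlet boundary conditions and exploiting the fact that the pointwise norm of each normalized eigenfunction is constant equal to $1/\sqrt{L}$.

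First I would write down the spectral resolution of $D_{0,L}^{(D)}$ on $L^2((0,L),\C^2)$. Solving $D_0 u = \lambda u$ with the Dirichlet condition $u^\uparrow(0)=u^\uparrow(L)=0$ reduces to $(u^\uparrow)'' = -\lambda^2 u^\uparrow$, which forces $\lambda = n\pi/L$ for $n\in\Z$, with associated normalized eigenfunctions
\[
\psi_n(x) = \frac{1}{\sqrt{L}}\begin{pmatrix}\sin(n\pi x/L)\\ \cos(n\pi x/L)\end{pmatrix} \quad (n\neq 0),\qquad \psi_0(x)=\frac{1}{\sqrt{L}}\begin{pmatrix}0\\1\end{pmatrix}.
\]
A direct check shows $\{\psi_n\}_{n\in\Z}$ is orthonormal (orthogonality follows from $\int_0^L\cos((n-m)\pi x/L)\,dx=0$ for $n\neq m$) and complete.

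The key observation is that $|\psi_n(x)|_{\C^2}^2 = 1/L$ for every $x\in(0,L)$ and every $n\in\Z$. Now I would use the trace formula for the Hilbert--Schmidt norm, combined with the functional calculus $g(D_{0,L}^{(D)})\psi_n = g(n\pi/L)\psi_n$, to write
\begin{align*}
 \|H\|_2^2 &= \tr\bigl(g(D_{0,L}^{(D)})^* |f|^2 g(D_{0,L}^{(D)})\bigr)\\
 &= \sum_{n\in\Z}\bigl|g(n\pi/L)\bigr|^2 \int_0^L |f(x)|^2 |\psi_n(x)|^2\,dx\\
 &= \frac{\|f\|_2^2}{L}\sum_{n\in\Z}\bigl|g(n\pi/L)\bigr|^2,
\end{align*}
which yields the desired bound with constant $C=1$.

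There is no real obstacle here. The only small point to be careful about is the $n=0$ case (the eigenfunction structure is slightly different, but the pointwise norm identity still holds), and the fact that $f$ is a scalar function acting as $f(x)I_2$ so that it commutes with the $\C^2$-structure and the computation $\int_0^L |f(x)|^2 |\psi_n(x)|^2\,dx = \|f\|_2^2/L$ is justified. The summability assumption $\sum_n|g(n\pi/L)|^2<\infty$ is precisely what is needed to make the right-hand side finite, so it also certifies that $H$ is Hilbert--Schmidt.
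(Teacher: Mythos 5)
Your proof is correct and takes a genuinely more direct route than the paper's. Both start by diagonalizing $D_{0,L}^{(D)}$, with eigenvalues $n\pi/L$ and normalized eigenvectors $\psi_n(x)=\tfrac{1}{\sqrt L}\bigl(\sin(n\pi x/L),\cos(n\pi x/L)\bigr)^t$. The paper then writes out the matrix-valued integral kernel of $H$, bounds its $L^2((0,L)^2)$ norm by trigonometric manipulations, and invokes the criterion that the Hilbert--Schmidt norm equals the $L^2$ norm of the kernel. You instead evaluate $\|H\|_2^2=\sum_n\|H\psi_n\|^2$ directly in the eigenbasis and exploit the observation, special to the Dirac operator, that $|\psi_n(x)|^2_{\C^2}=\sin^2(n\pi x/L)+\cos^2(n\pi x/L)=1/L$ is constant in $x$. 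This bypasses the kernel representation entirely, yields the exact identity $\|H\|_2^2=\tfrac{\|f\|_2^2}{L}\sum_n|g(n\pi/L)|^2$ (so the lemma holds with $C=1$), and is arguably cleaner: it sidesteps the question of the off-diagonal entries of the kernel $\sum_k g(E_k)\psi_k(x)\psi_k(t)^*$, which do not vanish when $g$ is not even, whereas the paper's displayed kernel is written as if it were diagonal. The only housekeeping your argument needs, and which you address, is that $\{\psi_n\}_{n\in\Z}$ is a complete orthonormal system in $L^2((0,L),\C^2)$, which follows by separating into the $\pm n$ combinations and recognizing the sine and cosine bases in the two components.
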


\begin{proof}
 The first step is to diagonalize the operator $D_{0,L}^{(D)}$. With a standard calculation, we find that the eigenvalues are the $E_k=\frac{k\pi}{L}$ for $k\in\Z$ and the associated (normalized) eigenvectors are the $\frac{1}{\sqrt{L}}\begin{pmatrix}
 \sin(E_k x)\\
 \cos(E_k x) \end{pmatrix}$. As a consequence, the operator $H$ has a matrix-valued integral kernel given by
 \[K(x,t)=\frac1L\begin{pmatrix}
    f(x)\sum_{k\in\Z}g(E_k)\sin(E_kt)\sin(E_kx)&0\\0& f(x)\sum_{k\in\Z}g(E_k)\cos(E_kt)\cos(E_kx)
   \end{pmatrix}\]
   
   According to Theorem~2.11 of \cite{simon}, the operator $H$ is Hilbert-Schmidt if and only if its integral kernel is $L^2$ and the Hilbert-Schmidt norm of the operator is equal to the $L^2$ norm of the integral kernel.
   By standard trigonometric manipulations, we are left to bound the $L^2$ norm of \[f(x)\sum_{k\in\Z}g(E_k)\cos(E_k(t\pm x)).\]
This norm is clearly equal to
\[\|f\|_2\|\sum_{k\in\Z}g(E_k)\cos(E_k(\cdot))\|_2.\]
Evaluating the second factor gives the result.
\end{proof}
\bibliographystyle{plainurl}
\bibliography{biblio}

\begin{thebibliography}{10}

\bibitem{Anderson}
P.~Anderson.
\newblock Absence of diffusion in certain random lattices.
\newblock {\em Phys. Rev.}, 109:1492, 1958.
\newblock \href {https://doi.org/10.1103/PhysRev.109.1492}
  {\path{doi:10.1103/PhysRev.109.1492}}.

\bibitem{AS}
J.~Avron and B.~Simon.
\newblock Almost periodic {S}chrödinger operators, {II}. {T}he integrated
  density of states.
\newblock {\em Duke Math. J.}, 50:369--391, 1983.
\newblock \href {https://doi.org/10.1215/S0012-7094-83-05016-0}
  {\path{doi:10.1215/S0012-7094-83-05016-0}}.

\bibitem{BCZ}
J.-M. Barbaroux, H.D. Cornean, and S.~Zalczer.
\newblock Localization for {G}apped {D}irac {H}amiltonians with {R}andom
  {P}erturbations: {A}pplication to {G}raphene {A}ntidot {L}attices.
\newblock {\em Doc. Math.}, 24:65--93, 2019.
\newblock \href {https://doi.org/10.25537/dm.2019v24.65-93}
  {\path{doi:10.25537/dm.2019v24.65-93}}.

\bibitem{BL}
P.~Bougerol and J.~Lacroix.
\newblock {\em Products of random matrices with applications to {S}chrödinger
  operators}.
\newblock Birkhäuser, Boston–Stuttgart, 1985.

\bibitem{Boumazabis}
H.~Boumaza.
\newblock Hölder continuity of the integrated density of states for
  matrix-valued {A}nderson models.
\newblock {\em Rev. Math. Phys.}, 20:873--900, 2008.
\newblock \href {https://doi.org/10.1142/S0129055X08003456}
  {\path{doi:10.1142/S0129055X08003456}}.

\bibitem{polyHB}
H.~Boumaza.
\newblock Opérateurs aléatoires et modèle d’{A}nderson en dimension 1.
\newblock lecture notes, 2018-2019.
\newblock URL:
  \url{https://www.math.univ-paris13.fr/~boumaza/cours/2018-2019-M2-OpAlea.pdf}.

\bibitem{BC}
P.~Briet and H.D. Cornean.
\newblock Locating the spectrum for magnetic {S}chr\"odinger and {D}irac
  operators.
\newblock {\em Comm. Partial Differential Equations}, 27(5-6):1079--1101, 2002.
\newblock \href {https://doi.org/10.1081/PDE-120004894}
  {\path{doi:10.1081/PDE-120004894}}.

\bibitem{BEKT}
R.~Brunnhuber, J.~Eckhardt, A.~Kostenko, and G.~Teschl.
\newblock Singular {W}eyl-{T}itchmarsh-{K}odaira theory for one-dimensional
  {D}irac operators.
\newblock {\em Monatsh. Math.}, 174:515--547, 2014.
\newblock \href {https://doi.org/10.1007/s00605-013-0563-5}
  {\path{doi:10.1007/s00605-013-0563-5}}.

\bibitem{CKM}
R.~Carmona, A.~Klein, and F.~Martinelli.
\newblock Anderson localization for {B}ernoulli and other singular potentials.
\newblock {\em Comm. Math. Phys.}, 108:41--66, 1987.
\newblock \href {https://doi.org/10.1007/BF01210702}
  {\path{doi:10.1007/BF01210702}}.

\bibitem{CL}
R.~Carmona and J.~Lacroix.
\newblock {\em Spectral theory of random {S}chrödinger operators}.
\newblock Birkhäuser, Basel–Berlin, 1990.

\bibitem{CGPNGG}
A.H. Castro~Neto, F.~Guinea, N.M.R. Peres, K.S. Novoselov, and A.K. Geim.
\newblock The electronic properties of graphene.
\newblock {\em Rev. Mod. Phys.}, 81:109--162, 2009.
\newblock \href {https://doi.org/10.1103/RevModPhys.81.109}
  {\path{doi:10.1103/RevModPhys.81.109}}.

\bibitem{CG}
S.~Clark and F.~Gesztesy.
\newblock {W}eyl-{T}itchmarsh $m$-function asymptotics, local uniqueness
  results, trace formulas, and {B}org-type theorems for {D}irac operators.
\newblock {\em Trans. Amer. Math. Soc.}, 354:3475--3534, 2002.
\newblock \href {https://doi.org/10.1090/S0002-9947-02-03025-8}
  {\path{doi:10.1090/S0002-9947-02-03025-8}}.

\bibitem{DSS}
D.~Damanik, R.~Sims, and G.~Stolz.
\newblock Localization for one dimensional, continuum, {B}ernoulli-{A}nderson
  models.
\newblock {\em Duke Math. J.}, 114:59--100, 2002.
\newblock \href {https://doi.org/10.1215/S0012-7094-02-11414-8}
  {\path{doi:10.1215/S0012-7094-02-11414-8}}.

\bibitem{PdO05}
C.~R. de~Oliveira and R.~A. Prado.
\newblock Spectral and localization properties for the one-dimensional
  {B}ernoulli discrete {D}irac operator.
\newblock {\em J. Math. Phys.}, 46(7):072105, 2005.
\newblock \href {https://doi.org/10.1063/1.1948328}
  {\path{doi:10.1063/1.1948328}}.

\bibitem{Eastham}
M.S.P. Eastham.
\newblock {\em The Spectral Theory of Periodic Differential Equations}.
\newblock Scottish Academic Press, Edinburgh–London, 1973.

\bibitem{GKbootstrap}
F.~Germinet and A.~Klein.
\newblock Bootstrap multiscale analysis and localization in random media.
\newblock {\em Comm. Math. Phys.}, 222:415--448, 2001.
\newblock \href {https://doi.org/10.1007/s002200100518}
  {\path{doi:10.1007/s002200100518}}.

\bibitem{GMP77}
I.~Goldsheid, S.~Molchanov, and L.A. Pastur.
\newblock Pure point spectrum of stochastic one dimensional {S}chr\"odinger
  operators.
\newblock {\em Funct. Anal. App.}, 11:1--8, 1977.
\newblock \href {https://doi.org/10.1007/BF01135526}
  {\path{doi:10.1007/BF01135526}}.

\bibitem{Kirsch}
W.~Kirsch.
\newblock On a class of random {S}chrödinger operators.
\newblock {\em Adv. in Appl. Math.}, 6:177--187, 1985.
\newblock \href {https://doi.org/10.1016/0196-8858(85)90010-7}
  {\path{doi:10.1016/0196-8858(85)90010-7}}.

\bibitem{kirschbook}
W.~Kirsch.
\newblock An invitation to random {S}chr\"odinger operators.
\newblock In {\em Random {S}chr\"odinger Operators}, volume~25 of {\em
  Panoramas et Synth\`eses}. {S}oci\'et\'e {M}ath\'ematique de {F}rance, 2008.

\bibitem{klein}
A.~Klein.
\newblock Multiscale analysis and localization of random operators.
\newblock In {\em Random {S}chr\"odinger Operators}, volume~25 of {\em
  Panoramas et Synth\`eses}. {S}oci\'et\'e {M}ath\'ematique de {F}rance, 2008.

\bibitem{PdO07}
R.~A. Prado and C.~R. de~Oliveira.
\newblock Dynamical lower bounds for 1{D} {D}irac operators.
\newblock {\em Math. Z.}, 259:45--60, 2008.
\newblock \href {https://doi.org/10.1007/s00209-007-0210-8}
  {\path{doi:10.1007/s00209-007-0210-8}}.

\bibitem{PdOC}
R.~A. Prado, C.~R. de~Oliveira, and S.~L. Carvalho.
\newblock Dynamical localization for discrete {A}nderson {D}irac operators.
\newblock {\em J. Stat. Phys.}, 167:260--296, 2017.
\newblock \href {https://doi.org/10.1007/s10955-017-1746-6}
  {\path{doi:10.1007/s10955-017-1746-6}}.

\bibitem{SSB}
C.~Sadel and H.~Schulz-Baldes.
\newblock Random {D}irac operators with time reversal symmetry.
\newblock {\em Comm. Math. Phys.}, 295:209--242, 2010.
\newblock \href {https://doi.org/10.1007/s00220-009-0956-4}
  {\path{doi:10.1007/s00220-009-0956-4}}.

\bibitem{simon}
B.~Simon.
\newblock {\em Trace Ideals and Their Applications}.
\newblock Mathematical Surveys and Monographs. American Mathematical Society,
  2010.

\bibitem{ST}
B.~Simon and M.~Taylor.
\newblock Harmonic analysis on ${SL}(2, \mathbb{R})$ and smoothness of the
  density of states in the one-dimensional {A}nderson model.
\newblock {\em Comm. Math. Phys.}, 101:1--19, 1985.
\newblock \href {https://doi.org/cmp/1104114064} {\path{doi:cmp/1104114064}}.

\bibitem{thaller}
B.~Thaller.
\newblock {\em The {D}irac Equation}.
\newblock Springer-Verlag, Berlin, 1992.

\bibitem{Weidmann}
J.~Weidmann.
\newblock {\em Spectral Theory of Ordinary Differential Operators}, volume 1258
  of {\em Lecture Notes in Mathematics}.
\newblock Springer-Verlag Berlin Heidelberg, 1987.

\bibitem{Zdos}
S.~Zalczer.
\newblock Regularity of the density of states for random {D}irac operators.
\newblock 2020.
\newblock URL: \url{https://arxiv.org/abs/2010.16377}.

\end{thebibliography}
\end{document}